\setlist{noitemsep}
\setlist[enumerate,1]{label=\arabic*., ref=(\arabic*)}
\definecolor{dark-red}{rgb}{0.4,0.15,0.15}
\definecolor{dark-blue}{rgb}{0.15,0.15,0.4}
\definecolor{medium-blue}{rgb}{0,0,0.5}
\definecolor{gray}{rgb}{0.5,0.5,0.5}
\newcommand{\yes}[0]{\textsc{yes}\xspace}
\newcommand{\no}[0]{\textsc{no}\xspace}
\newcommand{\true}[0]{\textsc{true}\xspace}
\newcommand{\false}[0]{\textsc{false}\xspace}
\newcommand{\containment}[0]{NP~$\subseteq$ coNP$/$poly\xspace}
\newcommand{\F}[0]{\ensuremath{\mathcal{F}}\xspace}
\renewcommand{\P}[0]{\ensuremath{\mathcal{P}}\xspace}
\newcommand{\Q}[0]{\ensuremath{\mathcal{Q}}\xspace}
\newcommand{\G}[0]{\ensuremath{\mathcal{G}}\xspace}
\newcommand{\X}[0]{\ensuremath{\mathcal{X}}\xspace}
\renewcommand{\P}[0]{\ensuremath{\mathcal{P}}\xspace}
\let\plainqed\qedsymbol
\newcommand{\claimqed}{$\lrcorner$}
\newenvironment{claimproof}{\begin{proof}\renewcommand{\qedsymbol}{\claimqed}}{\end{proof}\renewcommand{\qedsymbol}{\plainqed}}
\newcommand{\torso}[0]{\mathop{\mathrm{\textsc{torso}}}}
\newcommand{\Oh}[0]{\ensuremath{\mathcal{O}}\xspace}
\newlength{\baseImageHeight}
\newcommand{\hyphen}{\nobreakdash-\hspace{0pt}}
\newcommand{\nExactSetCover}[0]{\textsc{$n$\hyphen Exact Set Cover}\xspace}
\newcommand{\kPath}[0]{\textsc{$k$\hyphen Path}\xspace}
\newcommand{\kMulticoloredPath}[0]{\textsc{Multicolored $k$\hyphen Path}\xspace}
\newcommand{\kCycle}[0]{\textsc{$k$\hyphen Cycle}\xspace}
\newcommand{\kLeafOutTree}[0]{\textsc{$k$\hyphen Leaf Out-Tree}\xspace}
\newtheorem{observation}{Observation}
\newtheorem{numberedclaim}{Claim}
\newtheorem{theorem}{Theorem}
\newtheorem*{untheorem}{Theorem}
\newtheorem{proposition}{Proposition}
\newtheorem{corollary}{Corollary}
\newtheorem{lemma}{Lemma}
\theoremstyle{definition}
\newtheorem{definition}{Definition}
\date{}
\begin{document}
\title{Turing Kernelization for Finding Long Paths \\ and Cycles in Restricted Graph Classes\thanks{
This work was supported by the Netherlands Organization for Scientific Research (NWO) Veni grant ``Frontiers in Parameterized Preprocessing'' and Gravitation grant ``Networks''. An extended abstract of this work appeared at the 22nd European Symposium on Algorithms (ESA 2014). The present paper contains a streamlined presentation of the full proofs.}}
\author{Bart M.\ P.\ Jansen\\Eindhoven University of Technology, The Netherlands\\\texttt{B.M.P.Jansen@tue.nl}
}

\maketitle

\begin{abstract}
The NP-complete \kPath problem asks whether a given undirected graph has a (simple) path of length at least~$k$. We prove that \kPath has polynomial-size Turing kernels when restricted to planar graphs, graphs of bounded degree, claw-free graphs, or to~$K_{3,t}$-minor-free graphs for some constant~$t$. This means that there is an algorithm that, given a \kPath instance~$(G,k)$ belonging to one of these graph classes, computes its answer in polynomial time when given access to an oracle that solves \kPath instances of size polynomial in~$k$ in a single step. The difficulty of \kPath can therefore be confined to subinstances whose size is independent of the total input size, but is bounded by a polynomial in the parameter~$k$ alone. These results contrast existing superpolynomial lower bounds for the sizes of traditional kernels for the \kPath problem on these graph classes: there is no polynomial-time algorithm that reduces any instance~$(G,k)$ to a single, equivalent instance~$(G',k')$ of size polynomial in~$k$ unless \containment. The same positive and negative results apply to the \kCycle problem, which asks for the existence of a cycle of length at least~$k$. Our kernelization schemes are based on a new methodology called \emph{Decompose-Query-Reduce}.
\end{abstract}

\section{Introduction}
\paragraph{Motivation} Kernelization is a formalization of efficient and provably effective data reduction originating from parameterized complexity theory. In this setting, each instance~$x \in \Sigma^*$ of a decision problem is associated with a parameter~$k \in \mathbb{N}$ that measures some aspect of its complexity. Work on kernelization over the last few years has resulted in deep insights into the possibility of reducing an instance~$(x,k)$ of a parameterized problem to an equivalent instance~$(x',k')$ of size polynomial in~$k$, in polynomial time. By now, many results are known concerning problems that admit such \emph{polynomial kernelization algorithms}, versus problems for which the existence of a polynomial kernel is unlikely because it implies the complexity-theoretic collapse \containment. (See Section~\ref{section:preliminaries} for formal definitions of parameterized complexity.)

In this work we study the possibility of effectively preprocessing instances of the problems of finding long paths or cycles in a graph. In the model of (many-one) kernelization described above, in which the output of the preprocessing algorithm is a single, small instance, we cannot guarantee effective polynomial-time preprocessing for these problems. Indeed, the \kPath and \kCycle problems are \emph{or-compositional}~\cite{BodlaenderDFH09} since the disjoint union of graphs~$G_1, \ldots, G_t$ contains a path (cycle) of length~$k$ if and only if there is at least one input graph with such a structure. Using the framework of Bodlaender et al.~\cite{BodlaenderDFH09} this proves that the problems do not admit kernelizations of polynomial size unless \containment and the polynomial hierarchy collapses to its third level~\cite{Yap83}.

More than five years ago~\cite{BodlaenderDFGHLMRRR08}, the question was raised how fragile this \emph{bad news} is: what happens if we relax the requirement that the preprocessing algorithm outputs a single instance? Does a polynomial-time preprocessing algorithm exist that, given an instance~$(G,k)$ of \kPath, builds a list of instances~$(x_1, k_1), \ldots, \linebreak[1] (x_t, k_t)$, each of size polynomial in~$k$, such that~$G$ has a length-$k$ path if and only if there is at least one \yes-instance on the output list? Such a \emph{cheating kernelization} is possible for the \kLeafOutTree problem~\cite{Binkele-RaibleFFLSV12} while that problem does not admit a polynomial kernelization unless \containment. Hence it is natural to ask whether this can be done for \kPath or \kCycle.

A robust definition of such relaxed forms of preprocessing was given by Lokshtanov~\cite{Lokshtanov09} under the name \emph{Turing kernelization}. It is phrased in terms of algorithms that can query an oracle for the answer to small instances of a specific decision problem in a single computation step.\footnote{Formally, such algorithms are \emph{oracle Turing machines} (cf.~\cite[Appendix A.1]{FlumG06}).} Observe that the existence of an $f(k)$-size kernel for a parameterized problem~$\Q$ shows that~$\Q$ can be solved in polynomial time if we allow the algorithm to make a single size-$f(k)$ query to an oracle for~$\Q$: apply the kernelization to input~$(x,k)$ to obtain an equivalent instance~$(x',k')$ of size~$f(k)$, query the $\Q$-oracle for this instance and output its answer. A natural relaxation, which encompasses the \emph{cheating kernelization} mentioned above, is to allow the polynomial-time algorithm to query the oracle more than once for the answers to $f(k)$-size instances. This motivates the definition of Turing kernelization.

\begin{definition}
Let~$\Q$ be a parameterized problem and let~$f \colon \mathbb{N} \to \mathbb{N}$ be a computable function. A \emph{Turing kernelization for~$\Q$ of size~$f$} is an algorithm that decides whether a given instance~$(x,k) \in \Sigma^* \times \mathbb{N}$ is contained in~$\Q$ in time polynomial in~$|x| + k$, when given access to an oracle that decides membership in~$\Q$ for any instance~$(x',k')$ with~$|x'|, k' \leq f(k)$ in a single step.
\end{definition}

For practical purposes the role of oracle is fulfilled by an external computing cluster that computes the answers to the queries. A Turing kernelization gives the means of efficiently splitting the work on a large input into manageable chunks, which may be solvable in parallel depending on the nature of the Turing kernelization. Moreover, Turing kernelization is a natural relaxation of many-one kernelization that facilitates a theoretical analysis of the nature of preprocessing.

At first glance, it seems significantly easier to develop a Turing kernelization than a many-one kernelization. However, to this date there are only a handful of parameterized problems known for which polynomial-size Turing kernelization is possible but polynomial-size many-one kernelization is unlikely~\cite{AmbalathBHKMPR10,ThomasseTV14,SchaferKMN12,BodlaenderJK14}. Recently, the first \emph{adaptive}\footnote{The algorithm is adaptive because it uses the answers to earlier oracle queries to formulate its next query. In contrast, the cheating kernelization for \kLeafOutTree constructs all its queries without having to know a single answer.} Turing kernelization was given by Thomass\'{e} et al.~\cite{ThomasseTV14} for the \textsc{$k$-Independent Set} problem restricted to bull-free graphs. Although this forms an interesting step forwards in harnessing the power of Turing kernelization, the existence of polynomial-size Turing kernels for \kPath and related subgraph-containment problems remains wide open~\cite{BodlaenderDFGHLMRRR08,Binkele-RaibleFFLSV12,HermelinKSWW15}. Since many graph problems that are intractable in general admit polynomial-size (many-one) kernels when restricted to planar graphs, it is natural to consider whether planarity makes it easier to obtain polynomial Turing kernels for \kPath. This was raised as an open problem by several authors~\cite{Lokshtanov09,MisraRS11}. Observe that, as the disjoint-union argument remains valid even for planar graphs and \kPath is NP-complete in planar graphs, we do not expect polynomial-size many-one kernels for planar \kPath.

\paragraph{Our results} In this paper we introduce the \emph{Decompose-Query-Reduce} framework for obtaining adaptive polynomial-size Turing kernelizations for the \kPath and \kCycle problems on various restricted graph families, including planar graphs and bounded-degree graphs. The three steps of the framework consist of (i) decomposing the input~$(G,k)$ into parts of size~$k^{\Oh(1)}$ with constant-size interfaces between the various parts; (ii) querying the oracle to determine how a solution can intersect such bounded-size parts, and (iii) reducing to an equivalent but smaller instance using this information. In our case, we use a classic result by Tutte~\cite{Tutte66} concerning the decomposition of a graph into its triconnected components, made algorithmic by Hopcroft and Tarjan~\cite{HopcroftT73}, to find a tree decomposition of adhesion two of the input graph~$G$ such that all torsos of the decomposition are triconnected topological minors of~$G$. We complement this with various known graph-theoretic lower bounds on the circumference of triconnected graphs belonging to restricted graph families to deduce that if this Tutte decomposition has a bag of size~$\Omega(k^{\Oh(1)})$, then there must be a cycle (and therefore path) of length at least~$k$ in~$G$. If we have not already found the answer to the problem we may therefore assume that all bags of the decomposition have polynomial size. Consequently we may query the oracle for solutions involving only~$k^{\Oh(1)}$ parts of the decomposition. We use structural insights into the behavior of paths and cycles with respect to bounded-size separators to reduce the number of bags that are relevant to a query to~$k^{\Oh(1)}$. Here we use ideas from earlier work on kernel bounds for structural parameterizations of path problems~\cite{BodlaenderJK13b}. Together, these steps allow us to invoke the oracle to instances of size~$k^{\Oh(1)}$ to obtain the information that is needed to safely discard some pieces of the input, thereby shrinking it. Iterating this procedure, we arrive at a final equivalent instance of size~$k^{\Oh(1)}$, whose answer is queried from the oracle and given as the output of the Turing kernelization. In this way we obtain polynomial Turing kernels for \kPath and the related \kCycle problem (is there a cycle of length \emph{at least}~$k$) in planar graphs, graphs that exclude~$K_{3,t}$ as a minor for some~$t \geq 3$, graphs of maximum degree bounded by~$t \geq 3$, and claw-free graphs. We remark that the \kPath and \kCycle problems remain NP-complete in all these cases~\cite{LiCM00}. Our techniques can be adapted to construct a path or cycle of length at least~$k$, if one exists: for each of the mentioned graph classes~$\G$, there is an algorithm that, given a pair~$(G \in \G,k)$, either outputs a path (respectively cycle) of length at least~$k$ in~$G$, or reports that no such object exists. The algorithm runs in polynomial time when given constant-time access to an oracle that decides \kPath (respectively \kCycle) on~$\G$ for instances of size and parameter bounded by some polynomial in~$k$ that depends on~$\G$.

Our results raise a number of interesting challenges and shed some light on the possibility of polynomial Turing kernelization for the unrestricted \kPath problem. A completeness program for classifying Turing kernelization complexity was recently introduced by Hermelin et al.~\cite{HermelinKSWW15}. They proved that a \emph{colored} variant of the \kPath problem is complete for a class called WK[1] and conjectured that WK[1]-hard problems do not admit polynomial Turing kernels. We give evidence that the classification of the colored variant may be unrelated to the kernelization complexity of the base problem: \kMulticoloredPath remains WK[1]-hard on bounded-degree graphs, while our framework yields a polynomial Turing kernel for (uncolored) \kPath in this case.

\paragraph{Related work} Non-adaptive Turing kernels of polynomial size are known for \kLeafOutTree~\cite{Binkele-RaibleFFLSV12}, \textsc{$k$-Colorful Motif} on comb graphs~\cite{AmbalathBHKMPR10}, and \textsc{$s$-Club}~\cite{SchaferKMN12}. Thomass\'{e} et al.~\cite{ThomasseTV14} gave an adaptive Turing kernel of polynomial size for \textsc{$k$-Independent Set} on bull-free graphs.

\paragraph{Organization} In Section~\ref{section:preliminaries} we give preliminaries on parameterized complexity and graph theory. In Section~\ref{section:cycles} we present Turing kernels for the \kCycle problem. These are technically somewhat less involved than the analogues for \kPath that are described in Section~\ref{section:paths}. While the Turing kernels are phrased in terms of decision problems, we describe how to construct solutions in Section~\ref{section:constructing:solutions}. In Section~\ref{section:multicolored} we briefly consider \kMulticoloredPath.

\section{Preliminaries} \label{section:preliminaries}
\subsection{Parameterized complexity and kernels}
The set~$\{1, 2, \ldots, n\}$ is abbreviated as~$[n]$. For a set~$X$ and non-negative integer~$n$ we use~$\binom{X}{n}$ to denote the collection of size-$n$ subsets of~$X$. A parameterized problem~$\Q$ is a subset of~$\Sigma^* \times \mathbb{N}$, where~$\Sigma$ is a finite alphabet. The second component of a tuple~$(x,k) \in \Sigma^* \times \mathbb{N}$ is called the \emph{parameter}. A parameterized problem is (strongly uniformly) \emph{fixed-parameter tractable} if there exists an algorithm to decide whether $(x,k) \in \Q$ in time~$f(k)|x|^{\Oh(1)}$ where~$f$ is a computable function. 

\begin{definition} \label{def:manyonekernel}
Let~$f \colon \mathbb{N} \to \mathbb{N}$ be a function and~$\Q \subseteq \Sigma^* \times \mathbb{N}$ be a parameterized problem. A \emph{many-one kernelization algorithm} (or \emph{many-one kernel}) for~$\Q$ of size~$f$ is an algorithm that, on input~$(x,k) \in \Sigma^* \times \mathbb{N}$, runs in time polynomial in~$|x| + k$ and outputs an instance~$(x', k')$ such that:
\begin{enumerate}
	\item $|x'|, k' \leq f(k)$, and
	\item $(x,k) \in \Q \Leftrightarrow (x', k') \in \Q$.
\end{enumerate}
It is a \emph{polynomial kernel} if~$f$ is a polynomial (cf.~\cite{Bodlaender09}).
\end{definition}

When used without adjective, the term kernel should be interpreted as a traditional many-one kernel as in Definition~\ref{def:manyonekernel}. We refer to one of the textbooks~\cite{CyganFKLMPPS15,DowneyF13,FlumG06} for more background on parameterized complexity.

\subsection{Graphs}
All graphs we consider are finite, simple, and undirected. An undirected graph~$G$ consists of a vertex set~$V(G)$ and an edge set~$E(G) \subseteq \binom{V(G)}{2}$. We write~$G \subseteq H$ if graph~$G$ is a subgraph of graph~$H$. The subgraph of~$G$ induced by a set~$X \subseteq V(G)$ is denoted~$G[X]$. We use~$G - X$ as a shorthand for~$G[V(G) \setminus X]$. When deleting a single vertex~$v$, we write~$G-v$ rather than~$G - \{v\}$. The \emph{open neighborhood} of a vertex~$v$ in graph~$G$ is~$N_G(v)$. The open neighborhood of a set~$X \subseteq V(G)$ is~$\bigcup _{v \in X} N_G(v) \setminus X$. The \emph{degree} of vertex~$v$ in~$G$ is~$\deg_G(v) := |N_G(v)|$. 

Graph~$H$ is a \emph{minor} of~$G$ if~$H$ can be obtained from a subgraph of~$G$ by contracting edges. Graph~$H$ is a \emph{topological minor} of graph~$G$ if~$H$ can be obtained from a subgraph of~$G$ by repeatedly replacing a degree-2 vertex by a direct edge between its two neighbors. If~$H$ is a minor or a subgraph of~$G$, then~$H$ is also a topological minor of~$G$. Observe that the topological minor relation is transitive.

A vertex of degree at most one is a \emph{leaf}. A \emph{cut vertex} in a connected graph~$G$ is a vertex~$v$ such that~$G - v$ is disconnected. A pair of distinct vertices~$u,v$ is a \emph{separation pair} in a connected graph~$G$ if~$G - \{u,v\}$ is disconnected. A vertex (pair of vertices) is a cut vertex (separation pair) in a disconnected graph if it forms such a structure for a connected component. A graph~$G$ is \emph{biconnected} if it is connected and contains no cut vertices. The \emph{biconnected components} of~$G$ partition the edges of~$G$ into biconnected subgraphs of~$G$. A graph~$G$ is triconnected if removing less than three vertices from~$G$ cannot result in a disconnected graph.\footnote{Some authors require a triconnected graph to contain more than three vertices; the present definition allows us to omit some case distinctions.} A \emph{separation} of a graph~$G$ is a pair~$(A, B)$ of subsets of~$V(G)$ such that~$A \cup B = V(G)$ and~$G$ has no edges between~$A \setminus B$ and~$B \setminus A$. The latter implies that~$A \cap B$ separates the vertices~$A \setminus B$ from the vertices~$B \setminus A$. The \emph{order} of the separation is~$|A \cap B|$. A \emph{minimal separator} in a connected graph~$G$ is a vertex set~$S \subseteq V(G)$ such that~$G - S$ is disconnected and~$G - S'$ is connected for all~$S' \subsetneq S$. A vertex set of a disconnected graph is a minimal separator if it is a minimal separator for one of the connected components.

A \emph{walk} in~$G$ is a sequence of vertices~$v_1, \ldots, v_k$ such that~$\{v_i, v_{i+1}\} \in E(G)$ for~$i \in [k - 1]$. An $xy$-walk is a walk with~$v_1 = x$ and~$v_k = y$. A \emph{path} is a walk in which all vertices are distinct. Similarly, an $xy$-path is an $xy$-walk consisting of distinct vertices. The vertices $x$ and~$y$ are the \emph{endpoints} of an $xy$-path. An $x$-path is a path that has vertex~$x$ as an endpoint. The \emph{length} of a path~$v_1, \ldots, v_k$ is the number of edges on it:~$k-1$. The vertices~$v_2, \ldots, v_{k-1}$ are the \emph{interior vertices} of the path. A \emph{cycle} is a sequence of vertices~$v_1, \ldots, v_k$ that forms a $v_1v_k$-path such that, additionally, the edge~$\{v_1,v_k\}$ is contained in~$G$. The \emph{length} of a cycle is the number of edges on it:~$k$. For an integer~$k$, a $k$-cycle in a graph is a cycle with at least~$k$ edges; similarly a $k$-path is a path with at least~$k$ edges. Throughout the paper we reserve the identifier~$k$ for integers, while we use letters at the end of the alphabet for vertices. This ensures there will be no confusion between $k$-path (a path with at least~$k$ edges), and $x$-path (a path ending in vertex~$x$). The \emph{claw} is the complete bipartite graph~$K_{1,3}$ with partite sets of size one and three. A graph is \emph{claw-free} if it does not contain the claw as an induced subgraph.

\begin{observation} \label{observation:cycleminor:cyclesubgraph}
If a graph contains a cycle (path) of length at least~$k$ as a topological minor, then it contains a cycle (path) of length at least~$k$ as a subgraph.
\end{observation}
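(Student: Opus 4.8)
The plan is to unwind the definition of topological minor and observe that the pre-image of a cycle (respectively path) under the degree-$2$ suppression operation is again a cycle (path) of at least the same length. Suppose $G$ contains a cycle $C$ of length $\ell \geq k$ as a topological minor. By the definition of topological minor given above, there is a subgraph $G' \subseteq G$ together with a sequence $G' = G_0, G_1, \ldots, G_m = C$ in which each $G_{i+1}$ is obtained from $G_i$ by replacing some degree-$2$ vertex $v$ with a direct edge between its two neighbours $a,b$; since $G_i$ is simple, $a$ and $b$ are distinct and $a,b$ are non-adjacent in $G_i$ is not needed, only that $\{va,vb\}\subseteq E(G_i)$, $v\notin V(G_{i+1})$, and $E(G_{i+1}) = (E(G_i)\setminus\{va,vb\})\cup\{ab\}$. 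I will show by downward induction on $i$ that each $G_i$ contains a cycle of length at least $\ell$ as a subgraph; instantiating this at $i=0$ yields the desired cycle inside $G' \subseteq G$. The argument for paths is word-for-word identical, taking $C$ to be a path of length $\ell \geq k$ and replacing ``cycle of length at least $\ell$'' by ``path of length at least $\ell$'' throughout.

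The base case $i = m$ is immediate, as $G_m = C$ is itself a cycle (path) of length $\ell$. For the inductive step, assume $G_{i+1}$ contains a cycle (path) $D$ of length at least $\ell$, and let $v,a,b$ be as above for the step $G_i \to G_{i+1}$. If the edge $ab$ does not occur on $D$, then every edge of $D$ lies in $E(G_i)\setminus\{va,vb\}$ and every vertex of $D$ lies in $V(G_i)\setminus\{v\}$, so $D$ is already a subgraph of $G_i - v \subseteq G_i$ and we are done. Otherwise $ab \in E(D)$; because $v \notin V(G_{i+1}) \supseteq V(D)$, we may reroute $D$ through $v$ by deleting the edge $ab$ and inserting the vertex $v$ together with the edges $va, vb \in E(G_i)$. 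This is an edge subdivision, so it turns the cycle (path) $D$ into a cycle (path) of $G_i$ whose length is exactly one greater than that of $D$, hence at least $\ell$. This completes the induction and the proof.

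The only point that needs the slightest care is this reinsertion step: one must be sure that the suppressed vertex $v$ does not already lie on $D$, which is guaranteed precisely because $v$ was removed from the graph when passing to $G_{i+1}$, so that subdividing the edge $ab$ with $v$ again produces a simple cycle (path). Everything else is bookkeeping. Conceptually, the observation is just the specialization to $H \in \{$cycle, path$\}$ of the standard equivalence between ``$G$ contains $H$ as a topological minor'' and ``$G$ contains a subdivision of $H$ as a subgraph'', combined with the trivial fact that every subdivision of a cycle or path is again a cycle or path of at least the original length.
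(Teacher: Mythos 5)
The paper states this as an \emph{Observation} and gives no proof at all --- it is treated as self-evident (the intended justification being exactly that a subdivision of a cycle or path is again a cycle or path of at least the original length). Your proof is correct and simply makes this intuition fully formal by unwinding the paper's definition of topological minor and running a downward induction over the sequence of degree-$2$ suppressions, reinserting the suppressed vertex whenever the shortcut edge is used. The case analysis is exhaustive, the reinsertion is valid precisely because $v$ was removed when passing to $G_{i+1}$ (so no vertex repetition can arise), and the length can only grow, so the bound $\ell \geq k$ is preserved. Your remark that non-adjacency of the two neighbours $a,b$ in $G_i$ is not actually needed is a sound observation, though in the simple-graph setting it is moot: if $ab \in E(G_i)$ already, the cycle or path $D$ lives in $G_i$ directly without any rerouting. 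In short, you have supplied a clean, correct proof of a fact the paper deliberately leaves unproved.
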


\begin{definition}[extension]
Let~$G$ be a graph containing distinct vertices~$x$ and~$y$. Graph~$H$ is an $xy$-extension of~$G$ if~$H$ can be obtained from an induced subgraph of~$G$ by adding an $xy$-path that consists of a single edge, or of new interior vertices of degree two.
\end{definition}

The notion of $xy$-extension will be useful when reasoning about the structure of the graphs for which the oracle is queried by the Turing kernelization.

\begin{proposition} \label{proposition:connect:separator:is:planar}
If~$\{x,y\}$ is a minimal separator in a planar graph~$G$, then any $xy$-extension of~$G$ is planar.
\end{proposition}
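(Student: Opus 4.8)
The plan is to reduce the assertion to a statement about $G$ itself, and then to exploit the fact that a minimal separator of size two lies on a common face of every planar embedding of each of the ``sides'' into which it splits the graph.

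First I would observe that it suffices to prove planarity of the graph $G^{+}$ obtained from $G$ by adding the edge $\{x,y\}$ (and $G^{+}:=G$ if this edge is already present). Indeed, by definition an $xy$-extension $H$ arises from an induced subgraph $G[W]$ of $G$ by adding an $xy$-path all of whose interior vertices are new and have degree two; hence $H$ is a subgraph of a subdivision of $G^{+}$, and since both subdivision and passing to a subgraph preserve planarity, planarity of $G^{+}$ gives planarity of $H$. From now on I would therefore assume $\{x,y\}\notin E(G)$.

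Next I would expose the structure around the separator. Let $C_1,\dots,C_m$ be the connected components of $G-\{x,y\}$; since $\{x,y\}$ is a separator we have $m\ge 2$, and since it is \emph{minimal} the graphs $G-x$ and $G-y$ are both connected, which forces every $C_i$ to be adjacent to both $x$ and $y$ --- otherwise deleting the single one of $x,y$ to which $C_i$ is attached would disconnect $C_i$ from the remaining components. Put $G_i:=G[C_i\cup\{x,y\}]$. The key lemma is that every $G_i$ admits a planar embedding in which $x$ and $y$ lie on a common face. To see this, take any planar embedding of $G$ and restrict it to $G_i$; for $j\ne i$ the component $C_j$ is connected and vertex-disjoint from $G_i$, so its entire drawing is contained in a single face $f$ of the restricted embedding, and because $C_j$ sends an edge to $x$ and an edge to $y$ the face $f$ is incident to both. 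Making $f$ the outer face, we conclude that each $G_i$ can be redrawn inside a closed disk $D_i$ so that $x$ and $y$ lie on the bounding circle $\partial D_i$ while no other vertex or edge of $G_i$ meets $\partial D_i$.

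Finally I would reassemble the pieces. On each $\partial D_i$ the points $x$ and $y$ cut the circle into two arcs whose interiors avoid $G_i$ entirely. Since the subgraphs $G_1,\dots,G_m$ pairwise intersect in exactly $\{x,y\}$ and $G$ has no edges between distinct components, I can glue the disks one after another along such boundary arcs --- identifying the two copies of $x$, the two copies of $y$, and the graph-free interiors of the glued arcs --- to obtain a drawing of $G=\bigcup_i G_i$ inside a single disk with $x$ and $y$ still on its boundary. Routing the edge $\{x,y\}$ through the exterior of this disk then witnesses that $G^{+}$ is planar, and with the first reduction this proves the proposition. I expect the last, purely topological, step to be the main obstacle: one must check that the arc-gluing is legitimate (the glued arcs carry no vertices or edges in their interiors, which is precisely what the disk drawings provide) and that $x$ and $y$ remain accessible from the outer face after every gluing; the key lemma additionally relies on the standard fact that a connected subgraph disjoint from an embedded graph lies within a single one of its faces.
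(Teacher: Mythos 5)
Your proof is correct, but it takes a genuinely different route from the paper's. The paper fixes one plane embedding of $G$ and argues directly that $x$ and $y$ already share a face there: after deleting all edges incident to $x$ and $y$, a hypothetical closed curve of graph edges separating the two points would lie entirely in one component $C_1$ of $G-\{x,y\}$, yet an $xy$-path with interior in a second component $C_2$ (which exists by minimality) would be forced to cross it, contradicting planarity. You instead decompose $G$ along the separator into the blocks $G_i=G[C_i\cup\{x,y\}]$, re-embed each $G_i$ in a disk with $x,y$ on its boundary circle (certified by the standard fact that a connected subgraph disjoint from an embedded plane graph lies within one face, applied with a second component as witness), and reassemble by gluing the disks along graph-free boundary arcs. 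Both arguments turn on the same observation --- minimality forces every component of $G-\{x,y\}$ to attach to both $x$ and $y$ --- but the paper spends its second component on a Jordan-curve contradiction inside one fixed drawing, while you spend it on certifying a good face in a re-embedding of each block. Your preliminary reduction (any $xy$-extension is a subgraph of a subdivision of $G+\{x,y\}$) is also a nice streamlining; the paper instead handles the arbitrary $xy$-path directly by drawing it in the shared face. The main care needed in your route is the topological legitimacy of the iterated disk gluing, which you correctly flag, while the paper's main subtleties are the existence of a separating closed curve built from graph edges and the passage from ``$x,y$ share a face once their incident edges are deleted'' to ``$x,y$ share a face of $G$ itself.''
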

\begin{proof}
Consider a planar graph~$G$ with a minimal separator~$\{x,y\}$, and fix an arbitrary plane embedding of~$G$. Since~$\{x,y\}$ is a minimal separator, there are at least two connected components in~$G - \{x,y\}$ and all such components are adjacent to both~$x$ and~$y$ by minimality. We claim that when removing the edges incident on~$u$ and~$v$ from the drawing, the points representing~$u$ and~$v$ belong to the same face. Assume for a contradiction that this is not the case. Then there is a closed curve~$\mathcal{C}$ in the plane (corresponding to a series of edges) not passing through~$x$ or~$y$, that contains~$x$ in its interior while~$y$ is on the exterior. But all edges on such a curve belong to one connected component~$C_1$ of~$G - \{x,y\}$. There is at least one other connected component of~$G - \{x,y\}$, say~$C_2$. Now consider an $xy$-path~$\P$ whose internal vertices all belong to~$C_2$, which exists since~$C_2$ is connected and is adjacent to both~$x$ and~$y$. Since the drawing of~$\P$ connects~$x$ and~$y$, it must cross the closed curve~$\mathcal{C}$. However, as the edges forming~$\mathcal{C}$ belong to~$C_1$, while~$\P$ is a path in~$C_2 \cup \{x,y\}$, their drawings cannot intersect in a valid planar drawing; a contradiction.

Hence vertices~$x$ and~$y$ are indeed in the same face after removing their incident edges from the drawing, which implies that in the original drawing of~$G$ there is a face containing both~$x$ and~$y$. We can draw the edge~$\{x,y\}$, or any~$xy$-path consisting of new degree-2 vertices, in this face without creating crossings. Hence all graphs obtained from~$G$ by adding such a structure are planar. This trivially implies that all graphs obtained from an induced subgraph of~$G$ by adding such a structure are also planar.
\end{proof}

\subsection{Tree decompositions}

The decomposition that is exploited by the Turing kernelization can be described elegantly using tree decompositions. We therefore need the following terminology and simple facts.

\begin{definition} \label{def:treedec}
A \emph{tree decomposition} of a graph~$G$ is a pair~$(T, \X)$, where~$T$ is a tree and~$\X \colon V(T) \to 2^{V(G)}$ assigns to every node of~$T$ a subset of~$V(G)$ called a \emph{bag}, such that:
\begin{enumerate}[(a)]
	\item $\bigcup_{i \in V(T)} \X(i) = V(G)$.\label{td:coverv}
	\item For each edge~$\{u,v\} \in E(G)$ there is a node~$i \in V(T)$ with~$\{u,v\} \subseteq \X(i)$.\label{td:covere}
	\item For each~$v \in V(G)$ the nodes~$\{i \mid v \in \X(i)\}$ induce a connected subtree of~$T$.\label{td:connected}
\end{enumerate}
\end{definition}

The \emph{width} of the tree decomposition is~$\max _{i\in V(T)} |\X(i)| - 1$. The \emph{adhesion} of a tree decomposition is~$\max _{\{i,j\} \in E(T)} |\X(i) \cap \X(j)|$. If~$T$ has no edges, we define the adhesion to be zero. For an edge~$e = \{i,j\} \in E(T)$ we will sometimes refer to the set~$\X(i) \cap \X(j)$ as the \emph{adhesion of edge~$e$}. If~$(T, \X)$ is a tree decomposition of a graph~$G$, then the \emph{torso} of a bag~$\X(i)$ for~$i \in V(T)$ is the graph~$\torso(G, \X(i))$ obtained from~$G[\X(i)]$ by adding an edge between each pair of vertices in~$\X(i)$ that are connected by a path in~$G$ whose internal vertices do not belong to~$\X(i)$.

\begin{observation} \label{observation:connectedset:subtree}
Let~$(T,\X)$ be a tree decomposition of a graph~$G$ and let~$S \subseteq V(G)$ be such that~$G[S]$ is connected. Then the nodes~$\{i \in V(T) \mid \X(i) \cap S \neq \emptyset\}$ induce a connected subtree of~$T$.
\end{observation}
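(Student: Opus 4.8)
The plan is to prove the statement by induction on $|S|$, relying on the three axioms of \defref{def:treedec} and on one elementary auxiliary fact: if two subtrees of a tree share at least one node, then their union is again a connected subtree. Throughout, let $T_S$ denote the subgraph of $T$ induced by $\{i \in V(T) \mid \X(i) \cap S \neq \emptyset\}$; we may assume $S \neq \emptyset$, as otherwise $T_S$ is empty and there is nothing to show.

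For the base case $|S| = 1$, say $S = \{v\}$, the node set of $T_S$ is precisely $\{i \mid v \in \X(i)\}$, which induces a connected subtree by axiom~\ref{td:connected}. For the inductive step I would assume $|S| \geq 2$ and that the claim holds for every connected vertex set of smaller size. Fix a spanning tree $F$ of $G[S]$ and let $v$ be a leaf of $F$; then $F - v$ is a tree and a spanning subgraph of $G[S \setminus \{v\}]$, so $G[S \setminus \{v\}]$ is connected. By the induction hypothesis, $T_{S \setminus \{v\}}$ induces a connected subtree of $T$, and by axiom~\ref{td:connected} so does $T_v := T[\{i \mid v \in \X(i)\}]$. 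Since a node $i$ satisfies $\X(i) \cap S \neq \emptyset$ exactly when $v \in \X(i)$ or $\X(i) \cap (S \setminus \{v\}) \neq \emptyset$, the node set of $T_S$ is the union of the node sets of $T_v$ and $T_{S \setminus \{v\}}$; by the auxiliary fact it then suffices to produce a node lying in both.

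To produce such a node, I would observe that the leaf $v$ has a neighbor $u \in S \setminus \{v\}$ in $G$, since a connected graph on at least two vertices has no isolated vertex. Applying axiom~\ref{td:covere} to the edge $\{u, v\} \in E(G)$ gives a node $i \in V(T)$ with $\{u, v\} \subseteq \X(i)$; then $v \in \X(i)$ puts $i$ in $T_v$, while $u \in \X(i) \cap (S \setminus \{v\})$ puts $i$ in $T_{S \setminus \{v\}}$, which finishes the induction.

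I do not expect any genuine obstacle here — this is a textbook property of tree decompositions — and the only two points worth stating explicitly are the auxiliary fact about unions of subtrees (a Helly-type property, proved in one line by looking at the path in $T$ between any two of the nodes involved) and the choice of $v$ as a leaf of a spanning tree of $G[S]$, which is exactly what keeps $G[S \setminus \{v\}]$ connected so that the induction hypothesis applies. An alternative that avoids induction altogether is to assume $T_S$ disconnected, pick a node $c$ with $\X(c) \cap S = \emptyset$ on the $T$-path between two components of $T_S$, and use axioms~\ref{td:covere} and \ref{td:connected} to argue that the components of $T - c$ partition $S$ into at least two nonempty classes with no edge of $G$ between distinct classes, contradicting the connectedness of $G[S]$.
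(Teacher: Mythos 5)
The paper states this as an unproven observation, treating it as a standard fact about tree decompositions, so there is no proof to compare against. Your inductive argument is correct and complete: peeling off a leaf of a spanning tree of $G[S]$ keeps the remainder connected, and axiom~\ref{td:covere} applied to the edge from $v$ to its $F$-neighbor yields the required common node of the two subtrees; the alternative you sketch (picking a node $c$ with $\X(c) \cap S = \emptyset$ on a $T$-path between two components of $T_S$ and deriving a contradiction) is the more common textbook route and is equally sound.
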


We need the following standard propositions on tree decompositions. We give their proofs for completeness.

\begin{proposition} \label{proposition:neighbors:adhesion}
Let~$(T,\X)$ be a tree decomposition of a graph~$G$ of adhesion at most~$d$, and let~$i \in V(T)$. For each connected component~$C$ of the graph~$G - \X(i)$ we have $|N_G(C) \cap \X(i)| \leq d$.
\end{proposition}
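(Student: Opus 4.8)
The plan is to relate the component $C$ to a subtree of the decomposition tree and then read the bound off a single tree edge, using only the axioms of \defref{def:treedec} together with \obsref{observation:connectedset:subtree}. First I would dispose of the degenerate case: if $T$ has just one node then $\X(i) = V(G)$ by~\condref{td:coverv}, so $G - \X(i)$ has no components and the statement is vacuous. Hence I may assume $T$ has at least one edge, so the adhesion bound $d$ is meaningful, and I fix a connected component $C$ of $G - \X(i)$.

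Next I would apply \obsref{observation:connectedset:subtree} to the connected set $C$: the node set $T_C := \{\, j \in V(T) \mid \X(j) \cap C \neq \emptyset \,\}$ induces a nonempty connected subtree of $T$, and $i \notin T_C$ since $C \cap \X(i) = \emptyset$ by the definition of $C$. A connected subtree avoiding $i$ must lie entirely in one connected component of $T - i$; let $j$ be the unique neighbor of $i$ in $T$ belonging to that component. Then for every node $\ell \in T_C$, the $T$-path from $i$ to $\ell$ starts with the edge $\{i,j\}$.

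The heart of the argument is the inclusion $N_G(C) \cap \X(i) \subseteq \X(i) \cap \X(j)$. To show it, take any $v \in N_G(C) \cap \X(i)$ and a neighbor $w \in C$ of $v$. By~\condref{td:covere} some node $\ell$ satisfies $\{v,w\} \subseteq \X(\ell)$, and since $w \in C$ this node $\ell$ lies in $T_C$. Now $v$ occurs in the bags of both $i$ and $\ell$, so by~\condref{td:connected} it occurs in the bag of every node on the $T$-path from $i$ to $\ell$; as that path uses the edge $\{i,j\}$, we get $v \in \X(j)$, hence $v \in \X(i) \cap \X(j)$. Therefore $|N_G(C) \cap \X(i)| \leq |\X(i) \cap \X(j)| \leq d$, the last step being exactly the definition of adhesion at most $d$ applied to $\{i,j\} \in E(T)$.

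I do not anticipate a real obstacle: the proof is a direct unwinding of the tree-decomposition axioms. The only points requiring a little care are the vacuous single-node case and the observation that the subtree $T_C$, being connected and avoiding $i$, is confined to a single component of $T - i$ so that a single tree edge $\{i,j\}$ captures all of $N_G(C) \cap \X(i)$.
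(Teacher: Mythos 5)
Your proof is correct and follows essentially the same route as the paper: both identify the subtree $T_C = \{j : \X(j) \cap C \neq \emptyset\}$ via \obsref{observation:connectedset:subtree}, observe $i \notin T_C$, locate the unique neighbor $j$ of $i$ on the side of $T - i$ containing $T_C$, and use Property~\condref{td:connected} to conclude that every vertex of $N_G(C) \cap \X(i)$ lies in $\X(j)$ as well, bounding $|N_G(C) \cap \X(i)|$ by the adhesion of the edge $\{i,j\}$. The only cosmetic difference is that you establish the inclusion $N_G(C) \cap \X(i) \subseteq \X(i) \cap \X(j)$ directly, whereas the paper phrases it as a proof by contradiction with $d+1$ vertices.
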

\begin{proof}
Consider a connected component~$C$ of~$G - \X(i)$ and define~$S := \{j \in V(T) \mid \X(j) \cap V(C) \neq \emptyset\}$. By Observation~\ref{observation:connectedset:subtree} the nodes in~$S$ form a connected subtree of~$T$. As~$C$ is a component of~$G - \X(i)$ we have~$i \not \in S$. Assume for a contradiction that~$N_G(C) \cap \X(i)$ contains at least~$d+1$ distinct vertices~$v_1, \ldots, v_{d+1}$. To satisfy condition~\ref{td:covere} of Definition~\ref{def:treedec} for the edges between~$C$ and~$v_1, \ldots, v_{d+1}$, all vertices of~$v_1, \ldots, v_{d+1}$ must occur in a common bag with a vertex of~$C$, hence they must occur in a bag of~$S$. Let~$i'$ be the successor of node~$i$ on the shortest path in~$T$ from node~$i$ to a closest node in~$S$, which is well-defined since~$S$ is a connected subtree. Since~$i \not \in S$ we know that~$i'$ is a neighbor of node~$i$ in~$T$. As~$v_1, \ldots, v_{d+1}$ all occur in the bag of node~$i$, and all occur in a bag of a node in~$S$, Property~\ref{td:connected} of Definition~\ref{def:treedec} implies that~$v_1, \ldots, v_{d+1}$ are all contained in~$\X(i')$. But since these~$d+1$ vertices also occur in~$\X(i)$, this implies that the adhesion of~$(T,\X)$ is at least~$d+1$; a contradiction.
\end{proof}

\begin{proposition}[{\cite[Lemma 7.3]{CyganFKLMPPS15}}] \label{proposition:separation:from:edge}
Let~$(T,\X)$ be a tree decomposition of a graph~$G$, let~$\{i,j\}$ be an edge of the decomposition tree, and let~$T_i$ and~$T_j$ be the trees containing~$i$ and~$j$ respectively, that result from removing the edge~$\{i,j\}$ from~$T$. The pair~$(A,B)$ with~$A := \bigcup _{v \in T_i} \X(v)$ and~$B := \bigcup _{v \in T_j} \X(v)$ is a separation in~$G$ of order~$|\X(i) \cap \X(j)|$.
\end{proposition}

\begin{proposition} \label{proposition:separation:from:children}
Let~$(T,\X)$ be a tree decomposition of a graph~$G$, let~$i$ be a node of the decomposition tree, and let~$j_1, \ldots, j_\ell$ be neighbors of~$i$ such that~$\X(i) \cap \X(j_1) = \X(i) \cap \X(j_2) = \ldots = \X(i) \cap \X(j_\ell) = S$, and let~$T_1, \ldots, T_\ell$ be the trees in the forest~$T - \{i\}$ that contain~$j_1, \ldots, j_\ell$, respectively. Then~$(A,B)$ with~$A := \bigcup _{k=1}^\ell \bigcup _{v \in V(T_k)} \X(v)$ and~$B := (V(G) \setminus A) \cup S$ is a separation in~$G$ of order~$|S|$.
\end{proposition}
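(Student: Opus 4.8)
The plan is to check directly the three things required of a separation of order~$|S|$: that~$A\cup B = V(G)$, that~$A\cap B$ equals~$S$ (so the order is~$|S|$), and that~$G$ has no edge with one endpoint in~$A\setminus B$ and the other in~$B\setminus A$.

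The first two are pure set arithmetic. Since~$B$ contains~$V(G)\setminus A$ by construction, $A\cup B=V(G)$. For the intersection, $A\cap B = A\cap\bigl((V(G)\setminus A)\cup S\bigr) = A\cap S$, and because~$S = \X(i)\cap\X(j_1)\subseteq \X(j_1)$ with~$j_1\in V(T_1)$, the set~$S$ is contained in~$A$; hence~$A\cap B = S$ and the order is~$|S|$. As a byproduct one also reads off~$A\setminus B = A\setminus S$ and~$B\setminus A = V(G)\setminus A$, which is the form in which the edge condition will be phrased.

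The real content is the edge condition, which I would prove by contradiction in the style of \proposref{proposition:separation:from:edge} and \proposref{proposition:neighbors:adhesion}. Suppose~$\{u,v\}\in E(G)$ with~$u\in A\setminus S$ and~$v\in V(G)\setminus A$. By property~\ref{td:covere} of \defref{def:treedec} there is a node~$t\in V(T)$ with~$\{u,v\}\subseteq\X(t)$. Since~$u\in A$, pick~$k\in[\ell]$ and~$v'\in V(T_k)$ with~$u\in\X(v')$. The crucial claim is that the subtree~$T_u := \{\,w\in V(T)\mid u\in\X(w)\,\}$, which is connected by property~\ref{td:connected}, does not contain the node~$i$: if it did, then since the unique~$v'$-to-$i$ path in~$T$ leaves the component~$T_k$ only through the edge~$\{j_k,i\}$, this path runs~$v'\rightsquigarrow j_k\rightarrow i$, staying inside~$T_k$ until~$j_k$, and property~\ref{td:connected} would force~$u\in\X(j_k)$; combined with~$u\in\X(i)$ this gives~$u\in\X(i)\cap\X(j_k)=S$, contradicting~$u\notin S$. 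Thus~$T_u$ is a connected subtree of~$T$ avoiding~$i$; since it meets the component~$T_k$ of~$T-\{i\}$, it is entirely contained in~$V(T_k)$. In particular~$t\in V(T_k)$, so~$v\in\X(t)\subseteq A$, contradicting~$v\notin A$.

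I do not anticipate a genuine obstacle; the one step that wants a little care is the containment~$T_u\subseteq V(T_k)$, which is exactly the combination of property~\ref{td:connected} with the fact that deleting~$i$ splits~$T$ into components, one of which is~$T_k$, reachable from~$i$ only via~$j_k$. Everything else is routine unfolding of the definitions.
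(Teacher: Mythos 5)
Your proof is correct, but it takes a genuinely different route from the paper. You verify the separation condition directly from the definition: you compute $A\cap B=S$, observe that the two sides of the separation are $A\setminus S$ and $V(G)\setminus A$, and then rule out a crossing edge by showing that the connected subtree $T_u=\{w\mid u\in\X(w)\}$ for a vertex $u\in A\setminus S$ cannot reach $i$ (else Property~\ref{td:connected} would force $u$ into $\X(j_k)\cap\X(i)=S$), hence lies entirely inside one component $T_k$ of $T-\{i\}$, so any bag containing $u$ is already inside $A$. The paper instead performs a small surgery on the tree decomposition: it observes that the only vertices shared between a subtree $T_k$ and the rest are those in $S$, detaches $j_2,\ldots,j_\ell$ from $i$ and reattaches them to $j_1$, argues the result $(T',\X)$ is still a valid tree decomposition, and then applies the already-proved \proposref{proposition:separation:from:edge} to the single edge $\{j_1,i\}$. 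Your argument is more elementary and self-contained (it never leaves the definitions), at the cost of reproving the core connectivity reasoning that \proposref{proposition:separation:from:edge} encapsulates; the paper's reduction is shorter because it reuses that proposition, but it requires the reader to accept that the modified $(T',\X)$ is again a tree decomposition, a fact the paper asserts rather than verifies in detail. Both are valid; yours is arguably the more transparent for a reader checking the claim from scratch.
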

\begin{proof}
The preconditions ensure that for all subtrees~$T_j$ with~$j \in [\ell]$, the only vertices of~$G$ that occur in a bag of~$T_j$ and also occur in a bag outside of~$T_j$, are those in~$S$. Since all vertices of~$S$ are in~$\X(j_1)$, this implies that when we remove the edges from~$j_2, \ldots, j_\ell$ to node~$i$, and connect~$j_2,\ldots, j_\ell$ by edges to~$j_1$ instead, the result is a valid tree decomposition~$(T',\X)$ with the same set of bags. Applying Proposition~\ref{proposition:separation:from:edge} to edge~$\{j_1, i\}$ in~$(T',\X)$ yields the proof.
\end{proof}

\begin{proposition} \label{proposition:numbercomponents:numbersubtrees}
Let~$(T,\X)$ be a tree decomposition of a graph~$G$, let~$i \in V(T)$ be a node of the decomposition tree, and let~$U \subseteq V(G)$. If~$G[U] - \X(i)$ has at most~$\ell$ connected components, then there are~$\ell' \leq \ell$ trees~$T_1, \ldots, T_{\ell'}$ in the forest~$T - \{i\}$ such that all nodes~$j$ whose bag~$\X(j)$ contains a vertex of~$U \setminus \X(i)$, are contained in~$\bigcup _{k=1}^{\ell'} V(T_k)$.
\end{proposition}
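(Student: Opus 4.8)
The plan is to handle the connected components of $G[U]-\X(i)$ one at a time and show that each of them ``lives'' entirely inside one tree of the forest $T-\{i\}$. Since $G[U]-\X(i) = G[U\setminus\X(i)]$, its vertex set is exactly $U\setminus\X(i)$, and this set is partitioned into the vertex sets of its connected components $C_1,\dots,C_\ell$ (taking $\ell$ as an upper bound on the number of components, padding with empty sets if necessary).

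First I would fix a component $C_k$ with $V(C_k)\neq\emptyset$. Since $G[U]$ is a subgraph of $G$ and $C_k$ is a connected subgraph of $G[U]$, the set $V(C_k)$ induces a connected subgraph of $G$, so Observation~\ref{observation:connectedset:subtree} applies: the node set $S_k := \{j \in V(T) \mid \X(j)\cap V(C_k)\neq\emptyset\}$ induces a connected subtree of $T$. Because $C_k$ is a component of $G[U]-\X(i)$, we have $V(C_k)\cap\X(i)=\emptyset$, hence $i\notin S_k$. A connected subtree of $T$ that avoids the node $i$ must be entirely contained in one of the trees of the forest $T-\{i\}$; call this tree $T_k$. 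Thus each component $C_k$ contributes one tree $T_k$ of $T-\{i\}$, and there are at most $\ell$ of them.

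It then remains to verify the covering claim. Let $j\in V(T)$ be a node whose bag $\X(j)$ contains some vertex $v\in U\setminus\X(i)$. Since $U\setminus\X(i)$ is the disjoint union of the $V(C_k)$, there is an index $k$ with $v\in V(C_k)$, so $\X(j)\cap V(C_k)\neq\emptyset$, i.e.\ $j\in S_k$, and therefore $j\in V(T_k)\subseteq\bigcup_{k'=1}^{\ell'}V(T_{k'})$. This establishes the statement.

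I do not expect a genuine obstacle here; the only point requiring a moment of care is the observation that a vertex of $U\setminus\X(i)$ belongs to some component $C_k$ (so that $G[U]-\X(i)$ must be read as $G[U\setminus\X(i)]$ and its components genuinely partition that vertex set), and the elementary topological fact that a connected subtree of $T$ missing $i$ is confined to a single component of $T-\{i\}$.
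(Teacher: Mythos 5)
Your proof is correct and follows essentially the same route as the paper: apply Observation~\ref{observation:connectedset:subtree} to each connected component of $G[U]-\X(i)$, note that the resulting connected subtree avoids node $i$, and conclude it lies in a single tree of $T-\{i\}$. The explicit covering verification at the end is a slight elaboration of what the paper leaves implicit, but the argument is the same.
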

\begin{proof}
For each connected component~$C$ of~$G[U] - \X(i)$, Observation~\ref{observation:connectedset:subtree} implies that the nodes of~$T$ that contain a vertex of~$C$ form a connected subtree~$T_C$ of~$T$. Since~$C$ is a component of~$G[U] - \X(i)$, node~$i$ is not in~$T_C$. Hence~$T_C$ is contained fully in one of the trees~$T - \{i\}$. Since each of the~$\ell$ components of~$G[U] - \X(i)$ is confined to a single tree of~$T - \{i\}$, the proposition follows.
\end{proof}

\subsection{Tutte decompositions} The following theorem is originally due to Tutte, but has been reformulated in the language of tree decompositions (cf.~\cite[Exercise 12.20]{Diestel10}). For completeness, we give a proof of the current formulation in Appendix~\ref{appendix:tutte:decomposition}. Refer to Figure~\ref{figure:tutte} for an illustration of the involved concepts.

\begin{theorem}[{\cite{Tutte66}}] \label{theorem:tutte}
For every graph~$G$ there is a tree decomposition~$(T,\X)$ of adhesion at most two, called a \emph{Tutte decomposition}, such that:
\begin{enumerate}
	\item for each node~$i \in V(T)$, the graph~$\torso(G, \X(i))$ is a triconnected topological minor of~$G$, and\label{tutte:torsos}
	\item for each edge~$\{i,j\}$ of~$T$ the set~$\X(i) \cap \X(j)$ is a minimal separator in~$G$ or the empty set.\label{tutte:minseparators}
\end{enumerate}
\end{theorem}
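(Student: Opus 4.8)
The plan is to prove this by induction on $|V(G)|$, reconstructing for tree decompositions the classical decomposition of a graph into its triconnected components: split $G$ repeatedly along cut vertices and separation pairs, recursively decompose the smaller pieces that result, and glue the pieces' decompositions together along new ``hub'' bags of size at most two. The base case is that $G$ is triconnected (which includes $K_1$, $K_2$, $K_3$ and the empty graph under the footnote's convention): then the one-node decomposition with bag $V(G)$ works, its torso being $G$ itself, a triconnected topological minor of $G$, with no tree edges to check.

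For the inductive step I distinguish three cases. If $G$ is disconnected, I take Tutte decompositions of its components and link their decomposition trees by arbitrary new edges, each of which has empty intersection of bags, so adhesion and torsos are unaffected. If $G$ is connected with a cut vertex $v$, let $C_1,\dots,C_m$ ($m\ge2$) be the components of $G-v$ and put $G_j:=G[V(C_j)\cup\{v\}]$; each $G_j$ has fewer vertices, so I recursively obtain decompositions $(T_j,\X_j)$, pick in each one a node $r_j$ with $v\in\X_j(r_j)$, add a fresh node $r$ with bag $\{v\}$, and join $r$ to every $r_j$. If $G$ is biconnected but not triconnected, it has a separation pair $\{u,w\}$; let $C_1,\dots,C_m$ ($m\ge2$) be the components of $G-\{u,w\}$ (each adjacent to both $u$ and $w$, as $G$ has no cut vertex), and let $H_j$ be $G[V(C_j)\cup\{u,w\}]$ together with an added ``virtual'' edge $uw$. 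Each $H_j$ has strictly fewer vertices; I recursively decompose each $H_j$, pick a node $r_j$ with $\{u,w\}\subseteq\X_j(r_j)$ (which exists because $uw$ is an edge of $H_j$), add a fresh node $r$ with bag $\{u,w\}$, and join $r$ to every $r_j$.

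Three things then have to be checked. (i) The result is a valid tree decomposition of adhesion at most two: conditions (a) and (b) hold because the pieces cover everything and each removed separator is placed back into every piece; condition (c) holds because a vertex occurring in two of the glued decompositions must lie in the shared boundary, which is exactly the bag of the hub $r$ that links them; and every new edge has adhesion $\emptyset$, $\{v\}$, or $\{u,w\}$. (ii) Torsos: for an inherited node with bag $X\subseteq V(P)$, where $P$ is one of the pieces $G_j$ or $H_j$, one shows $\torso(G,X)=\torso(P,X)$ — a $G$-path between two vertices of $X$ with interior avoiding $X$ cannot leave $P$ through its at most two boundary vertices and come back, and if (in the $H_j$ case) it uses the virtual edge $uw$ it can be rerouted through another component $C_\ell$, which exists and is adjacent to both $u,w$; hence the torso is triconnected by induction, and it is a topological minor of $G$ because $P$ itself is one (the induced-subgraph part of $P$ is a subgraph of $G$, and the virtual edge is realised by a $uw$-path through $C_\ell$ all of whose interior vertices have degree two in the resulting subgraph), and topological minors compose. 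For the hub nodes the torsos are $K_1$ or $K_2$, both triconnected topological minors of $G$. (iii) Adhesions are minimal separators: the hub edge in the cut-vertex case carries $\{v\}$, a minimal separator since $v$ is a cut vertex; in the separation-pair case it carries $\{u,w\}$, a minimal separator since it separates two vertices taken from distinct components of $G-\{u,w\}$ while neither $u$ nor $w$ alone separates anything ($G$ being biconnected); the inherited edges carry adhesions that are minimal separators of a piece $P$, which must be lifted to $G$.

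I expect this last lifting to be the main obstacle. The cleanest route is the observation that a piece $P$ is attached to the rest of $G$ only along its boundary $B$ with $|B|\le2$, and that the virtual edge — when present — exactly simulates the connectivity $G$ provides between the two vertices of $B$ through the other components. Consequently, for every $S\subseteq V(P)$ the components of $G-S$ restricted to $V(P)$ coincide with those of $P-S$, except that the at most two components meeting $B$ may become merged and enlarged by the rest of $G$. From this it follows that $S$ disconnects $G$ whenever it disconnects $P$; that a pair of vertices separated by $S$ in $P$ is still separated by $S$ in $G$; and that any $S'\subsetneq S$ separating that pair in $G$ would, by exactly the rerouting argument used for the torsos, already separate it in $P$ — contradicting minimality of $S$ in $P$. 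Hence $S$ is a minimal separator of $G$, and the induction closes. A little extra care is needed for degenerate small graphs and for the subcase $uw\in E(G)$, where no virtual edge is actually added, but these situations only simplify the analysis.
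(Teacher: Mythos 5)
Your strategy (induction on $|V(G)|$, case distinction on connectivity, recursive gluing via cut vertices and separation pairs with virtual edges) is the same as the paper's Appendix~A proof; the hub-node variant is a benign cosmetic difference, since the hubs' torsos are $K_1$ or $K_2$, which qualify as triconnected topological minors under the paper's convention. The genuine gap is in the lifting of minimal separators, which you rightly flag as the crux but do not resolve. Under the paper's definition, $S$ is a minimal separator of $G$ only if $G-S'$ is connected for \emph{every} $S'\subsetneq S$. Your argument rules out $S'$ separating the particular pair $p,q$ witnessed inside the piece $P$, but it does not rule out $S'$ disconnecting $G$ by cutting off some vertex lying \emph{outside} $P$. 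And indeed the conclusion can fail. Take $G$ to be the $4$-cycle on $v,a,t,b$ (edges $va,at,tb,bv$) with a pendant vertex $w$ attached to $v$. After splitting at the cut vertex $v$, the piece $C'_1$ is the $4$-cycle, and one valid Tutte decomposition of it (the one obtained by choosing the separation pair $\{v,t\}$) has adhesion $\{v,t\}$, which is a minimal separator of $C'_1$; yet $\{v,t\}$ is \emph{not} a minimal separator of $G$, because $\{v\}$ alone already disconnects $G$ by isolating $w$. So the inductive gluing can produce a decomposition violating property~(2), and your ``hence $S$ is a minimal separator of $G$'' does not follow.

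In the separation-pair case the conclusion is saved by a different and much simpler observation that your argument does not use: there $G$ is biconnected, so no singleton disconnects it, and since adhesions have size at most two this already gives minimality (this is the paper's Claim~\ref{claim:tutte:minimalseparator}). In the cut-vertex case no such shortcut exists, because $C'_j$ can contain the cut vertex $v$ without $v$ being a cut vertex \emph{of} $C'_j$, so a size-$2$ adhesion containing $v$ can survive the recursion on $C'_j$. (The paper's own writeup is also silent about inherited adhesions in the cut-vertex case, so it shares this gap.) One way to repair both proofs is to peel off \emph{all} biconnected components first, making every cut vertex a size-$1$ adhesion (trivially a minimal separator), and only then recurse via separation pairs inside each biconnected block, where the shortcut applies at every level. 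A smaller slip to fix as well: from ``at most two components meeting $B$ may merge'' it does not follow that $S$ disconnects $G$ whenever it disconnects $P$ (two components could merge into one); you need the sharper fact that at most \emph{one} component of $P-S$ meets $B$, which holds because $|B|=1$ in the cut-vertex case and because the virtual edge keeps $u$ and $w$ in a common component in the separation-pair case.
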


An algorithm due to Hopcroft and Tarjan~\cite{HopcroftT73} (see also~\cite{GutwengerM00}) can be used to compute a Tutte decomposition in linear time by depth-first search.\footnote{We remark that the Hopcroft-Tarjan algorithm formally computes triconnected components of a graph, rather than a Tutte decomposition; this corresponds to a variant of Tutte decomposition where each torso is either a triconnected graph or a cycle. A decomposition matching our definition easily follows from their result.} We shall use the following property of Tutte decompositions.

\begin{figure}[t]
\begin{center}
\subfigure[Graph~$G$.]{\label{fig:tutte:graph}
\includegraphics{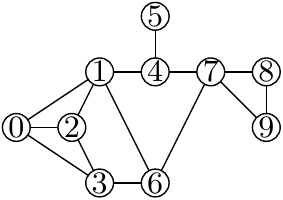}
}
\subfigure[Tutte decomposition~$(T, \X)$ of~$G$.]{\label{fig:tutte:decomposition}
\includegraphics{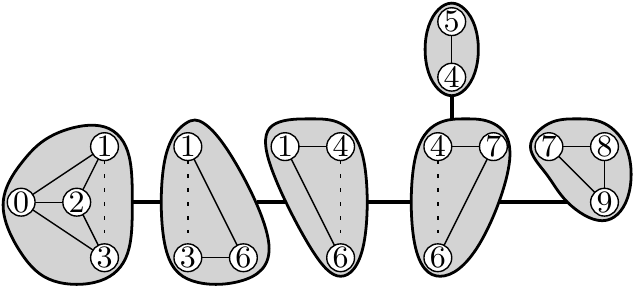}
}
\caption{Example of a Tutte decomposition of a graph. The decomposition tree has six nodes, corresponding to the six gray ovals. Edges of the decomposition tree are visualized as thick lines between ovals. The bag~$\X(i)$ of a node~$i$ is illustrated by drawing the vertices~$\X(i)$ within the oval for node~$i$. The edges within each bag~$i$ represent the torso graph~$\torso(G, \X(i))$. Solid lines represent edges of~$\torso(G, \X(i)) \cap E(G)$, while dotted edges are those that are added by the torso operation.}
\end{center}
\label{figure:tutte}
\end{figure}

\begin{proposition} \label{proposition:minimalsep:in:tutte:edge:in:torso}
Let~$(T,\X)$ be a Tutte decomposition of a graph~$G$. If~$\{x,y\}$ is a minimal separator of~$G$, then for every bag~$\X(i)$ containing~$x$ and~$y$, the edge~$\{x,y\}$ is contained in~$\torso(G, \X(i))$.
\end{proposition}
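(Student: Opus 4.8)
The plan is to argue by contradiction. Suppose $\X(i)$ contains both $x$ and $y$, yet the edge $\{x,y\}$ is \emph{not} present in $\torso(G,\X(i))$. Unwinding the definition of torso, this means two things at once: $\{x,y\} \notin E(G)$, and there is no $xy$-path in $G$ all of whose interior vertices lie outside $\X(i)$. The goal is to derive a contradiction with the fact that $\torso(G,\X(i))$ is triconnected, which is Property~\ref{tutte:torsos} of \thmref{theorem:tutte}.

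First I would exploit the hypothesis that $\{x,y\}$ is a minimal separator of~$G$. This gives that $G$ is connected, that $G - \{x,y\}$ has at least two connected components $C_1,\ldots,C_m$ with $m \geq 2$, and --- crucially, using that $G-x$ and $G-y$ are both connected by minimality --- that every $C_j$ has a neighbour in~$x$ and a neighbour in~$y$; otherwise one of $x,y$ alone would already disconnect~$G$. Consequently, for each~$j$ there is an $xy$-path whose interior vertices all lie inside~$C_j$ (go from~$x$ into~$C_j$, traverse~$C_j$ to a neighbour of~$y$, then to~$y$). Since by our assumption no $xy$-path has its interior disjoint from~$\X(i)$, each such path must meet $\X(i)$, so $C_j \cap \X(i) \neq \emptyset$ for every $j \in [m]$.

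Next I would pass to the torso and pick witnesses $u \in C_1 \cap \X(i)$ and $v \in C_2 \cap \X(i)$; note that $u,v \in \X(i) \setminus \{x,y\}$, since $x,y$ belong to no~$C_j$. The key observation is that every edge of $\torso(G,\X(i))$ is \emph{realised} by a path in~$G$ whose interior vertices avoid~$\X(i)$: a single edge of~$G$ if the edge was already present, or the witnessing internally-$\X(i)$-avoiding path of the torso operation otherwise. Hence any $uv$-path in $\torso(G,\X(i)) - \{x,y\}$ translates, by substituting each torso edge with its realising $G$-path, into a $uv$-walk in~$G$ that avoids~$\{x,y\}$ altogether: the torso vertices on it lie in $\X(i)\setminus\{x,y\}$, and the interpolated interior vertices lie outside~$\X(i)$, hence are neither~$x$ nor~$y$. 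But $u$ and~$v$ lie in distinct components of $G - \{x,y\}$, so no such walk exists; therefore $u$ and~$v$ lie in distinct components of $\torso(G,\X(i)) - \{x,y\}$. Thus deleting the two vertices $x,y$ disconnects $\torso(G,\X(i))$ (which has at least the two vertices $u \neq v$ remaining), contradicting triconnectedness. This proves the proposition, and also silently handles the degenerate case $\X(i) = \{x,y\}$, which the argument rules out since it forces $|\X(i)| \geq 4$.

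The only mildly delicate point --- the main obstacle, such as it is --- is the bookkeeping in this torso-to-$G$ translation: one has to be careful that the substituted interior vertices genuinely avoid $x$ and $y$ (they do, being outside $\X(i)$), and that although the concatenation of realising paths is only a \emph{walk}, it still certifies connectivity within $G - \{x,y\}$, which is all that is needed to contradict $u,v$ lying in different components. Everything else follows directly from the definitions of minimal separator and of triconnected graph.
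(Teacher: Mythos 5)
Your proof is correct and takes essentially the same approach as the paper: both arguments hinge on the observation that a $uv$-path in the triconnected torso translates into a $uv$-connection in $G$ that avoids $\{x,y\}$, which is impossible if $u$ and $v$ come from distinct components of $G-\{x,y\}$. You make two small simplifications worth noting: you appeal directly to the definition of triconnectedness (deleting two vertices cannot disconnect) where the paper takes a detour through Menger's theorem, and you translate the torso path into a mere \emph{walk} in $G$ rather than a path, thereby sidestepping the need to verify vertex-disjointness of the substituted realising paths (which the paper secures via the adhesion-$\leq 2$ bound). Both shortcuts are sound and arguably cleaner.
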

\begin{proof}
If~$\{x,y\} \in E(G)$ then the proposition is trivial, so assume that this is not the case. Since~$\{x,y\}$ is a minimal separator there are at least two connected components~$C_1,C_2$ of~$G - \{x,y\}$ that are both adjacent to~$x$ and~$y$. Consequently, there is an $xy$-path~$\P_1$ with interior vertices in~$C_1$, and an $xy$-path~$\P_2$ with interior vertices in~$C_2$. We claim that~$\X(i)$ contains vertices from at most one of the components~$C_1$ and~$C_2$. 

Assume for a contradiction that~$v_1 \in \X(i) \cap V(C_1)$ and~$v_2 \in \X(i) \cap V(C_2)$. Since $\torso(G, \X(i))$ is triconnected by the definition of a Tutte decomposition, there is no $v_1v_2$-separator in the torso of size less than three. By Menger's theorem this implies that there are three internally vertex-disjoint $v_1v_2$-paths in the graph~$\torso(G, \X(i))$. Hence there is a $v_1v_2$-path~$\P$ in~$\torso(G, \X(i))$ that contains neither~$x$ nor~$y$. As the adhesion of a Tutte decomposition is at most two, a path in~$\torso(G, \X(i))$ can be expanded into a path in~$G$ by replacing all the shortcut edges that the torso introduces by paths outside~$\X(i)$; this does not change which vertices from~$\X(i)$ are used on the path. Since~$\{x,y\} \subseteq \X(i)$, we can expand~$\P$ into a $v_1v_2$-path in~$G$ that avoids both~$x$ and~$y$. But~$C_1$ and~$C_2$ are distinct connected components of~$G - \{x,y\}$; a contradiction.

Hence~$\X(i)$ contains vertices from at most one of the components~$C_1$ and~$C_2$. Hence at least one of the paths~$\P_1$ or~$\P_2$ is an $xy$-path with interior vertices not in~$\X(i)$, which shows by the definition of torso that~$\torso(G, \X(i))$ contains edge~$\{x,y\}$.
\end{proof}

\subsection{Circumference of restricted classes of triconnected graphs}
The \emph{circumference} of a graph is the length of a longest cycle. Several results are known that give a lower bound on the circumference of a triconnected graph in terms of its order. We will use these lower bounds to deduce that if a Tutte decomposition of a graph has large width, then the graph contains a long cycle (and therefore also a long path).

\begin{theorem} \label{theorem:circumference}
Let~$G$ be a triconnected graph on~$n \geq 3$ vertices and let~$\ell$ be its circumference.
\begin{enumerate}[(a),noitemsep]
	\item If~$G$ is planar, then~$\ell \geq n^{\log_3 2}$.~\cite{ChenY02}
	\item If~$G$ is~$K_{3,t}$-minor free, then~$\ell \geq (1/2)^{t(t-1)} n^{\log _{1729} 2}$.~\cite{ChenYZ12}
	\item If~$G$ is claw-free, then~$\ell \geq (n/12)^{0.753} + 2$.~\cite{BilinskiJMY11}
	\item If~$G$ has maximum degree at most~$\Delta \geq 4$, then~$\ell \geq n^{\log _{r} 2}/2 + 3$, where~$r := \max (64, 4\Delta + 1)$.~\cite{ChenGYZ06}.
\end{enumerate}
\end{theorem}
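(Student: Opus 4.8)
The four items of \thmref{theorem:circumference} are known results from the literature, so rather than reprove them I would simply cite \cite{ChenY02}, \cite{ChenYZ12}, \cite{BilinskiJMY11}, and \cite{ChenGYZ06}; what follows is only a sketch of the common shape such arguments take. Each item gives a polynomial lower bound $\ell \geq c\cdot n^{\alpha}$ on the circumference of an $n$-vertex triconnected graph from the class in question, where the exponent $\alpha<1$ is the \emph{shortness exponent} of that class, and the natural route to such a bound is a divide-and-conquer induction whose branching factor and size-shrinkage pin down $\alpha$.

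The plan would be, first, to replace the statement ``every triconnected $G$ has a long cycle'' by a stronger inductive statement about $2$-connected graphs — plane graphs, or graphs carrying the relevant excluded-minor or bounded-degree structure — equipped with a designated edge $e$ on the outer face, asking for a cycle \emph{through~$e$} of length at least $c\cdot n^{\alpha}$. This strengthening is what makes an induction go through, since it supplies the handles needed to paste together the cycles returned by the recursive calls, in the spirit of Tutte's theorem on Hamilton cycles in $4$-connected planar graphs and of Tutte-path / bridge-decomposition arguments. Given an instance $(G,e)$, one then locates a separation of order at most three — a $3$-cut, or a $2$-cut together with a chord, or a bridge of a suitable cycle — splitting $V(G)$ so that at least two of the resulting pieces, each on at most $n/q$ vertices and each extended by the separator and a fresh designated edge, are again valid instances of the strengthened claim; here $q=3$ for planar graphs, $q=\max(64,4\Delta+1)$ in the bounded-degree case, and analogously for the others. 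Applying the induction hypothesis to two such pieces yields two cycles of length $\geq c\,(n/q)^{\alpha}$, and gluing them through the separator produces a cycle of $G$ of length $\geq 2c\,(n/q)^{\alpha}=c\,n^{\alpha}$ once $\alpha=\log_q 2$; the additive constants in (c) and (d) absorb the small-$n$ base cases, the factor $(1/2)^{t(t-1)}$ in (b) is the price of trading planarity for $K_{3,t}$-minor-freeness, and (c) is obtained in the cited work by a local analysis that routes the claw-free case through a statement about triconnected graphs and line graphs.

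The genuinely hard part — and the reason these are four separate papers rather than corollaries of one another — is the combination step: one must ensure that the recursively produced cycles meet the separator in a compatible cyclic pattern, which forces careful bookkeeping of which separator vertices each sub-cycle uses and in what order, and a correspondingly delicate choice of the designated edge handed down each branch. For the present paper \thmref{theorem:circumference} is used only as a black box — a Tutte decomposition with a torso on $\Omega(k^{\Oh(1)})$ vertices yields a cycle, and hence a path, of length at least $k$ — so citing the four sources suffices and no self-contained proof is needed here.
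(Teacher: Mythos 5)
Your approach matches the paper's exactly: Theorem~\ref{theorem:circumference} is stated as a collection of known circumference lower bounds and is used purely as a black box, so the paper gives no proof beyond citing \cite{ChenY02}, \cite{ChenYZ12}, \cite{BilinskiJMY11}, and \cite{ChenGYZ06}. Your sketch of the Tutte-path-style divide-and-conquer behind shortness-exponent bounds is a reasonable contextual gloss, but (as you correctly observe) no self-contained proof is needed here.
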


\subsection{Running times and kernel sizes}
Our algorithms need information about paths and cycles through substructures of the input graph to safely reduce its size without affecting the existence of a solution. Within the framework of Turing kernelization, which is defined with respect to decision oracles that only give \yes/\no answers, we therefore need self-reduction techniques to transform decision algorithms into construction algorithms. The repeated calls to the oracle in the self-reduction contribute significantly to the running time. In practice, it may well be possible to run a direct algorithm to compute the required information (such as the length of a longest $xy$-path, for given~$x$ and~$y$) directly, thereby avoiding the repetition inherent in a self-reduction, to give a better running time. To stay within the formal framework of Turing kernelization we will avoid making assumptions about the existence of such direct algorithms, however, and rely on self-reduction. Since the running time estimates obtained in this way are higher than what would be reasonable in an implementation using a direct algorithm, running time bounds using self-reduction are not very informative beyond the fact that they are polynomial. For this reason we will content ourselves with obtaining polynomial running time bounds in this paper, without analyzing the degree of the polynomial in detail. 

Similar issues exist concerning the size of the kernel, i.e., the size of the instances for which the oracle is queried. For \kCycle on planar graphs, we give explicit size bounds (Theorem~\ref{theorem:planarkcycle}). For \kCycle on other graph families, and for \kPath, we use an NP-completeness transformation to allow the path- or cycle oracle to compute structures such as longest $xy$-paths. These transformations blow up the size of the query instance by a polynomial factor. However, in practice one might be able to use a direct algorithm to compute this information, thereby avoiding the NP-completeness transformation and the associated blowup of kernel size. For this reason it is not very interesting to compute the degree of the polynomial in the kernel size for the cases that NP-completeness transformations are involved. Therefore we only give an explicit size bound for planar \kCycle, where these issues are avoided.

\section{Turing kernelization for finding cycles} \label{section:cycles}
In this section we show how to obtain polynomial Turing kernels for \kCycle on various restricted graph families. After discussing some properties of cycles in Section~\ref{section:cycleproperties}, we start with the planar case in Section~\ref{section:planarkcycle}. In Section~\ref{section:otherkcycle} we show how to adapt the strategy for $K_{3,t}$-minor-free, claw-free, and bounded-degree graphs.

\subsection{Properties of cycles} \label{section:cycleproperties}

We present several properties of cycles that will be used in the Turing kernelization. Recall that a $k$-cycle is a cycle with at least~$k$ edges. The following lemma shows that, after testing one side of an order-two separation for having a $k$-cycle, we may safely remove vertices from that side as long as we preserve a maximum-length path connecting the two vertices in the separator.

\begin{lemma} \label{lemma:cycles:through:separation}
Let~$A, B \subseteq V(G)$ be a separation of order two of a graph~$G$ with~$A \cap B = \{x,y\}$. Let~$V(\P_A)$ be the vertices on a maximum-length $xy$-path~$\P_A$ in~$G[A]$, or~$\emptyset$ if no such path exists. If~$G$ has a $k$-cycle, then~$G[A]$ has a $k$-cycle or~$G[V(\P_A) \cup B]$ has a $k$-cycle.
\end{lemma}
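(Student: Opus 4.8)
The plan is to start from an arbitrary $k$-cycle~$C$ in~$G$ and rebuild it, replacing whatever it uses on the $A$-side by the single fixed path~$\P_A$. First I would clear away the two degenerate cases. If $V(C)\subseteq A$, then $C$ is already a $k$-cycle in~$G[A]$ and we are done; if $V(C)\subseteq B$, then $C$ is a $k$-cycle in~$G[B]\subseteq G[V(\P_A)\cup B]$ and we are again done. So from now on I would assume $C$ contains a vertex $a\in A\setminus B$ and a vertex $b\in B\setminus A$.

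The core of the argument is a decomposition of~$C$ along the separator. Since $\{x,y\}=A\cap B$ separates $a$ from~$b$ in~$G$ (the separation has no edges between $A\setminus B$ and $B\setminus A$), it also separates $a$ from~$b$ inside the subgraph~$C$; as deleting at most one vertex from a cycle leaves a connected graph on the remaining vertices, $C$ must in fact contain both $x$ and~$y$. Hence $C$ is the union of two $xy$-paths $P_1$ and~$P_2$ that meet only in $\{x,y\}$. For each~$i$, the interior of~$P_i$ is a connected subgraph of $G-\{x,y\}$, so it is contained entirely in $A\setminus B$, or entirely in $B\setminus A$, or is empty. Relabel so that $a$ is an interior vertex of~$P_1$; then every interior vertex of~$P_1$ lies in $A\setminus B$, whence $V(P_1)\subseteq A$. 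The vertex~$b$ lies on~$C$ but not in~$A$, so it is an interior vertex of~$P_2$ and not of~$P_1$; therefore every interior vertex of~$P_2$ lies in $B\setminus A$ and $V(P_2)\subseteq B$.

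Now the surgery is immediate: $P_1$ witnesses that an $xy$-path in~$G[A]$ exists, so the maximum-length such path~$\P_A$ is defined and $|E(\P_A)|\geq|E(P_1)|$. Because $V(\P_A)\subseteq A$ and $V(P_2)\subseteq B$ meet only in $A\cap B=\{x,y\}$, which are precisely the shared endpoints of~$\P_A$ and~$P_2$, the union $C':=\P_A\cup P_2$ is again a cycle, with $|E(C')|=|E(\P_A)|+|E(P_2)|\geq|E(P_1)|+|E(P_2)|=|E(C)|\geq k$. Since $V(C')=V(\P_A)\cup V(P_2)\subseteq V(\P_A)\cup B$ and every edge of~$C'$ is an edge of~$G$, the cycle~$C'$ is a $k$-cycle in $G[V(\P_A)\cup B]$, as required.

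The step I expect to need the most care is the middle paragraph: pinning down that~$C$ must visit \emph{both} separator vertices and that its two $xy$-arcs are confined to opposite sides of the separation. Once that structural claim is in place, the rest is a routine length comparison together with the observation that the spliced cycle~$\P_A\cup P_2$ has no repeated vertices, which follows purely from $V(\P_A)\cap V(P_2)\subseteq A\cap B=\{x,y\}$.
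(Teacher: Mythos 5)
Your proof is correct and takes essentially the same approach as the paper: both decompose the given $k$-cycle~$C$ along the separator~$\{x,y\}$ into two arcs (the paper splits~$C$ into two internally disjoint $ab$-paths and then identifies $E(C)\cap E(G[A])$ as an $xy$-path, whereas you split~$C$ directly into its two $xy$-arcs and show each arc is confined to one side), and then swap the $A$-side arc for the maximum-length path~$\P_A$. Your version is a touch more explicit about why $C$ must contain both~$x$ and~$y$, but the underlying argument is the same.
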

\begin{proof}
Assume that~$G$ has a $k$-cycle~$C$ with edge set~$E(C)$ and vertex set~$V(C)$. If~$V(C) \subseteq A$ then~$G[A]$ contains the $k$-cycle~$C$ and we are done. Similarly, if~$V(C) \subseteq V(\P_A) \cup B$ then the graph~$G[V(\P_A) \cup B]$ contains the $k$-cycle~$C$ and we are done. We may therefore assume that~$C$ contains at least one vertex~$a \in A \setminus B$ and one vertex~$b \in B \setminus A$. Since a cycle provides two internally vertex-disjoint paths between any pair of vertices on it,~$C$ contains two internally vertex-disjoint paths between~$a$ and~$b$. Since~$\{x,y\} = A \cap B$ separates vertices~$a$ and~$b$ by the definition of a separation, each of these two vertex-disjoint paths contains exactly one vertex of~$\{x,y\}$. Hence~$E(C) \cap E(G[A])$ is the concatenation of an~$xa$ and an~$ya$ path in~$G[A]$, and therefore forms an $xy$-path in~$G[A]$. Since~$\P_A$ is a maximum-length $xy$-path in~$G[A]$, the number of edges on~$\P_A$ is at least~$|E(C) \cap E(G[A])|$. Replacing the $xy$-subpath~$E(C) \cap E(G[A])$ of~$C$ by the $xy$-path~$\P_A$ we obtain a new cycle, since all edges of~$G[A]$ that were used on~$C$ are replaced by edges of~$\P_A$. As~$\P_A$ has maximum length, this replacement does not decrease the length of the cycle. Hence the resulting cycle is a $k$-cycle on a vertex subset of~$G[V(\P_A) \cup B]$, which concludes the proof.
\end{proof}

We show how to use an oracle for the decision version of \kCycle to construct longest $xy$-paths, by \emph{self-reduction} (cf.~\cite{FellowsL88}). These paths can be used with the previous lemma to find vertices that can be removed from the graph while preserving a $k$-cycle, if one exists.

\begin{lemma} \label{lemma:selfreduce:xypath}
There is an algorithm that, given an $n$-vertex graph~$G$ with distinct vertices~$x$ and~$y$, and an integer~$k$, either:
\begin{enumerate}
	\item determines that~$G$ contains a~$k$-cycle, or \label{outcome:i:kcycle}
	\item determines that~$G$ contains an $xy$-path of length at least~$k - 1$, or \label{outcome:ii:longpath}
	\item outputs the (unordered) vertex set of a maximum-length $xy$-path in~$G$ (or~$\emptyset$ if no such path exists). \label{outcome:iii:pathvertices}
\end{enumerate}
The algorithm runs in polynomial time when given access to an oracle that decides the \kCycle problem. The oracle is queried for instances~$(G',k)$ with $|V(G')| \leq n + k$, where~$G'$ is an $xy$-extension of~$G$.
\end{lemma}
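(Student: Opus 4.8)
The plan is to reduce everything to oracle calls on graphs obtained from $G$ by deleting vertices and attaching a single fresh $xy$-path, exploiting that such a path behaves rigidly inside a cycle. For $m\ge 1$ let $H_m$ be the $xy$-extension of $G$ obtained by adding a fresh $xy$-path of length $m$ (the edge $\{x,y\}$ for $m=1$, and $m-1$ new degree-two vertices for $m\ge 2$). Since the new interior vertices have degree two, every cycle of $H_m$ either lies in $G$ or uses all of the added path; hence $H_m$ has a $k$-cycle if and only if $G$ has a $k$-cycle or $G$ has an $xy$-path of length at least $k-m$, and $|V(H_m)|\le n+k$ whenever $m\le k+1$. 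The first step is to query $(H_1,k)=(G+\{x,y\},k)$. I will assume $k$ exceeds a small absolute constant so that this and the queries below are well-defined; the remaining finitely many cases are handled by brute force.

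If the oracle answers \no, then $G$ has no $k$-cycle and no $xy$-path of length at least $k-1$, and I aim for outcome~3. By binary search over $m\in\{1,\dots,k-1\}$ I find the largest $m^{\dagger}$ for which $(H_{m^{\dagger}},k)$ is a \no-instance; if all these are \no-instances then $G$ has no $xy$-path and I output $\emptyset$. Otherwise, using the characterization above and the monotonicity in $m$ of ``$G$ has an $xy$-path of length at least $k-m$'', the maximum length of an $xy$-path in $G$ is exactly $\lambda:=k-m^{\dagger}-1\le k-2$. Now I run a vertex-deletion self-reduction: starting from $G':=G$, for each vertex $v\notin\{x,y\}$ I query $((G'-v)+(\text{fresh }xy\text{-path of length }k-\lambda),k)$, which (as $G'$ has no $k$-cycle) is a \yes-instance precisely when $G'-v$ still has an $xy$-path of length at least $\lambda$, and I delete $v$ from $G'$ exactly when the answer is \yes. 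At the end $G'$ still carries an $xy$-path of length $\lambda$, but every vertex other than $x,y$ lies on all of them, so $V(G')$ is the vertex set of a maximum-length $xy$-path, which I output. All queried graphs are $xy$-extensions of $G$ on at most $n+k$ vertices.

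If the oracle answers \yes, then $G$ has a $k$-cycle or an $xy$-path of length at least $k-1$ (possibly both) and I must choose between outcomes~1 and~2. If $\{x,y\}\in E(G)$ then $H_1=G$, so $G$ itself has a $k$-cycle and I report outcome~1; so assume $\{x,y\}\notin E(G)$. I run a vertex-deletion self-reduction with target ``$G[W]+\{x,y\}$ has a $k$-cycle'': starting from $W:=V(G)$, for each $v\notin\{x,y\}$ I query $(G[W\setminus\{v\}]+\{x,y\},k)$ and delete $v$ from $W$ when the answer is \yes. The resulting $W$ is inclusion-minimal with this property, hence every $k$-cycle of $G[W]+\{x,y\}$ uses every vertex of $W\setminus\{x,y\}$; if moreover it uses the added edge then it spans $W$, and deleting that edge turns it into an $xy$-path of $G[W]$ of length $|W|-1\ge k-1$. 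Therefore it suffices to decide whether $G[W]$ itself contains a $k$-cycle: if it does, I report outcome~1 (valid since $G[W]\subseteq G$); if it does not, then the $k$-cycle guaranteed in $G[W]+\{x,y\}$ must use the added edge, giving an $xy$-path of length at least $k-1$ in $G[W]\subseteq G$, so I report outcome~2.

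The main obstacle is exactly this last decision: testing whether $G[W]$ has a $k$-cycle using only oracle calls that are $xy$-extensions of $G$ of size at most $n+k$. The difficulty is that any such call on $G[W']+P$ reports a $k$-cycle as soon as $G[W']$ has \emph{either} a $k$-cycle \emph{or} a sufficiently long $xy$-path, so the long $xy$-paths inside $G[W]$ contaminate the test, and one cannot simply delete $x$ or $y$ to suppress them. I expect to resolve this with a secondary reduction that removes this contamination: for instance by further shrinking the witness (using its minimality and the rigidity of the attached path) until a fresh $xy$-path long enough to force the alternative can be attached without exceeding $n+k$ vertices, or by an iterative argument separating the vertices forced onto every $k$-cycle from those forced onto every long $xy$-path. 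Getting this separation right while keeping every queried instance an $xy$-extension of $G$ of size at most $n+k$ is the crux; the remaining work --- polynomial running time, the form and size of all queries, and correctness of the two self-reductions --- is routine bookkeeping.
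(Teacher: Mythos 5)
Your handling of the \no-case (first query answers~\no) is correct and essentially matches the paper, modulo a harmless binary search in place of a linear scan. But the \yes-case is where you identify ``the crux'' and leave it unresolved, and that is a genuine gap: the ``secondary reduction'' you sketch cannot work, because no choice of attached $xy$-path removes the contamination. For any $xy$-extension $G[W']+P$ with $P$ of length $m$, the oracle says \yes precisely when $G[W']$ has a $k$-cycle \emph{or} an $xy$-path of length at least $k-m$; taking $m$ small does not suppress the second disjunct, and taking $m$ large makes it vacuously true. So you can never isolate the first disjunct by this kind of query, and the witness-minimization on $W$ does not help because you still end up needing to decide ``does $G[W]$ have a $k$-cycle'' by the same contaminated means.

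The step you are missing is disarmingly simple, and it is exactly what the paper does: \emph{query $(G,k)$ itself first}, with no added path. If the answer is \yes, report outcome~\ref{outcome:i:kcycle} and halt. Only if $G$ has no $k$-cycle do you then query $(G+\{x,y\},k)$; at that point a \yes answer \emph{must} come from a cycle using the added edge, giving outcome~\ref{outcome:ii:longpath} cleanly, and every later query $G_\ell$ with a fresh $xy$-path is likewise decontaminated because $G$ (and hence every induced subgraph of $G$) has no $k$-cycle. By instead opening with $(G+\{x,y\},k)$ you entangled the two alternatives and created the problem you then could not solve. Everything else you wrote (the rigidity of the fresh path, the characterization of when $H_m$ has a $k$-cycle, the binary search for the maximum $xy$-path length, and the vertex-deletion self-reduction) is sound and would carry through once the initial query is put in the right order.
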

\begin{proof}
Given an input~$(G,k,x,y)$ we proceed as follows. The algorithm first invokes the \kCycle oracle with the instance~$(G,k)$ to query whether~$G$ has a $k$-cycle. If this is the case, the algorithm reports this and halts with outcome~\ref{outcome:i:kcycle}. If~$x$ and~$y$ belong to different connected components, the algorithm returns the empty set (no $xy$-path exists) and halts with outcome~\ref{outcome:iii:pathvertices}. In the remainder we therefore assume that~$G$ contains an $xy$-path but no $k$-cycle. 

The algorithm adds the edge~$\{x,y\}$ to the graph (if it was not present already) to obtain~$G_0$ and queries whether~$(G_0, k)$ has a $k$-cycle. If this is the case, then~$G$ contains an $xy$-path of length at least~$k-1$: since~$(G_0,k)$ contains a $k$-cycle but~$(G,k)$ does not, the edge~$\{x,y\}$ must be used in any $k$-cycle in~$(G_0, k)$. Removing the edge~$\{x,y\}$ from a $k$-cycle leaves an $xy$-path of length at least~$k-1$. Hence in this case we may report that~$G$ contains an $xy$-path of length at least~$k-1$, according to case~\ref{outcome:ii:longpath}. If~$(G_0,k)$ does not have a $k$-cycle then it is easy to see that the maximum length of an $xy$-path in~$G$ is less than~$k$. The goal of the algorithm now is to identify a maximum-length $xy$-path. The remainder of the procedure consists of two phases: determining the maximum length and finding the path. 

\emph{Determining the length.} To determine the maximum length, we proceed as follows. We create a sequence of graphs~$G_1, \ldots, G_{k-2}$ where~$G_\ell$ is obtained from~$G$ by adding~$\ell$ new vertices~$v_1, \ldots, v_\ell$ to~$G$, along with the edges~$\{v_i, v_{i+1}\}$ for~$i \in [\ell-1]$ and the edges~$\{x,v_1\}$ and~$\{y,v_\ell\}$. The inserted vertices, together with~$x$ and~$y$, form an $xy$-path of length~$\ell+1$. For each graph~$G_\ell$ we invoke the oracle for~$(G_\ell, k)$ to determine whether~$G_\ell$ has a $k$-cycle. Let~$\ell^*$ be the smallest index for which the oracle reports the existence of a $k$-cycle. This is well defined since~$(G_{k-2}, k)$ contains a $k$-cycle that consists of an arbitrary $xy$-path in~$G$ together with the $xy$-path of length~$k-1$ through the new vertices~$v_1, \ldots, v_{k-2}$. Since the circumference of~$G_{i+1}$ is at most the circumference of~$G_i$ plus one, it follows that the circumference of~$G_{\ell^*}$ is \emph{exactly}~$k$ and thus that any $k$-cycle in~$G_{\ell^*}$ has length exactly~$k$; we shall use this fact later. If~$\ell^*$ is the smallest index such that~$G_{\ell^*}$ has a $k$-cycle, then the length of a longest $xy$-path in~$G$ is~$k^* := k - (\ell^* + 1) \geq 1$. Hence by querying the $k$-cycle oracle for the instances~$(G_1, k), \ldots, (G_{k-2},k)$ the algorithm determines the value of~$l^*$ and, simultaneously, the maximum length~$k^*$ of an $xy$-path in~$G$.

\emph{Finding the path.} Using the value of~$l^*$ the algorithm identifies a maximum-length $xy$-path as follows. Set~$H_0 := G_{\ell^*}$. We order the vertices of~$H_0$ from one to~$n + \ell^*$ as~$u_1, \ldots, u_{n + \ell^*}$ and perform the following steps for~$i \in [n + \ell^*]$. Query the oracle for~$(H_{i-1} - u_i, k)$ to determine if~$H_{i-1}$ has a $k$-cycle that does not use~$u_i$. If the oracle answers \yes, let~$H_i := H_{i-1} - u_i$, otherwise let~$H_i := H_{i-1}$. Since~$H_0$ contains a $k$-cycle and the algorithm maintains this property, the final graph~$H_{n + \ell^*}$ has a $k$-cycle. Since vertex~$u_i$ is removed from~$H_{i-1}$ if~$H_{i-1}$ contains a $k$-cycle avoiding~$u_i$, we know that for each vertex in~$H_{n+\ell^*}$ there is no $k$-cycle in~$H_{n+\ell^*}$ without that vertex. As the circumference of~$H_0$ is exactly~$k$, it follows that~$H_{n+\ell^*}$ consists of the vertex set of a cycle of length exactly~$k$; it is a Hamiltonian graph on~$k$ vertices. As~$G_{\ell^*-1}$ does not have a $k$-cycle, all $k$-cycles in~$G_{\ell^*} = H_0$ contain the vertices~$v_1, \ldots, v_{\ell^*}$ on the inserted~$xy$-path, and therefore the graph~$H_{n+\ell^*}$ contains all these vertices. Removing these~$\ell^*$ vertices from~$H_{n+\ell^*}$ yields the vertex set of an $xy$-path in~$G$ of length~$k - (\ell^* + 1) = k^*$, which is a maximum-length $xy$-path in~$G$ as observed above. The vertex set is given as the output for case~\ref{outcome:iii:pathvertices}. 

Let us verify that the oracle queries made by the algorithm are of the required form. The first oracle queries are made for~$G$, and for~$G$ with the edge~$\{x,y\}$ inserted. During the length-determining phase, all query graphs consist of~$G$ with an extra $xy$-path of length at least one (on at most~$k-2$ vertices) inserted. In the second phase, the query graphs consist of induced subgraphs of~$H_0 = G_{\ell^*}$. Since the latter is~$G$ with an extra $xy$-path, the queries indeed take the stated form. Since the total number of queries made by the algorithm is~$\Oh(k)$ in the first phase and~$\Oh(k+n)$ in the second phase, the running time is polynomial using constant-time access to the oracle.
\end{proof}

\begin{figure}[t]
\begin{center}
\subfigure[$A \setminus V(\P) \neq \emptyset$]{\label{fig:largecomponent}
\includegraphics{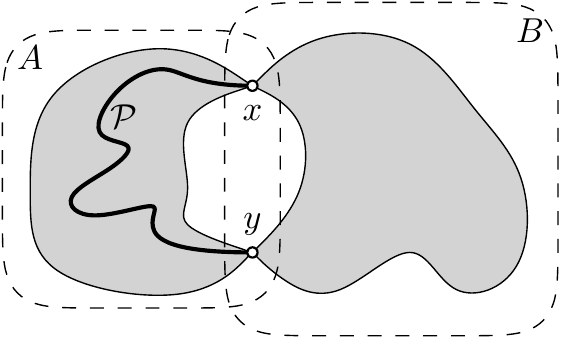}
}
\subfigure[{$G[A] - \{x,y\}$ is disconnected}]{\label{fig:twocomponents}
\includegraphics{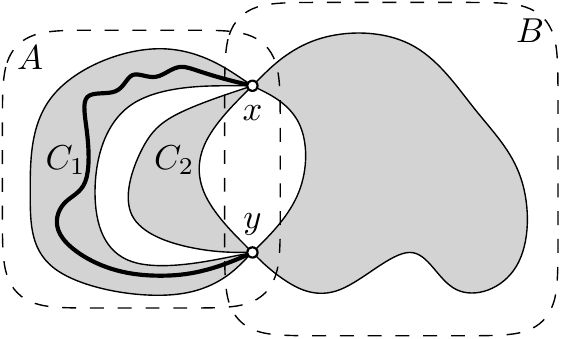}
}
\caption{Schematic illustration of how an instance of $k$-\textsc{Cycle} can be reduced based on a separation~$(A,B)$ of order two, with the corresponding separator~$\{x,y\} = A \cap B$. If~$G[A]$ does not have a $k$-cycle and~$\P$ is a maximum-length $xy$-path in~$G[A]$ (drawn in bold), then the answer to the $k$-\textsc{Cycle} problem is preserved when removing the vertices of~$A \setminus V(\P)$ from the graph. If there is a vertex in~$A \setminus V(\P)$, then this operation shrinks the instance. \ref{fig:largecomponent} If~$|A| \geq k$ and~$|V(\P)| < k$, then the instance is guaranteed to shrink. \ref{fig:twocomponents} If~$G[A] \setminus \{x,y\}$ consists of two connected components~$C_1$ and~$C_2$, then the path~$\P$ is contained entirely within one such component. Removing the vertices of~$A \setminus V(\P)$ therefore eliminates at least one component of~$G[A] - \{x,y\}$ from the input instance.}
\end{center}
\label{figure:reduction:cycle}
\end{figure}

\noindent When the self-reduction algorithm detects a long $xy$-path for a minimal separator~$\{x,y\}$, the following proposition proves that there is in fact a long cycle.

\begin{proposition} \label{proposition:xyseparator:path:gives:cycle}
If~$\{x,y\}$ is a minimal separator of a graph~$G$ and~$G$ contains an $xy$-path of length~$k \geq 2$, then~$G$ contains a $k+1$-cycle.
\end{proposition}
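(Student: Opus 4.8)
The plan is to produce the cycle by gluing the given $xy$-path to a second, internally disjoint $xy$-path routed through a different component of $G-\{x,y\}$; this automatically yields a cycle strictly longer than the given path.

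First I would reduce to the connected case. Since $G$ contains an $xy$-path, the vertices $x$ and $y$ lie in a common component $G'$ of $G$, and by definition $\{x,y\}$ being a minimal separator of $G$ means it is a minimal separator of $G'$. Any cycle we find inside $G'$ is also a cycle of $G$, so we may assume $G = G'$ is connected; thus $G-\{x,y\}$ is disconnected while $G-x$ and $G-y$ are connected.

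Next I would locate the given path on one side of the separator. Write the given $xy$-path as $P = x, p_1, \dots, p_{k-1}, y$. Because $k \geq 2$ it has at least one interior vertex, and the interior vertices $p_1, \dots, p_{k-1}$ induce a connected subpath of $G$ that uses no edge incident to $x$ or $y$; hence they all lie in a single connected component $C_1$ of $G-\{x,y\}$. By minimality of the separator, $G-\{x,y\}$ has at least two components, and every component is adjacent to both $x$ and $y$ --- this is exactly the minimality fact invoked in the proofs of \proposref{proposition:connect:separator:is:planar} and \proposref{proposition:minimalsep:in:tutte:edge:in:torso} (a component touching only one of $x,y$ would remain disconnected from the rest of $G$ after deleting that single vertex, contradicting connectivity of $G-x$ or $G-y$). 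I would then pick a component $C_2 \neq C_1$; since $C_2$ is connected and adjacent to both $x$ and $y$, there is an $xy$-path $Q$ whose interior vertices all lie in $C_2$, and $Q$ has length at least $2$ since it has an interior vertex and $x \neq y$.

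Finally, the interiors of $P$ and $Q$ are disjoint because $C_1 \cap C_2 = \emptyset$, so $P$ and $Q$ together form a cycle through $x$ and $y$ whose length is the length of $P$ plus the length of $Q$, that is, $k + \ell$ with $\ell \geq 2$. Such a cycle has at least $k+2 \geq k+1$ edges and is therefore a $(k+1)$-cycle in the paper's sense (a cycle with at least $k+1$ edges). I do not expect a genuine obstacle: the single point needing justification is that every component of $G-\{x,y\}$ is adjacent to both $x$ and $y$, which is the standard minimality observation already used elsewhere in the paper, and the one idea that keeps the proof short is noticing that the whole interior of $P$ is confined to a single component of $G-\{x,y\}$.
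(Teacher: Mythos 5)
Your proposal is correct and follows essentially the same route as the paper: confine the interior of the given path to one component of $G-\{x,y\}$, use minimality to find a second component adjacent to both $x$ and $y$, and close the cycle with an $xy$-path through it. The only cosmetic difference is that the paper first splits off the case $\{x,y\}\in E(G)$, which your argument handles uniformly.
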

\begin{proof}
Assume the stated conditions hold and let~$\P$ be an $xy$-path of length~$k \geq 2$ in~$G$, which implies it is not a single edge. If~$\{x,y\}$ is an edge of~$G$ then this edge completes~$\P$ into a cycle of length at least~$k+1$ and we are done. Assume therefore that~$\{x,y\}$ is not an edge of~$G$. The interior of the $xy$-path~$\P$, which consists of at least one vertex as~$\P$ has length at least two, is contained entirely within one connected component~$C_\P$ of~$G - \{x,y\}$. Since removal of~$\{x,y\}$ increases the number of connected components (by the definition of minimal separator), there is at least one other connected component~$C'$ of~$G - \{x,y\}$ that is adjacent to vertex~$x$ or vertex~$y$. If~$C'$ is adjacent to only one of~$\{x,y\}$, then that vertex would be a cut vertex, contradicting minimality of the separator~$\{x,y\}$. Component~$C'$ is therefore adjacent to both~$x$ and~$y$ and therefore contains an $xy$-path~$\P'$. Since the interior vertices on this path lie in~$C' \neq C_\P$ it follows that the concatenation of~$\P$ and~$\P'$ is a cycle through~$x$ and~$y$ of length greater than~$k$, which completes the proof.
\end{proof}

\subsection{\texorpdfstring{$k$}{k}-Cycle in planar graphs} \label{section:planarkcycle}

In this section we present the Turing kernelization algorithm for \kCycle on planar graphs. Before giving the overall kernelization routine, we develop the main reduction method. The three statements in Section~\ref{section:cycleproperties} combine into a Turing-style reduction rule for \kCycle instances~$(G,k)$, as given by Algorithm~\ref{alg:reduce-c}. The algorithm works in the general setting where the input graph~$G$ is potentially already partially reduced to an induced subgraph~$G'$; the goal is to reduce~$G'$ further based on a given separation, without changing whether or not it has a $k$-cycle. The following lemma justifies this approach.

\begin{algorithm}[t]
\caption{\textsc{Reduce-C}$(G, G', A, B, x, y, k)$} \label{alg:reduce-c}
\begin{algorithmic}[1]
\REQUIRE{$G'$ is an induced subgraph of~$G$, $(A,B)$ is a separation in~$G'$ with~$A \cap B = \{x,y\}$, and~$\{x,y\}$ is a minimal separator in~$G$.}
\ENSURE{The existence of a $k$-cycle in~$G$ is reported, or the graph~$G'$ is updated by removing all but~$< k$ vertices of~$A \setminus B$. Upon completion, the graph~$G'[A] - \{x,y\}$ has at most one connected component. If~$G'$ initially contained a $k$-cycle, then the deletions preserve this property.}
\vspace{0.2cm}
	\STATE Query the $k$-cycle oracle to determine whether~$G'[A]$ has a $k$-cycle\label{line:kcycle:oracleq:onechild}
	\STATE Find vertices~$S$ of max.~$xy$-path by invoking Lemma~\ref{lemma:selfreduce:xypath} on~$(G'[A], k, x, y)$
	\IF {oracle answers \yes or Lemma~\ref{lemma:selfreduce:xypath} reports $xy$-path of length~$\geq k-1$}
		\STATE Report the existence of a $k$-cycle in~$G$ and halt\label{line:kcycle:answeryes}
	\ELSE
		\STATE Remove the vertices~$A \setminus (S \cup \{x,y\})$ from~$G'$ \label{line:kcycle:restrictbag}
	\ENDIF
\end{algorithmic}
\end{algorithm}

\begin{lemma} \label{lemma:reduce-c:correct}
Algorithm~\ref{alg:reduce-c} satisfies its specifications. It calls the $k$-cycle oracle for $xy$-extensions of~$G'[A]$ with at most~$|A| + k$ vertices.
\end{lemma}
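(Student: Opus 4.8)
The plan is to verify each clause of the postcondition in turn, using the three results of Section~\ref{section:cycleproperties} as black boxes. First I would check that the preconditions needed to invoke those results are met. The precondition of the lemma states that $\{x,y\}$ is a minimal separator in $G$, so Lemma~\ref{lemma:selfreduce:xypath} may legitimately be called on $(G'[A],k,x,y)$ (that lemma needs no separator hypothesis), and Lemma~\ref{lemma:cycles:through:separation} applies to the separation $(A,B)$ of $G'$ since it has order two with $A \cap B = \{x,y\}$. The algorithm makes exactly two oracle-using calls: the direct query on line~\ref{line:kcycle:oracleq:onechild} for the instance $(G'[A],k)$, and the invocation of Lemma~\ref{lemma:selfreduce:xypath} on $(G'[A],k,x,y)$. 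The former queries $G'[A]$ itself, which is (trivially) an $xy$-extension of $G'[A]$ on $|A|$ vertices; the latter, by the statement of Lemma~\ref{lemma:selfreduce:xypath}, queries only $xy$-extensions of $G'[A]$ on at most $|A| + k$ vertices. This establishes the final sentence of the lemma.

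Next I would argue correctness of the three possible exits. If line~\ref{line:kcycle:oracleq:onechild} answers \yes, then $G'[A]$ — hence $G'$, hence $G$ — contains a $k$-cycle, so reporting this on line~\ref{line:kcycle:answeryes} is sound. If Lemma~\ref{lemma:selfreduce:xypath} returns outcome~\ref{outcome:i:kcycle} (a $k$-cycle in $G'[A]$) the same conclusion holds. If it returns outcome~\ref{outcome:ii:longpath}, then $G'[A] \subseteq G$ contains an $xy$-path of length at least $k-1 \geq 1$; since $\{x,y\}$ is a minimal separator of $G$, Proposition~\ref{proposition:xyseparator:path:gives:cycle} (applied with the path length at least $k-1 \geq 2$, using that $k \geq 3$ in the non-trivial regime; the degenerate small-$k$ cases are handled directly) yields a $k$-cycle in $G$, so again the report is correct. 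In the remaining case, outcome~\ref{outcome:iii:pathvertices} gives the vertex set $S$ of a maximum-length $xy$-path $\P_A$ in $G'[A]$ (possibly $S = \emptyset$), and the algorithm proceeds to line~\ref{line:kcycle:restrictbag}.

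The heart of the proof is the behaviour on line~\ref{line:kcycle:restrictbag}. I would show three things about the updated graph. (i) \textbf{Size bound:} the vertices retained from $A \setminus B$ are exactly those in $S \setminus \{x,y\} = V(\P_A) \setminus \{x,y\}$; since we are in the branch where Lemma~\ref{lemma:selfreduce:xypath} did \emph{not} report an $xy$-path of length $\geq k-1$, $\P_A$ has length at most $k-2$, so $|V(\P_A)| \leq k-1$ and fewer than $k$ vertices of $A \setminus B$ survive. (ii) \textbf{At most one component:} after the deletion, $G'[A] - \{x,y\}$ consists only of the interior vertices of $\P_A$, which form a connected subgraph (a subpath), so there is at most one connected component — here I would also note the edge case where $\P_A$ has length one, in which case $A \setminus \{x,y\}$ becomes empty. (iii) \textbf{Preservation of a $k$-cycle:} suppose $G'$ (before the update) had a $k$-cycle. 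Since we did not exit early, $G'[A]$ has no $k$-cycle, so by Lemma~\ref{lemma:cycles:through:separation} applied to the separation $(A,B)$ of $G'$, the graph $G'[V(\P_A) \cup B]$ has a $k$-cycle; but $G'[V(\P_A) \cup B]$ is precisely the graph obtained from $G'$ by deleting $A \setminus (S \cup \{x,y\}) = (A \setminus B) \setminus V(\P_A)$, which is the updated $G'$. Hence the $k$-cycle survives. The main obstacle I anticipate is purely bookkeeping: making sure that the set removed on line~\ref{line:kcycle:restrictbag}, namely $A \setminus (S \cup \{x,y\})$, coincides exactly with $(A \setminus B) \setminus V(\P_A)$ so that the deleted graph matches the graph $G[V(\P_A) \cup B]$ appearing in Lemma~\ref{lemma:cycles:through:separation} — this hinges on $S \cup \{x,y\} = V(\P_A) \cup \{x,y\}$ together with $A \cap B = \{x,y\} \subseteq B$, and on confirming that the separation structure is read off $G'$ (not $G$) throughout, matching the hypotheses of Lemma~\ref{lemma:cycles:through:separation}.
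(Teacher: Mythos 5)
Your proposal is correct and follows essentially the same route as the paper's own proof: the same three building blocks (Lemma~\ref{lemma:selfreduce:xypath} for the oracle-query bound, Proposition~\ref{proposition:xyseparator:path:gives:cycle} to turn a long $xy$-path into a $k$-cycle across the minimal separator, and Lemma~\ref{lemma:cycles:through:separation} to justify the deletion) are deployed in the same order to verify each clause of the postcondition. The extra care you take with the $k\geq 3$ corner case for Proposition~\ref{proposition:xyseparator:path:gives:cycle} and with the $S=\emptyset$ / length-one edge cases is a harmless refinement of what the paper states more tersely.
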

\begin{proof}
Consider the actions of the algorithm on an input satisfying the precondition. We first establish that the algorithm is correct if it reports a $k$-cycle. If the oracle detects a $k$-cycle in~$G'[A]$, then clearly~$G'$ and its supergraph~$G$ have a $k$-cycle as well. If Lemma~\ref{lemma:selfreduce:xypath} yields an $xy$-path of length at least~$k-1$ in~$G'[A]$, then its supergraph~$G$ contains an $xy$-path of length at least~$k-1$. Since~$\{x,y\}$ is a minimal separator in~$G$, Proposition~\ref{proposition:xyseparator:path:gives:cycle} yields a $k$-cycle in~$G$. 

It remains to consider the correctness when no $k$-cycle is detected. Since~$S$ is the vertex set of a maximum-length $xy$-path in~$G'[A]$, the absence of a $k$-cycle in~$G'[A]$ implies by Lemma~\ref{lemma:cycles:through:separation} that~$G'$ has a $k$-cycle if and only if~$G'[S \cup B]$ has a $k$-cycle. Hence the algorithm may safely delete the vertices of~$A \setminus (S \cup \{x,y\})$ without changing the existence of a $k$-cycle. (We take the union with~$\{x,y\}$ to prevent them from being deleted when~$S = \emptyset$, which occurs when there is no $xy$-path.) As the $xy$-path~$S$ has length less than~$k-1$, its vertex set~$|S|$ has less than~$k$ vertices. Hence the algorithm indeed deletes all vertices of~$A \setminus S$ except for less than~$k$ of them. After deleting~$A \setminus (S \cup \{x,y\})$, the only potential connected component of~$G'[A] - \{x,y\}$ is the one containing the interior vertices of the $xy$-path~$S$. Lemma~\ref{lemma:selfreduce:xypath} ensures that the oracle calls are for $xy$-extensions with at most~$|A| + k$ vertices and parameter~$k$. 
\end{proof}

Lemma~\ref{lemma:reduce-c:correct} shows that $k$-cycle instances can be reduced based on suitable separations of order two. Since the size of the oracle queries depends on the $A$-side of the separation, to obtain a polynomial Turing kernel we must reduce the graph based on separations whose $A$-side has size polynomial in~$k$. The key idea behind the following theorem is that either (i) the Tutte decomposition has a large bag, implying by known lower bounds on the circumference of triconnected graphs that the answer to the $k$-cycle problem is \yes, or (ii) we can use the Tutte decomposition to find good separations efficiently.

\begin{theorem} \label{theorem:planarkcycle}
The planar \kCycle problem has a polynomial Turing kernel: it can be solved in polynomial time using an oracle that decides planar \kCycle instances with at most~$(3k + 1) k^{\log _2 3} + k$ vertices and parameter value~$k$.
\end{theorem}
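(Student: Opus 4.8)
The plan is to realize the \emph{Decompose--Query--Reduce} pattern: decompose~$G$ with a Tutte decomposition, short-circuit to \yes whenever a bag is large, and otherwise repeatedly shrink the graph by applying the reduction rule of \lemmaref{lemma:reduce-c:correct} to order-two separations read off the decomposition tree.

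\emph{Decompose and the large-bag shortcut.} Handle $k \le 2$ directly (then a $k$-cycle exists iff $G$ is not a forest). Otherwise use the Hopcroft--Tarjan algorithm to compute in linear time a Tutte decomposition $(T,\X)$ of~$G$ as in \thmref{theorem:tutte}: adhesion at most two, every torso a triconnected topological minor of~$G$, and every edge adhesion a minimal separator of~$G$ or empty. If some bag~$\X(i)$ has at least $k^{\log_2 3}$ vertices (and at least $3$), then $\torso(G,\X(i))$ is a triconnected planar graph on $n_i \ge k^{\log_2 3}$ vertices, so by \thmref{theorem:circumference}(a) it has a cycle of length at least $n_i^{\log_3 2} \ge k$; since the torso is a topological minor of~$G$, \obsref{observation:cycleminor:cyclesubgraph} turns this into a $k$-cycle of~$G$ and we answer \yes. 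Hence we may assume every bag has fewer than $k^{\log_2 3}$ vertices.

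\emph{Reduce.} We maintain an induced subgraph $G'$ of~$G$ (initially $G$) together with an adhesion-$\le 2$ tree decomposition of~$G'$ inherited from $(T,\X)$, preserving the invariants that $G'$ has a $k$-cycle iff~$G$ does, that every bag still has fewer than $k^{\log_2 3}$ vertices, and that every size-two edge adhesion is still a minimal separator of the original graph~$G$. While $|V(G')| > (3k+1)k^{\log_2 3} + k$ we locate a separation $(A,B)$ of order two of~$G'$ whose $A$-side spans at most $3k+1$ bags and whose separator $\{x,y\} = A\cap B$ is a size-two minimal separator of~$G$ — concretely, a subtree of at most $3k+1$ nodes attached to the rest of the decomposition tree by a single edge carrying that adhesion (obtained via \proposref{proposition:separation:from:children} / the standard separation-from-an-edge fact) — and invoke \textsc{Reduce-C}$(G,G',A,B,x,y,k)$. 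Because each bag has fewer than $k^{\log_2 3}$ vertices we get $|A| \le (3k+1)k^{\log_2 3}$, so by \lemmaref{lemma:reduce-c:correct} (relying on \lemmaref{lemma:selfreduce:xypath}) all oracle queries are on $xy$-extensions with at most $(3k+1)k^{\log_2 3}+k$ vertices and parameter~$k$, matching the claimed bound. Either the call certifies a $k$-cycle of~$G$ and we are done, or afterwards $G'[A]-\{x,y\}$ has become a single path on at most $k-1$ vertices, so the whole subtree collapses to one bag of fewer than $k^{\log_2 3}$ vertices; this strictly decreases $|V(G')| + |V(T)|$ and re-establishes the invariants, the crucial point being that the new edge adhesion equals the old $\{x,y\}$ and hence is still a minimal separator of~$G$. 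Two situations require extra handling: edges of the decomposition of adhesion at most one, across which no $k$-cycle with $k\ge 3$ can pass — there we instead query the oracle directly on the (small) side, reporting \yes or deleting its interior vertices; and tree nodes of large degree, where many short pendant pieces hang off one triconnected bag — there we first run \textsc{Reduce-C} on each pendant piece individually to bring it below $k+1$ vertices, after which the bounded number of distinct adhesion pairs (using planarity of the bag's torso together with its pendant structure) lets us either collapse the now-bounded subtree or, when several pieces share an adhesion pair, merge a batch of them via \proposref{proposition:separation:from:children} and one further \textsc{Reduce-C} call within the same size budget, again strictly shrinking the tree.

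\emph{Output and the main obstacle.} The measure $|V(G')|+|V(T)|$ strictly decreases at each step, so after polynomially many steps $|V(G')| \le (3k+1)k^{\log_2 3}+k$; we then output the answer to the single oracle query $(G',k)$, correct by the maintained equivalence, with the whole procedure running in polynomial time under constant-time oracle access. The genuine difficulty is in the \emph{Reduce} step: one must argue that whenever $G'$ still exceeds the threshold there is always an order-two separation whose $A$-side spans at most $3k+1$ bags on which \textsc{Reduce-C} both applies and makes progress — which is delicate precisely when the decomposition tree has unbounded degree and many pendant paths are attached to a single bag — while simultaneously guaranteeing that every reduction leaves behind a valid adhesion-$\le 2$ decomposition of the shrunken graph whose size-two separators are minimal separators of the \emph{original}~$G$, since the correctness of \textsc{Reduce-C} depends on this through \proposref{proposition:xyseparator:path:gives:cycle}. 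Getting the constants right so that every query has at most $(3k+1)k^{\log_2 3}+k$ vertices is where the bookkeeping is most demanding.
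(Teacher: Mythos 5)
Your high-level strategy matches the paper's: Tutte decomposition, large-bag shortcut via circumference of triconnected planar graphs, and repeated calls to \textsc{Reduce-C} on order-two separations read off the decomposition tree, using the fact that the torso's planarity bounds the number of distinct size-two adhesions at a bag by $3k^{\log_2 3}$. You even arrive at the same final size bound. But the \emph{Reduce} step — which you yourself flag as ``the genuine difficulty'' — is left as a sketch, and that is precisely where the theorem's content lies.

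Concretely, your argument never establishes that while $|V(G')|$ exceeds the threshold there always \emph{exists} an order-two separation whose $A$-side spans at most $3k+1$ bags. If the decomposition tree has a high-degree node $i$ with hundreds of short pendant pieces each on $2$ bags, you cannot greedily peel off ``a subtree of at most $3k+1$ nodes attached by a single edge,'' because every such pendant piece already has fewer than $3k+1$ bags yet the whole graph remains large; you need a way to kill off pendant pieces wholesale. You gesture at the right remedy (merge pieces sharing an adhesion pair via \proposref{proposition:separation:from:children}, bound the number of distinct pairs by the edge count of the planar torso), but you do not prove that each such merge actually deletes a whole subtree so that the loop terminates, nor do you carry out the bookkeeping that turns ``number of children times $\Oh(k)$ vertices per child plus bag size'' into the claimed $(3k+1)k^{\log_2 3}$ bound. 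The paper makes this precise by organizing everything as a single bottom-up recursion on the rooted tree: first recursively shrink each child subtree, then apply \textsc{Reduce-C} to each child subtree (after which each contributes $<k$ non-adhesion vertices), then for each pair $\{x,y\}\subseteq\X(i)$ repeatedly merge two children with adhesion $\{x,y\}$ and prove (via the ``$G'[A]-\{x,y\}$ has at most one component'' postcondition together with \proposref{proposition:numbercomponents:numbersubtrees}) that each merge eliminates a subtree; finally \proposref{proposition:minimalsep:in:tutte:edge:in:torso} maps the surviving distinct adhesions injectively into $E(\torso(G,\X(i)))$, which planarity bounds by $3k^{\log_2 3}$. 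These are Claims~\ref{claim:kcycle:removechild} and~\ref{claim:planar:condition:correct} in the paper, and they are what your while-loop-plus-measure argument is implicitly relying on but not supplying.

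Two smaller points. First, your ``at most $3k+1$ bags'' accounting is not the natural invariant — the paper counts vertices per child subtree and children per node, not bags; a recursively-reduced subtree can have few vertices yet many (tiny) bags unless you also prune empty leaf bags. Second, the paper first splits $G$ into biconnected components, which guarantees that \emph{every} edge adhesion in the Tutte decomposition has size exactly two and is a minimal separator — this is needed for \proposref{proposition:xyseparator:path:gives:cycle} inside \lemmaref{lemma:reduce-c:correct} and makes the $|Z|\le 1$ case you treat ad hoc simply disappear.
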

\begin{proof}
We present the Turing kernel for \kCycle on planar graphs following the three steps of the kernelization framework.

\paragraph{Decompose} 
Consider an input~$(G,k)$ of planar \kCycle. First observe that a cycle in~$G$ is contained within a single biconnected component of~$G$. We may therefore compute the biconnected components of~$G$ in linear time using the algorithm by Hopcroft and Tarjan~\cite{HopcroftT73a} and work on each biconnected component separately. In the remainder we therefore assume that the input graph~$G$ is biconnected. By another algorithm of Hopcroft and Tarjan~\cite{HopcroftT73} we can compute a Tutte decomposition~$(T,\X)$ of~$G$ in linear time. For each edge~$\{i,j\} \in E(T)$ of the decomposition tree, the definition of a Tutte decomposition ensures that~$\X(i) \cap \X(j)$ is a minimal separator in~$G$. Since~$T$ has adhesion at most two by Theorem~\ref{theorem:tutte}, these minimal separators have size at most two. Using the biconnectivity of~$G$ it follows that the intersection of the bags of adjacent nodes in~$T$ has size exactly two.

\begin{numberedclaim} \label{claim:planar:decomposition:width}
If there is a node~$i \in V(T)$ of the Tutte decomposition such that~$|\X(i)| \geq k^{\log _2 3}$, then~$G$ has a $k$-cycle.
\end{numberedclaim}
\begin{claimproof}
By the definition of a Tutte decomposition, $\torso(G, \X(i))$ is a triconnected topological minor of~$G$. Since planarity is closed under taking (topological) minors, the torso is planar. Hence the torso is a triconnected planar graph on at least~$k^{\log_2 3}$ vertices, which implies by Theorem~\ref{theorem:circumference} that its circumference is at least~$(k^{\log _2 3})^{\log _3 2} = k$. Consequently, there is a topological minor of~$G$ that contains a $k$-cycle. By Observation~\ref{observation:cycleminor:cyclesubgraph} this implies that~$G$ has a $k$-cycle.
\end{claimproof}

The claim shows that we may safely output \yes if the width of~$(T,\X)$ exceeds~$k^{\log _2 3}$. For the remainder of the kernelization we may therefore assume that~$(T,\X)$ has width at most~$k^{\log _2 3}$. To prepare for the reduction phase we make a copy~$G'$ of~$G$ and a copy~$(T',\X')$ of the decomposition. During the reduction phase we will repeatedly remove vertices from the graph~$G'$ to reduce its size. We will make the convention that vertices that are removed from~$G'$ are implicitly removed from the decomposition~$(T',\X')$, and that we remove leaf nodes of the decomposition tree whose bags are subsets of the bags of their parent. The removals may violate the property of a Tutte decomposition that all torsos of bags are triconnected. However, we will maintain the fact that~$(T',\X')$ is a tree decomposition of adhesion at most two and width at most~$k^{\log _2 3}$ of~$G'$. We root the decomposition tree~$T'$ at an arbitrary vertex to complete the decomposition phase. We use the following terminology. For~$i \in V(T')$ we write~$T'[i]$ for the subtree of~$T'$ rooted at~$i$. For a subtree~$T'' \subseteq T'$ we write~$\X'(T'')$ for the union~$\bigcup _{i \in V(T'')} \X'(i)$ of the bags of the nodes in~$T''$.

\begin{algorithm}[t]
\caption{\textsc{Kernelize-Cycle}$(G, G', (T', \X'), i, k)$} \label{alg:queryreducecycle}
\begin{algorithmic}[1]
\REQUIRE{$G'$ is an induced subgraph of~$G$ with a tree decomposition~$(T',\X')$ of adhesion at most two. A node~$i$ of~$T'$ is specified.}
\ENSURE{The existence of a $k$-cycle in~$G$ is reported, or the graph~$G'$ and decomposition~$(T',\X')$ are updated by removing vertices of~$\X'(T'[i]) \setminus \X'(i)$, resulting in~$|\X'(T'[i])| \leq k \cdot |E(\torso(G, \X(i)))| + |\X(i)|$. If~$G'$ initially contained a $k$-cycle, then the deletions preserve this property.}
\vspace{0.2cm}
	\FOREACH{child~$j$ of~$i$ in~$T'$}
		\STATE Recursively execute \textsc{Kernelize-Cycle}$(G', (T', \X'), j, k)$
		\STATE Let~$\{x,y\} := \X'(i) \cap \X'(j)$
		\STATE \textsc{Reduce-C}$(G, G', A := \X'(T'[j]), B := (V(G') \setminus A) \cup \{x,y\}, x, y, k)$\label{line:kcycle:callonechild}
	\ENDFOR
	\FOREACH{pair $\{x,y\} \in \binom{\X'(i)}{2}$}
		\WHILE{there are distinct children~$j_1, j_2$ of~$i$ in~$T'$ such that~$\X'(i) \cap \X'(j_1) = \X'(i) \cap \X'(j_2) = \{x,y\}$}
			\STATE Let~$A := \X'(T'[j_1]) \cup \X'(T'[j_2])$ and~$B := (V(G') \setminus A) \cup \{x,y\}$
			\STATE \textsc{Reduce-C}$(G, G', A, B, x, y, k)$ \label{line:kcycle:implicitremovechild}
		\ENDWHILE
	\ENDFOR
\end{algorithmic}
\end{algorithm}

\paragraph{Query and reduce} We shrink the instance by repeatedly reducing order-two separations while preserving a $k$-cycle, if one exists. At any point in the process we may detect a $k$-cycle in~$G$ and halt. The main procedure is given as Algorithm~\ref{alg:queryreducecycle}. It is initially called for the root node~$r$ of~$T'$. Intuitively, Algorithm~\ref{alg:queryreducecycle} processes the decomposition tree~$T'$ bottom-up, applying Algorithm~\ref{alg:reduce-c} to two types of separations. During the first \textbf{for each} loop, subtrees~$T'[j]$ rooted at children~$j$ of~$i$ are reduced by attacking separations represented by edge~$\{i,j\}$ of the decomposition tree (see Figure~\ref{fig:largecomponent}). The second \textbf{for each} loop considers the setting where two children have exactly the same adhesion to the current node~$i$ (see Figure~\ref{fig:twocomponents}), and attacks the corresponding separation. If the procedure terminates without reporting a $k$-cycle, we make a final call to the planar \kCycle oracle for the remaining graph~$G'$ and parameter~$k$. The output of the oracle is given as the output of the Turing kernel.

\begin{numberedclaim} \label{claim:kcycle:removefromsubtree}
When the algorithm is called for node~$i$, it only removes vertices belonging to~$\X'(T'[i]) \setminus \X'(i)$.
\end{numberedclaim}
\begin{claimproof}
Vertices are only removed through Algorithm~\ref{alg:reduce-c}. By its postcondition, it only removes vertices of~$A \setminus B$. The~$A$-sides of all relevant separations are subsets of~$\X'(T'[i])$, while the $B$-side always contains~$\X'(i)$. The claim follows.
\end{claimproof}

\begin{numberedclaim} \label{claim:kcycle:abseparation}
When \textsc{Reduce-C} is called, the pair~$(A,B)$ is a separation of~$G'$ and~$A \cap B = \{x,y\}$ is a minimal separator in~$G$.
\end{numberedclaim}
\begin{claimproof}
The fact that~$(A,B)$ is a separation follows from the fact that~$(T',\X')$ is invariantly a tree decomposition of~$G'$, together with Proposition~\ref{proposition:separation:from:edge} (for the first call) and Proposition~\ref{proposition:separation:from:children} (for the second call). It remains to show that~$A \cap B = \{x,y\}$ is a minimal separator. 

By Claim~\ref{claim:kcycle:removefromsubtree}, during the execution of Algorithm~\ref{alg:queryreducecycle} for node~$i$ we only delete vertices of~$G'$ that occur in a bag in the subtree rooted at~$i$, but not in the bag of node~$i$ itself. Hence recursive calls do not remove vertices that belong to bag~$i$, and therefore do not change the intersection between~$i$ and its child bags. Recall that~$(T',\X')$ was initialized as a copy of a Tutte decomposition~$(T,\X)$ of the biconnected graph~$G$, in which all adhesions have size two and are minimal separators of~$G$. It follows that the adhesion between~$i$ and its child bags has size two during the execution for node~$i$ and is equal to the adhesion in the original decomposition~$(T,\X)$. Since every adhesion in a Tutte decomposition of a biconnected graph is a minimal separator by definition, this proves that~$\{x,y\}$ is a minimal separator in~$G$ for all calls to \textsc{Reduce-C}.
\end{claimproof}

By the postcondition of Algorithm~\ref{alg:reduce-c}, each modification step preserves the existence of a $k$-cycle, and the algorithm is correct when it reports a $k$-cycle in~$G$. The oracle answer to the final reduced graph~$G'$ is therefore the correct answer to the original input instance~$(G,k)$. To see that the algorithm runs in polynomial time when given constant-time access to the oracle, the only nontrivial aspect to show is the following claim.

\begin{numberedclaim} \label{claim:kcycle:removechild}
Every time line~\ref{line:kcycle:implicitremovechild} is executed, at least one child subtree of node~$i$ is removed from~$(T',\X')$.
\end{numberedclaim}
\begin{claimproof}
Consider the separation~$(A,B)$ defined within the \textbf{while}-loop based on the children~$j_1$ and~$j_2$ of the current node~$i$. Algorithm~\ref{alg:reduce-c} ensures that~$G'[A] - \{x,y\}$ has at most one connected component~$C$ after the call completes. Let~$U$ be the vertex set of~$C$. By Proposition~\ref{proposition:numbercomponents:numbersubtrees}, at most one tree of~$T' - \{i\}$ has bags containing vertices of~$U$. It follows that at least one of the child subtrees~$T'[j_1]$ and~$T'[j_2]$ contains no vertices of~$U$. All vertices in that subtree except for~$\{x,y\}$ are therefore removed by line~\ref{line:kcycle:implicitremovechild}. Since we implicitly remove leaf nodes of the decomposition whose bag is a subset of their parent bag, the corresponding child subtree disappears from the decomposition~$(T',\X')$.
\end{claimproof}

Claim~\ref{claim:kcycle:removechild} implies that the number of iterations of the \textbf{while}-loop does not exceed the size of the decomposition tree, from which the polynomial-time running time easily follows. The following claim establishes the last part of the postcondition.

\begin{numberedclaim} \label{claim:planar:condition:correct}
When the execution for node~$i$ terminates we have: $$|\X'(T'[i])| \leq \linebreak[1] k \cdot \left |E(\torso(G, \X(i))) \right| + |\X(i)|.$$
\end{numberedclaim}
\begin{claimproof}
By the postcondition, the call to Algorithm~\ref{alg:reduce-c} in the first \textbf{for each} loop removes, for each child~$j$ of~$i$, all but~$< k$ vertices of~$A \setminus B = \X'(T'[j]) \setminus \X'(i)$. Upon completion, each child subtree therefore represents less than~$k$ vertices of~$G'$ that are not in~$\X'(i)$ themselves. The second \textbf{for each} loop repeats while there are at least two children whose bags intersect the bag of~$i$ in the same set of size two. Observe that, since~$G$ was initially biconnected and a recursive call to a child~$j$ does not remove vertices in the intersection of~$j$ to its parent, each bag of a child of~$i$ must have an intersection of size exactly two with the bag of~$i$; this intersection is a minimal separator in~$G$ by Theorem~\ref{theorem:tutte}. Hence upon termination, for each remaining child of~$i$ there is a unique minimal separator~$\{x,y\}$ contained in~$\X'(i) = \X(i)$. By Proposition~\ref{proposition:minimalsep:in:tutte:edge:in:torso}, each such minimal separator yields an edge in~$\torso(G, \X(i))$. Hence the number of children of~$i$ is reduced to~$|E(\torso(G, \X(i)))|$. Since each child represents at most~$k$ vertices of~$G'$ that are not in~$\X'(i)$, while the bag~$\X'(i) = \X(i)$ adds another~$|\X(i)|$ vertices to~$\X'(T'[i])$, it follows that~$|\X'(T'[i])| \leq k \cdot |E(\torso(G, \X(i)))| + |\X(i)|$ upon termination.
\end{claimproof}

Since we are building a Turing kernel for planar \kCycle, the oracle can only decide instances of planar \kCycle. The self-reduction algorithm of Lemma~\ref{lemma:selfreduce:xypath} invoked by the Algorithm~\ref{alg:reduce-c} subroutine only queries instances of the \kCycle problem, but we must still verify that all queried instances are planar. We do this in the next claim, which also establishes the size bound for the queried instances.

\begin{numberedclaim}\label{claim:planar:query:suitable}
Algorithm~\ref{alg:queryreducecycle} only queries the \kCycle oracle with parameter~$k$ on planar graphs of order at most~$(3k + 1) k^{\log _2 3} + k$.
\end{numberedclaim}
\begin{claimproof}
Lemma~\ref{lemma:reduce-c:correct} guarantees that all instances for which the oracle is queried are $xy$-extensions of~$G'[A]$, where~$A$ is the parameter for Algorithm~\ref{alg:reduce-c}. Since~$\{x,y\}$ is the intersection of two adjacent bags in~$(T',\X')$ and therefore also in~$(T,\X)$, by Theorem~\ref{theorem:tutte} the set~$\{x,y\}$ is a minimal separator in~$G$. By Proposition~\ref{proposition:connect:separator:is:planar}, if~$G$ is planar then any $xy$-extension of~$G$ over a minimal separator~$\{x,y\}$ is planar. Since~$G'[A]$ is a subgraph of~$G$, all such extensions of~$G'[A]$ are subgraphs of a planar extension of~$G$, and are therefore planar.

Finally, let us bound the order of the graphs that are queried to the oracle during the execution for some node~$i \in V(T')$. Recall that the width of~$(T',\X')$ is at most~$k^{\log _2 3}$ and therefore that~$|\X'(i)| \leq k^{\log _2 3}$. Since any minor of a planar graph is planar, the graph~$\torso(G', \X'(i))$ is a planar graph on at most~$k^{\log _2 3}$ vertices. Since an $n$-vertex planar graph has at most~$3n$ edges~\cite[Corollary 4.2.10]{Diestel10}, it follows that~$|E(\torso(G', \X'(j))| \leq 3 k^{\log _2 3}$ for all~$j \in V(T')$. Therefore the postcondition of the algorithm guarantees that upon termination for child node~$j$, the number of vertices represented by the subtree rooted at~$j$ is at most~$k \cdot (3k^{\log _2 3}) + k^{\log _2 3} = (3k + 1) k^{\log _2 3}$. This shows that when \textsc{Reduce-C} is invoked in line~\ref{line:kcycle:callonechild}, we have~$|A| \leq (3k + 1) k^{\log _2 3}$. By Lemma~\ref{lemma:reduce-c:correct}, this means it queries the oracle for graphs with at most~$|A| + k \leq (3k + 1) k^{\log _2 3} + k$ vertices. (The same bound applies when the oracle is applied to the final graph~$G'$ after the reduction procedure has finished.) 

When \textsc{Reduce-C} is called in Line~\ref{line:kcycle:implicitremovechild}, each child subtree has already been reduced by the first \textbf{for each} loop and consequently the $A$-side of the separation has at most~$2k$ vertices. Consequently, the resulting oracle queries have at most~$3k$ vertices.
\end{claimproof}

We established that the algorithm outputs the correct answer and satisfies all requirements of a Turing kernelization, concluding the proof of Theorem~\ref{theorem:planarkcycle}.
\end{proof}

\subsection{\texorpdfstring{$k$}{k}-Cycle in other graph families} \label{section:otherkcycle}

There are two obstacles when generalizing the Turing kernel for \kCycle on planar graphs to the other graph families. In the decompose step we have to ensure that each torso of the Tutte decomposition still belongs to the graph family, so that Theorem~\ref{theorem:circumference} may be used to deduce the existence of a $k$-cycle if the width of the Tutte decomposition is sufficiently large. Lemma~\ref{lemma:torsos:restricted} is used for this purpose. In the query step we have to deal with the fact that the alterations made to the graph by the self-reduction procedure may violate the defining property of the graph class, which can be handled by using an NP-completeness transformation before querying the oracle. Besides these issues, the kernelization is the same as in the planar case.

\begin{lemma} \label{lemma:torsos:restricted}
Let~$(T,\X)$ be a Tutte decomposition of a graph~$G$, let~$i \in V(T)$, and let~$H$ be a graph.
\begin{enumerate}
	\item If~$G$ has maximum degree~$\Delta$, then~$\torso(G, \X(i))$ has maximum degree at most~$\Delta$.\label{torsos:maxdegree}
	\item If~$G$ is $H$-minor-free, then~$\torso(G, \X(i))$ is $H$-minor-free.\label{torsos:minorfree}
	\item If~$G$ is claw-free, then~$\torso(G, \X(i))$ is claw-free.\label{torsos:clawfree}
\end{enumerate}
\end{lemma}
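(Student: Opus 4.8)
The plan is to prove all three items by exploiting the single structural fact that underlies a Tutte decomposition: by \thmref{theorem:tutte}, the graph $\torso(G, \X(i))$ is a \emph{topological minor} of $G$. Each of the three classes in the statement is closed under taking topological minors, so once we know the torso is a topological minor of $G$, all three conclusions follow immediately from the hypotheses. So the real content is just to recall (or re-derive) why each class is topological-minor-closed, and to phrase this cleanly. I would therefore open the proof by invoking \thmref{theorem:tutte} to fix a subgraph $G_0 \subseteq G$ from which $\torso(G, \X(i))$ is obtained by repeatedly suppressing degree-two vertices, and then handle the three items in turn.

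For item~\ref{torsos:maxdegree}: passing to a subgraph cannot increase any degree, and suppressing a degree-two vertex $v$ (replacing it by an edge between its two neighbours $a,b$) changes only the degrees of $a$ and $b$, and does not increase them — each of $a,b$ loses the edge to $v$ and gains at most one edge (the new edge $ab$, and only if $ab$ was not already present, since the graph is simple). Hence $\Delta$ is non-increasing under every operation producing the torso, giving $\Delta(\torso(G,\X(i))) \le \Delta(G)$.

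For item~\ref{torsos:minorfree}: I would argue contrapositively. If $H$ is a minor of $\torso(G,\X(i))$, then since $H$ is a minor of the torso and the torso is a topological minor of $G$, and since the topological-minor relation is transitive while ``minor of a topological minor'' is again a minor (a subgraph/contraction composed with degree-two suppressions is still realizable by a subgraph plus contractions), $H$ is a minor of $G$ — contradicting $H$-minor-freeness of $G$. The cleanest phrasing is: the torso is a topological minor, hence a minor, of $G$; the minor relation is transitive; so $H$-minor-freeness is inherited. For item~\ref{torsos:clawfree} the same template applies once we note that claw-freeness is preserved under suppressing degree-two vertices: if suppressing $v$ with neighbours $a,b$ created an induced $K_{1,3}$ with center $c$ and leaves $p,q,r$, then either the new edge $ab$ is irrelevant to this claw (so it was already present before suppression, since the only structural change is deleting $v$ and adding $ab$ — contradiction as the claw would already have existed in $G_0$ or an intermediate graph), or $ab$ is one of the claw's edges, say $c=a$, $p=b$; but then in the graph before suppression, $a$ was adjacent to $v$ rather than $b$, and $\{a,v,q,r\}$ — or rather one checks that $v$ together with two of the old leaves witnesses a claw or the vertices $\{a,b,\dots\}$ fail to be independent — in any case a short case analysis shows a claw or an edge was already present. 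Since passing to the initial subgraph $G_0$ clearly preserves claw-freeness (induced subgraphs of claw-free graphs are claw-free, and a subgraph that is not induced only has fewer edges, which cannot destroy induced-subgraph-freeness of $K_{1,3}$ — here one must be slightly careful, so I would instead note that the torso is obtained from an \emph{induced} subgraph plus suppressions in the relevant setting, or re-derive via the topological-minor-to-minor route and a direct check).

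The main obstacle is item~\ref{torsos:clawfree}: unlike the minor-closed and degree conditions, claw-freeness is \emph{not} closed under taking topological minors in general (suppressing degree-two vertices can merge the two neighbours' neighbourhoods and manufacture an induced claw), so the transitivity argument does not apply verbatim. I expect the proof to instead use the finer structure of a Tutte decomposition — specifically that the edges added to form $\torso(G,\X(i))$ come from paths through \emph{connected components} of $G - \X(i)$ that attach to at most two vertices of $\X(i)$ (\proposref{proposition:neighbors:adhesion} with $d=2$). Concretely, to rule out an induced claw with center $c$ in the torso, one takes the three torso-neighbours $p,q,r$ of $c$; each edge $cp$, $cq$, $cr$ is either an original edge of $G$ or is ``realized'' by a path through a component of $G-\X(i)$; one then either finds an original claw in $G$ at $c$, or finds that two of $p,q,r$ — say reached through the same component, or through components both attached only to $\{c,\cdot\}$ — must be adjacent in $G$ (hence in the torso), contradicting that $\{p,q,r\}$ is independent in the torso. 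Pinning down this case analysis so that every configuration produces either an edge among $\{p,q,r\}$ in the torso or an induced $K_{1,3}$ in $G$ is the delicate part; everything else is routine.
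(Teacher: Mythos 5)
Items (1) and (2) are correct and match the paper's proof: the torso is a topological minor of $G$ by \thmref{theorem:tutte}, and both bounded degree and $H$-minor-freeness are inherited by topological minors.

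For item (3) you correctly identify the obstacle (claw-freeness is \emph{not} closed under topological minors, since edge deletions can create claws) and also the right tool, namely the adhesion-two bound of \proposref{proposition:neighbors:adhesion}, which constrains how torso-added edges are realized in $G$. But the case analysis you sketch does not close. You propose that in the bad cases two of the leaves $p,q,r$ ``must be adjacent in $G$,'' but that can never happen: the torso contains every edge of $G[\X(i)]$, so if two leaves were adjacent in $G$ they would already be adjacent in the torso and $\{p,q,r\}$ would not be independent, contradicting the assumed claw. Likewise, the subcase in which two of the edges $cp,cq,cr$ are realized through the \emph{same} component of $G-\X(i)$ simply cannot occur — such a component would have three neighbours in $\X(i)$, violating \proposref{proposition:neighbors:adhesion} with $d=2$ — rather than being a case that yields an edge among the leaves. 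The paper's actual argument uses that proposition to conclude that the realizing components are pairwise distinct, and then for each torso-added edge $\{c,u_j\}$ replaces the leaf $u_j$ by the \emph{successor}~$w_j$ of~$c$ on a realizing path through its component~$C_j$. The $w_j$'s are pairwise non-adjacent (they lie in distinct components of $G-\X(i)$), non-adjacent to any retained original leaf $u_{j'}$ with $j'\neq j$ (again by \proposref{proposition:neighbors:adhesion}, since $N_G(C_j)\cap\X(i)\subseteq\{c,u_j\}$), and each adjacent to~$c$; this assembles an induced $K_{1,3}$ centred at~$c$ inside~$G$, contradicting claw-freeness. This replacement trick is the missing idea.

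Your alternative suggestion — that the torso is obtained from an \emph{induced} subgraph of~$G$ by suppressing degree-two vertices, and that both operations preserve claw-freeness — is in fact also a viable route, but you neither commit to it nor establish either half. Both halves need an argument: the induced-subgraph realization again rests on the adhesion-two bound (to pick one induced realizing path per torso-added edge in pairwise-distinct components, so the paths do not interfere and the internal vertices really have degree two), and the claim that suppressing a degree-two vertex preserves claw-freeness requires a short case distinction on whether the new edge participates in a hypothetical new claw — it is true, but it is not a folklore closure fact the way topological-minor closure is for items (1) and (2), so it must be checked.
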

\begin{proof}
The key point is that, by definition of the Tutte decomposition, every graph $\torso(G, \X(i))$ is a topological minor of~$G$. As taking a topological minor (deleting edges/vertices and replacing degree-2 vertices by an edge) cannot increase the degree of a vertex, this implies~\ref{torsos:maxdegree}. If~$\torso(G, \X(i))$ contains~$H$ as a minor, then a topological minor of~$G$ contains an $H$-minor, showing that~$G$ contains an $H$-minor. Hence contraposition gives~\ref{torsos:minorfree}. It remains to establish~\ref{torsos:clawfree}.

Assume for a contradiction that~$G$ is claw-free, but~$\torso(G, \X(i))$ has a claw (induced $K_{1,3}$ subgraph) with center~$v \in \X(i)$ and leaves~$u_1,u_2,u_3 \in \X(i)$ for some~$i \in V(T)$. Let~$E^* := \{ \{v,u_1\}, \{v,u_2\}, \{v,u_3\} \} \setminus E(G)$ be the edges used in the claw that are not present in~$G$; these were added by the torso operation. Since~$\torso(G, \X(i))$ contains all edges of~$G[\X(i)]$, we know that~$\{u_1, u_2, u_3\}$ is an independent set in~$G$ and at least one of these vertices is not adjacent to~$v$ in~$G$ (as otherwise~$G$ would have a claw.) Hence~$E^*$ is nonempty. 

Consider an edge~$\{v,u_j\} \in E^*$. Since this edge was added by the torso operation, there is a $vu_j$-path in~$G$ whose internal vertices avoid~$\X(i)$, and therefore belong to some connected component of~$G - \X(i)$. Accordingly, let~$C_j$ be a component containing the interior vertices of a~$vu_j$-path for each~$\{v,u_j\}$ in~$E^*$. We argue that the components~$C_j$ are all distinct. Suppose that some component~$C^*$ contains the interior vertices of both a~$vu'$-path and a $vu''$-path for distinct~$\{v,u'\}, \{v,u''\} \in E^*$. Then the connected component~$C^*$ of~$G - \X(i)$ is adjacent to the three vertices~$v,u',u'' \in \X(i)$. By Proposition~\ref{proposition:neighbors:adhesion} this implies that the adhesion is at least three, contradicting the fact that a Tutte decomposition has adhesion at most two. Hence the components~$C_j$ for~$\{v,u_j\} \in E^*$ are all distinct. For each~$\{v,u_j\} \in E^*$ let~$w_j$ be the successor of~$v$ in a~$vu_j$-path through~$C_j$. Since the components~$C_j$ are all distinct, the chosen vertices~$w_j$ are all distinct. Since the vertices~$w_j$ belong to different connected components of~$G - \X(i)$, they are mutually non-adjacent. By Proposition~\ref{proposition:neighbors:adhesion}, no vertex of~$\X(i) \setminus \{v, u_j\}$ is adjacent to~$w_j$. Hence we may replace each edge~$\{v,u_j\} \in E^*$ in the claw by~$\{v,w_j\}$ to obtain a claw in~$G$; a contradiction to the assumption that~$G$ is claw-free.
\end{proof}

\begin{theorem} \label{theorem:otherkcycle}
The \kCycle problem has a polynomial Turing kernel when restricted to graphs of maximum degree~$t$, claw-free graphs, or $K_{3,t}$-minor-free graphs, for each constant~$t \geq 3$.
\end{theorem}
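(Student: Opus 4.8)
The plan is to adapt the proof of Theorem~\ref{theorem:planarkcycle} almost verbatim, replacing the two genuinely planarity-specific ingredients with their analogues for the graph class $\G$ at hand (bounded degree $t$, claw-free, or $K_{3,t}$-minor-free). First I would run the same \textbf{Decompose} step: reduce to biconnected $G$, compute a Tutte decomposition $(T,\X)$ in linear time via Hopcroft--Tarjan, and observe adhesions have size exactly two. In place of Claim~\ref{claim:planar:decomposition:width} I would invoke Lemma~\ref{lemma:torsos:restricted} to argue that every torso $\torso(G,\X(i))$ still lies in $\G$ (maximum degree $\le t$ by part~\ref{torsos:maxdegree}, $K_{3,t}$-minor-free by part~\ref{torsos:minorfree}, or claw-free by part~\ref{torsos:clawfree}), and then apply the relevant part of Theorem~\ref{theorem:circumference} to conclude that a bag of size $\Omega(k^{c})$ — for the appropriate constant $c=c(\G)$ coming from the exponent in Theorem~\ref{theorem:circumference}, e.g.\ $c = 1/\log_{1729}2$ and an extra $(1/2)^{t(t-1)}$ factor in the $K_{3,t}$ case — forces a $k$-cycle in a topological minor of $G$, hence in $G$ by Observation~\ref{observation:cycleminor:cyclesubgraph}. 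So we may assume $(T',\X')$ has width $k^{\Oh(1)}$.

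The \textbf{Query and reduce} step is then literally the same: run \textsc{Kernelize-Cycle} bottom-up, using \textsc{Reduce-C} (Algorithm~\ref{alg:reduce-c}, correct by Lemma~\ref{lemma:reduce-c:correct}) on the edge-separations of $T'$ and on pairs of children with identical adhesion, exactly as in Claims~\ref{claim:kcycle:removefromsubtree}--\ref{claim:planar:condition:correct}. The output of \textsc{Reduce-C} removes all but ${<}k$ vertices per child subtree, and the number of surviving children of a node $i$ is bounded by $|E(\torso(G,\X(i)))|$ via Proposition~\ref{proposition:minimalsep:in:tutte:edge:in:torso}; since $\torso(G',\X'(i))$ has width $k^{\Oh(1)}$ and (for each of our classes) a $k^{\Oh(1)}$-vertex member of $\G$ has $k^{\Oh(1)}$ edges — linearly many for bounded degree and for $K_{3,t}$-minor-free (sparse by the Kostochka/Thomason-type bound, or simply: $H$-minor-free graphs have $\Oh(n)$ edges), and $\Oh(n^2)$ trivially for claw-free — the subtree of each node ends up representing $k^{\Oh(1)}$ vertices. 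Iterating to the root leaves a final $G'$ of size $k^{\Oh(1)}$, whose answer we get from one more oracle call.

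The one place where a new idea is needed — and the main obstacle — is ensuring the oracle queries stay inside the graph class $\G$. In the planar case this was free: Proposition~\ref{proposition:connect:separator:is:planar} shows an $xy$-extension over a minimal separator $\{x,y\}$ stays planar. But for bounded degree, claw-free, or $K_{3,t}$-minor-free graphs, the $xy$-extensions produced inside Lemma~\ref{lemma:selfreduce:xypath} — adding an edge $\{x,y\}$, or subdivided $xy$-paths, to an induced subgraph of $G$ — can destroy the defining property (adding $\{x,y\}$ may raise a degree above $t$, create a claw at $x$ or $y$, or create a $K_{3,t}$ minor). The fix, as signalled in the paragraph preceding the statement, is to route the query through an NP-completeness transformation: instead of asking the oracle directly about the (possibly non-$\G$) extension graph $G'$, we apply the many-one reduction from general \kCycle to \kCycle-on-$\G$ (which exists and is polynomial since \kCycle is NP-complete on each of these classes, by~\cite{LiCM00}) and query the oracle on the resulting $\G$-instance, which has size polynomial in $|G'| = k^{\Oh(1)}$, hence still $k^{\Oh(1)}$. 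One must check that this composes cleanly with the self-reduction of Lemma~\ref{lemma:selfreduce:xypath}: each of the $\Oh(n+k)$ oracle calls there is first passed through the transformation, the answer is interpreted back, and the polynomial blow-up per call keeps the total running time polynomial and the total query size $k^{\Oh(1)}$. With that substitution in place, correctness and the running-time analysis carry over unchanged from the proof of Theorem~\ref{theorem:planarkcycle}, so the remaining work is just bookkeeping: fixing, for each of the three classes, the precise polynomial in $k$ bounding the bag size and the query size. We omit the explicit constants since, as noted in Section~\ref{section:preliminaries}, the NP-completeness transformation makes the precise degree of the polynomial uninformative.
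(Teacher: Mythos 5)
Your proposal is correct and follows essentially the same route as the paper: reduce to biconnected components using heredity, use Lemma~\ref{lemma:torsos:restricted} and Theorem~\ref{theorem:circumference} in place of the planar-specific circumference bound, reuse Algorithm~\ref{alg:queryreducecycle} and Claims~\ref{claim:kcycle:removefromsubtree}--\ref{claim:planar:condition:correct} verbatim, and handle the (only real new) issue that $xy$-extensions may leave the class by routing every oracle query through the NP-completeness transformation from general \kCycle to \kCycle on~$\G$. The only minor cosmetic difference is that you overstate the extension problem (the paper notes that $xy$-extensions over minimal separators preserve bounded degree and cannot create new $K_{3,t}$-minors, so only the claw-free case genuinely needs the workaround), but since the paper applies the NP-completeness transformation uniformly to all three classes anyway, this has no effect on the argument.
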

\begin{proof}
The approach is similar to that of Theorem~\ref{theorem:planarkcycle}. Observe that a biconnected component of a graph~$G$ is an induced subgraph of~$G$. Since all mentioned graph classes are hereditary, it follows that if~$G$ belongs to one of the mentioned classes, then all its biconnected components do as well. Consequently, we may again apply the Turing kernelization algorithm to all biconnected components individually. By Lemma~\ref{lemma:torsos:restricted} it follows that for each mentioned graph family~$\G$, the torso of a node~$i$ of a Tutte decomposition of~$G \in \G$ belongs to the same family~$\G$. By using the subresult of Theorem~\ref{theorem:circumference} corresponding to the particular choice of graph class we therefore establish the required analogue of Claim~\ref{claim:planar:decomposition:width}: if the width of a Tutte decomposition is not bounded by a suitable polynomial in~$k$, then~$G$ has a $k$-cycle and we may answer \yes. We can apply Algorithm~\ref{alg:queryreducecycle} without modifications to recursively reduce the instance. Claims~\ref{claim:kcycle:removefromsubtree}--\ref{claim:planar:condition:correct} continue to hold. We only have to change the argumentation for Claim~\ref{claim:planar:query:suitable}, since the graphs that will be queried to the oracle will not be planar and will be larger than in the planar case.

We show that the size of the queried instances is polynomial in~$k$. By induction, the postcondition of the algorithm ensures that in the execution for node~$i$, the recursive calls for the child nodes~$j$ decrease the number of vertices represented in the subtrees~$T'[j]$ to~$k \cdot |E(\torso(G, \X(j)))| + |\X(j)|$. Since the width of the decomposition is polynomial in~$k$ and the number of edges of a graph is quadratic in its order, the sizes of the child subtrees are reduced to a polynomial in~$k$ that depends on the graph class and its parameters. Since the $A$-sides of the separations defined in Algorithm~\ref{alg:queryreducecycle} consist of one or two child subtrees that have already been reduced recursively, this implies that the Algorithm~\ref{alg:reduce-c} subroutine only queries the oracle for graphs of size polynomial in~$k$. 

It remains to consider the type of instances for which the oracle is queried during the procedure. Lemma~\ref{lemma:reduce-c:correct} ensures that the oracle is only queried for $xy$-extensions of~$G'[A]$, where~$\{x,y\}$ is a minimal separator in~$G$ by Claim~\ref{claim:kcycle:abseparation}. Unfortunately, this is not sufficient to guarantee that the query graphs belong to the same graph class as the input graph. While an $xy$-extension of~$G'[A]$ does not have larger maximum degree than~$G$ and cannot have larger~$K_{3,t}$ minors than~$G$, it may be that an $xy$-extension of~$G'[A]$ has a claw whereas~$G$ was claw-free. In particular, this can happen if the vertices~$\{x,y\}$ are connected by an edge in~$G$. 

Rather than trying to find an ad-hoc workaround for this issue, we adopt the following robust solution. Recall that the classical version of the \kCycle problem is NP-complete for all graph classes mentioned in the theorem~\cite{LiCM00}. The algorithm of Lemma~\ref{lemma:selfreduce:xypath} needs to query the oracle for the answers to instances~$(H,k)$ of the \kCycle problem (on an unrestricted graph). Since \kCycle is in NP, the NP-completeness transformation from general \kCycle to the \kCycle problem restricted to the relevant graph class can be used to transform instance~$(H,k)$ in polynomial time into an equivalent instance~$(H',k')$. Since a polynomial-time transformation cannot blow up the instance size superpolynomially, the order of~$H'$ is polynomial in the order of~$H$, which is polynomial in~$k$ in all applications of Lemma~\ref{lemma:selfreduce:xypath}. We can therefore modify the algorithm as follows: whenever the algorithm tries to query the \kCycle oracle, we first use the NP-completeness transformation to obtain an equivalent $k'$-\textsc{Cycle} instance on the appropriate graph class, convert it to a parameterized instance, and query that instead. By this adaptation the oracle is only queried for instances that it can answer. The size and parameter of the queried instances remain polynomial in~$k$. As this resolves the last issue, this completes the proof of Theorem~\ref{theorem:otherkcycle}.
\end{proof}

\section{Turing kernelization for finding paths} \label{section:paths}
Now we turn our attention to the \kPath problem. While the main ideas are the same as in the \kCycle case, the details are a bit more technical, for two reasons. Since a path may cross several biconnected components, we can no longer restrict ourselves to biconnected graphs and therefore the minimal separators formed by the intersections of adjacent bags of the Tutte decomposition may now have size one or two. Additionally, there are several structurally different ways in which a path may cross a separation of order two and we have to account for all possible options. To query for the relevant information, we need a more robust self-reduction algorithm. We first develop the structural claims and self-reduction tools in Section~\ref{section:properties:paths}. In Section~\ref{section:kpath:turing} we present the Turing kernels.

\subsection{Properties of paths} \label{section:properties:paths}

The following two statements describe how longest paths intersect separations of order one and two. The order-one case is easily summarized by the following observation.

\begin{observation} \label{observation:paths:through:cutvertex}
Let~$A,B \subseteq V(G)$ be a separation of order one of a graph~$G$ with~$A \cap B = \{x\}$. Let~$V(\P_A)$ be the vertices on a maximum-length path~$\P_A$ in~$G[A]$ that ends in~$x$. If~$G$ has a $k$-path, then~$G[A]$ has a $k$-path or~$G[V(\P_A) \cup B]$ has a $k$-path.
\end{observation}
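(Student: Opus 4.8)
The plan is to reuse the \emph{replace-the-near-side-subpath} strategy from the proof of Lemma~\ref{lemma:cycles:through:separation}, which becomes noticeably simpler because the separator here has a single vertex. I would start from a $k$-path~$P$ in~$G$ and immediately dispose of the two easy cases: if~$V(P)\subseteq A$ then~$G[A]$ already contains~$P$, and if~$V(P)\subseteq V(\P_A)\cup B$ then~$G[V(\P_A)\cup B]$ does. Note that~$\P_A$ always exists and~$x\in V(\P_A)$, since the single-vertex path on~$x$ is an $x$-path in~$G[A]$; so the second set is well-defined. In the remaining case~$P$ has a vertex~$a\in A\setminus B$ and a vertex~$b\in B\setminus A$.

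The key step is to describe how~$P$ must cross the separator~$\{x\}$. Since~$\{x\}=A\cap B$ separates~$a$ from~$b$ and~$P$ is connected, we get~$x\in V(P)$; and since~$P$ is a simple path, deleting~$x$ leaves at most two subpaths. As the piece containing~$a$ and the piece containing~$b$ lie in different components of~$G-x$, the vertex~$x$ cannot be an endpoint of~$P$ (deleting an endpoint of a path with at least two vertices leaves one connected piece, which would have to contain both~$a$ and~$b$). Hence~$x$ is interior to~$P$, so~$P$ decomposes as~$Q_A, x, Q_B$; because~$G$ has no edges between~$A\setminus B$ and~$B\setminus A$, every edge of~$P$ before~$x$ lies in~$G[A]$ and every edge after~$x$ lies in~$G[B]$, so~$Q_A$ together with its edge to~$x$ is an $x$-path in~$G[A]$ (of length at least one) and~$Q_B$ together with~$x$ is an $x$-path in~$G[B]$.

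Finally I would splice in~$\P_A$: by maximality,~$|E(\P_A)|$ is at least the length of the $A$-side part of~$P$, so concatenating~$\P_A$, the edge from~$x$ to the first vertex of~$Q_B$, and~$Q_B$ gives a walk of length at least~$|E(P)|\geq k$. It is a genuine simple path since the non-$x$ vertices of~$\P_A$ lie in~$A\setminus B$, the vertices of~$Q_B$ lie in~$B\setminus A$, and~$x$ occurs exactly once; its vertex set lies in~$V(\P_A)\cup\{x\}\cup V(Q_B)\subseteq V(\P_A)\cup B$, so~$G[V(\P_A)\cup B]$ has a $k$-path.

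I do not anticipate a real obstacle here. The only point needing care is justifying that~$x$ is interior to~$P$ and that the edge set of~$P$ splits cleanly at~$x$; this is precisely where the order-one separation property is used, and it replaces the ``two internally disjoint paths on a cycle'' argument from Lemma~\ref{lemma:cycles:through:separation} with a one-line remark about the number of components of a path minus a vertex.
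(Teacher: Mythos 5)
Your proof is correct. The paper states this as an \emph{observation} and gives no explicit proof, evidently expecting the reader to see it as the (simpler, order-one) analogue of Lemma~\ref{lemma:cycles:through:separation}; your argument is precisely the adaptation the paper has in mind, replacing the two-internally-disjoint-paths reasoning for cycles with the fact that a simple path minus an interior vertex has exactly two components, each confined to one side of the separation. The splice and the length accounting are sound, and you correctly note that $\P_A$ is nonempty because the trivial $x$-path always exists; the only cosmetic omission is an explicit ``without loss of generality'' when you name the near side $Q_A$, which is harmless since the labeling is just a choice of traversal direction.
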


Recall that for a vertex~$x$, an $x$-path is a path that has~$x$ as an endpoint. The six different types of witness structures described in the following lemma are illustrated in Figure~\ref{figure:structures:path}.

\begin{figure}[t]
\begin{center}
\subfigure[Subgraph~$\P_1$.]{
\includegraphics[scale=0.9]{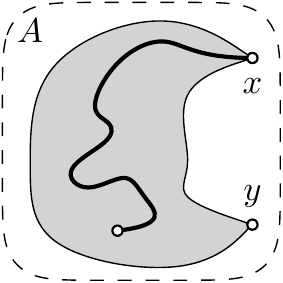}
}
\subfigure[Subgraph~$\P_3$.]{
\includegraphics[scale=0.9]{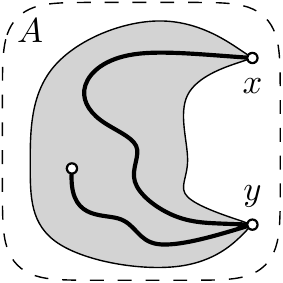}
}
\subfigure[Subgraph~$\P_5$.]{
\includegraphics[scale=0.9]{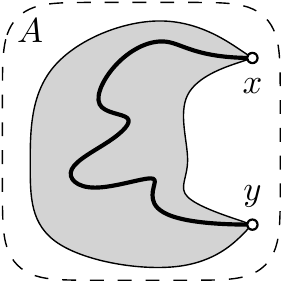}
}
\subfigure[Subgraph~$\P_6$.]{
\includegraphics[scale=0.9]{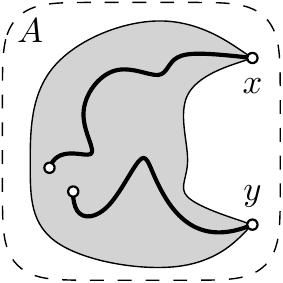}
}
\caption{Schematic illustration of the distinct ways in which a maximum-length path can intersect one side of an order-two separation with separator~$\{x,y\}$. For compactness, only the $A$-side of the separation is shown. The subgraphs~$\P_1,\P_3,\P_5$, and~$\P_6$ are represented by thick curves. They are described in Lemma~\ref{lemma:paths:through:separation}. Subgraphs~$\P_2$ and~$\P_4$ are mirror images of~$\P_1$ and~$\P_3$, respectively.}
\label{figure:structures:path}
\end{center}
\end{figure}

\begin{lemma} \label{lemma:paths:through:separation}
Let~$A, B \subseteq V(G)$  be a separation of order two of a graph~$G$ with~$A \cap B = \{x,y\}$. Let~$\P_1, \ldots, \P_6$ be subgraphs of~$G[A]$ such that:
\begin{enumerate}
	\item $\P_1$ is a maximum-length $x$-path in~$G[A] - \{y\}$.
	\item $\P_2$ is a maximum-length $y$-path in~$G[A] - \{x\}$.
	\item $\P_3$ is a maximum-length $x$-path in~$G[A]$.
	\item $\P_4$ is a maximum-length $y$-path in~$G[A]$.
	\item $\P_5$ is a maximum-length $xy$-path in~$G[A]$, or~$\emptyset$ if no such path exists.
	\item $\P_6$ consists of two vertex-disjoint paths in~$G[A]$, one $x$-path and one $y$-path, such that the combined length of these paths is maximized.
\end{enumerate}
If~$G$ has a $k$-path, then~$G[A]$ has a $k$-path or~$G[(\bigcup _{i = 1}^6 V(\P_i)) \cup B]$ has a $k$-path.
\end{lemma}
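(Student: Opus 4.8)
The plan is to follow the template of Lemma~\ref{lemma:cycles:through:separation}, but now with a case analysis over the possible ``shapes'' that a $k$-path $\P$ can take relative to the separator $\{x,y\}$. So suppose $G$ has a $k$-path $\P$. If $V(\P)\subseteq A$ we are done, so assume $\P$ uses at least one vertex of $B\setminus A$; symmetrically, if $\P$ avoids $A\setminus B$ entirely then $V(\P)\subseteq B$ and $G[(\bigcup_i V(\P_i))\cup B]$ trivially contains it, so assume $\P$ also uses at least one vertex of $A\setminus B$. Since $\{x,y\}$ separates $A\setminus B$ from $B\setminus A$, every maximal subpath of $\P$ lying inside $G[A]$ has both of its ``exit points'' in $\{x,y\}$ (or is an endpoint-segment of $\P$ ending at $x$ or $y$). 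The first key step is to enumerate how many such $A$-segments there are and which endpoints of $\P$, if any, lie in $A\setminus B$; because $\{x,y\}$ has only two vertices and a path visits each vertex at most once, $\P$ can enter $G[A]$ at most a bounded number of times. The outcome of this bookkeeping is exactly the six structure types: a single $x$- or $y$-path touching $\{x,y\}$ once with no backtracking (types $\P_1,\P_2$ when the relevant separator vertex is not an interior vertex of that segment, types $\P_3,\P_4$ when it is), an $xy$-subpath crossing fully through $A$ (type $\P_5$), or two disjoint segments, one hanging off $x$ and one off $y$ (type $\P_6$). I would verify that these genuinely exhaust the cases: a path cannot have three disjoint segments in $G[A]$ each meeting $\{x,y\}$, because that would require three distinct connection points into $B$.

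The second key step is the ``surgery'' argument, done once per case exactly as in Lemma~\ref{lemma:cycles:through:separation}: in each case the portion of $\P$ lying in $G[A]$ is of one of the six listed types, hence has length at most that of the corresponding $\P_i$ (or pair $\P_i$ for type 6), since each $\P_i$ is defined to be a maximum-length object of that type. Replace that $A$-portion of $\P$ by the corresponding $\P_i$. One must check that the result is again a path: the new piece shares with the rest of $\P$ only the separator vertices it is supposed to ($x$ and/or $y$), and $\bigcup_i V(\P_i)\subseteq A$ so no new vertex of $B\setminus A$ is introduced; the rest of $\P$ lies in $B$. For type 6 there is a small subtlety — the two replacement segments $\P_6$ must be vertex-disjoint from each other, which is built into the definition, and each attaches to $\P$ at its own separator vertex. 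Because the replacement never shortens the path, we still have a $k$-path, and it now lives inside $G[(\bigcup_{i=1}^6 V(\P_i))\cup B]$, as desired.

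I expect the main obstacle to be the case analysis in the first step: being careful that every way a simple path can weave through a two-vertex separator really does reduce to one of exactly these six shapes, and in particular handling the distinction between $\P_1/\P_2$ (where $y$, resp.\ $x$, is \emph{forbidden} from the segment) and $\P_3/\P_4$ (where it may appear as an interior vertex). The reason both versions are needed is that when $\P$ re-enters $A$ a second time, the second segment must avoid the separator vertex already consumed by the first segment, so one genuinely needs the ``$-\{y\}$'' / ``$-\{x\}$'' variants available for the combination used in type $\P_6$ and for the mixed cases. Once the enumeration is pinned down, the replacement arguments are all routine and identical in spirit to the cycle case; the length-preservation and ``still a path'' checks are immediate from the disjointness guaranteed by the separation and from the maximality in the definitions of $\P_1,\dots,\P_6$.
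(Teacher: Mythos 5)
Your general approach matches the paper's: decompose the $k$-path $\P$ into its $A$-side and $B$-side pieces, classify the possible shapes of the $A$-side into the six witness types, and perform maximality-based surgery on the $A$-side. The exhaustiveness argument (at most two connection points into $B$ because $|A\cap B|=2$) and the well-formedness of the replacements (the new piece lives in $A$, so it only meets the $B$-side at $\{x,y\}$) are also in line with the paper's proof. So the skeleton is right, and this is precisely where the paper goes.

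However, the criterion you state for choosing between $\P_1/\P_2$ and $\P_3/\P_4$ is wrong and would break the argument. You propose to use $\P_1$ (the maximum $x$-path in $G[A]-\{y\}$) when $y$ is not an interior vertex of the $A$-segment, and $\P_3$ (the maximum $x$-path in $G[A]$, allowed to pass through $y$) when it is. But ``interior vs.~not'' is the wrong test. Consider the case where the $A$-portion of $\P$ is a single connected $x$-path that reaches $y$ as its \emph{far endpoint} (so the $A$-portion is in fact an $xy$-path), while $y$ has no incident $\P$-edge on the $B$-side. Here $y$ is on the segment but not interior to it, so your rule selects $\P_1$; yet $\P_1$ must avoid $y$ entirely, and can be strictly shorter than the $A$-portion being replaced (which gains length precisely by ending at $y$), so the surgery can drop the path below length $k$. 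The correct selection rule, as in the paper, tracks the degree of $y$ in the two sides of the path separately: if any $\P$-edge incident to $y$ lies on the $B$-side, the replacement must avoid $y$ to prevent a vertex collision, and $\P_1$ is needed; if $y$ has no incident $\P$-edges on the $B$-side, the replacement is free to use $y$, and $\P_3$ applies (this subsumes both ``$y$ interior to the segment'' and ``$y$ the far endpoint''). Relatedly, your stated reason for needing the $-\{y\}$ / $-\{x\}$ variants --- disjointness from the ``other'' segment in the $\P_6$ case --- is not where they are used: $\P_6$ already has disjointness built into its definition. The $\P_1/\P_2$ variants are needed in the \emph{single-segment} case, to avoid colliding with the $B$-portion of $\P$, not with a second $A$-segment. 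Replacing your heuristic with the paper's degree-based bookkeeping ($\deg_{\P_A}$, $\deg_{\P_B}$ of each separator vertex) closes the gap.
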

\begin{proof}
Consider a $k$-path~$\P$ in~$G$. Let~$\P_A$ be the subgraph of~$\P$ consisting of the edges~$E(\P) \cap E(G[A])$. Similarly, let~$\P_B$ be the subgraph of~$\P$ consisting of the edges~$E(\P) \cap (E(G[B]) \setminus E(G[A]))$ such that every edge on~$\P$ is contained in exactly one of~$\P_A$ and~$\P_B$. Observe that the lemma is trivial if~$\P$ is contained within~$G[A]$ or within~$G[B]$. In the remainder we may therefore assume that~$\P$ contains a vertex~$a \in A \setminus B$ and a vertex~$b \in B \setminus A$. Since~$\{x,y\}$ is the separator corresponding to the separation~$(A,B)$, path~$\P$ must traverse at least one vertex of~$\{x,y\}$ to connect~$a$ and~$b$. This implies that there is a vertex~$z \in \{x,y\}$ such that~$\deg_{\P_A}(z) = \deg_{\P_B}(z) = 1$; in particular, we can choose~$z$ by starting at vertex~$a$ and traversing the path until the first time it is about to visit a vertex in~$B \setminus A$; observe that this even holds if~$\{x,y\}$ is an edge of~$G$ that is contained in~$\P$. Since the set of subgraphs~$\P_1, \ldots, \P_6$ is symmetric with respect to~$x$ and~$y$, we may assume without loss of generality that~$\deg_{\P_A}(x) = \deg_{\P_B}(x) = 1$. We proceed by a case distinction on the values~$\deg_{\P_A}(y)$ and~$\deg_{\P_B}(y)$. Observe that~$\deg_{\P_A}(y) + \deg_{\P_B}(y) \leq 2$ since the subgraphs~$\P_A$ and~$\P_B$ partition~$\P$ and a vertex on a path has at most two incident edges on that path.

\begin{enumerate}
	\item If~$\deg_{\P_A}(y) = \deg_{\P_B}(y) = 1$, we distinguish two cases:
	\begin{enumerate}
			\item If~$\P_A$ is a connected subgraph of~$\P$, then since the vertices~$x$ and~$y$ have degree one in this subgraph (their other incident edges on the path~$\P$ are contained in subgraph~$\P_B$) the subgraph~$\P_A$ forms an $xy$-path in~$G[A]$. Replacing this $xy$-subpath of~$\P$ by the maximum-length $xy$-path~$\P_5$ in~$G[A]$ we therefore obtain a path that is at least as long, proving the existence of a $k$-path in~$G[(\bigcup _{i = 1}^6 V(\P_i)) \cup B]$.
			\item Now assume that~$\P_A$ is a disconnected subgraph of~$\P$. Each connected component of~$\P_A$ contains one of the vertices~$\{x,y\}$, since~$\P$ is connected and these are the only vertices in~$G[A]$ for which some of their incident edges in~$G$ are not contained in~$G[A]$. Since~$\deg_{\P_A}(x) = \deg_{\P_A}(y) = 1$, there are exactly two connected components in~$\P_A$ and each component contains one of~$x,y$ as a degree-one vertex. (Since~$\P_A$ would be connected if~$\{x,y\}$ would be an edge on~$\P$, we know that~$\{x,y\} \not \in \P$.) Hence one of the components of~$\P_A$ is an $x$-path and the other one is an $y$-path. Since the combined length of these two paths is at most the combined length of the $x$-path and $y$-path in~$\P_6$, we can replace~$\P_A$ by~$\P_6$ to obtain a $k$-path in~$G[(\bigcup _{i = 1}^6 V(\P_i)) \cup B]$. 
	\end{enumerate}
	\item If~$\deg_{\P_A}(y) \geq 1$, then by the case above we have~$\deg_{\P_B}(y) = 0$. Since~$\P$ can only cross the separator~$\{x,y\}$ at vertex~$x$ (as~$\deg_{\P_B}(y) = 0$), the restriction of~$\P$ to~$G[A]$ consists of a single connected component which forms an $x$-path in~$G[A]$. Since~$\deg_{\P_B}(y) = 0$, vertex~$y$ is not used on~$\P_B$. Therefore we can replace the $x$-path~$\P_A$ in~$\P$ by the $x$-path~$\P_3$ to obtain a new path; by the maximality of~$\P_3$, this path is at least as long as~$\P$ which proves that~$G[(\bigcup _{i = 1}^6 V(\P_i)) \cup B]$ contains a $k$-path.
	\item Otherwise we have~$\deg_{\P_A}(y) = 0$, which implies that~$\P_A$ is an $x$-path in~$G[A]$ that does not contain the vertex~$y$. Replacing~$\P_A$ by~$\P_1$ we therefore obtain a path that is at least as long and which is contained in~$G[(\bigcup _{i = 1}^6 V(\P_i)) \cup B]$. Hence the latter graph contains a $k$-path.
\end{enumerate}

As the cases are exhaustive, this concludes the proof of Lemma~\ref{lemma:paths:through:separation}.
\end{proof}

Now we turn to the self-reduction that is needed for \kPath. The self-reduction procedure of Lemma~\ref{lemma:selfreduce:xypath} suffices to obtain a Turing kernel for the \kCycle case. The Turing kernelization for \kCycle queries the oracle to compute longest $xy$-paths. In the case of \kPath, Lemma~\ref{lemma:paths:through:separation} shows that we will need other information besides just a maximum $xy$-path. To avoid having to construct ad-hoc self-reductions for the various pieces of information needed in the lemma, we give a general theorem that shows how queries to an oracle for an arbitrary NP-complete language may be used to find maximum-size subgraphs specifying certain properties. We will need the following terminology.

\begin{definition}
A \emph{$2$-terminal graph} is a triple~$(G,x,y)$ where~$G$ is a graph and~$x,y$ are distinguished terminal vertices in~$G$ that are not necessarily distinct. A \emph{stable $2$-terminal edge property} is a function~$\Pi$ which takes as parameters a $2$-terminal graph~$(G,x,y)$ and an edge subset~$Y \subseteq E(G)$ and outputs \true or \false, such that the following holds: if~$\Pi((G,x,y), Y) = \true$ then for any subgraph~$G'$ of~$G$ that contains~$x$,~$y$, and the edge set~$Y$, we have~$\Pi((G',x,y),Y) = \true$. 
\end{definition}

For example, observe that the properties ``the edge set~$Y$ forms a path between~$x$ and~$y$'' and ``the edge set~$Y$ consists of two vertex-disjoint paths, one ending in~$x$ and one ending in~$y$'' are stable $2$-terminal edge properties. The following lemma shows how to find maximum-size edge sets satisfying a stable $2$-terminal edge property by self-reduction. 

For a parameterized problem~$\Q \subseteq \Sigma^* \times \mathbb{N}$, denote by~$\widetilde{Q}$ the classical language~$\widetilde{\Q} := \{ x\#1^k \mid (x,k) \in \Q\}$, where~$\#$ is a new character that is added to the alphabet.

\begin{lemma} \label{lemma:selfreduce:stableproperty}
Let~$\Q$ be a parameterized problem such that~$\widetilde{\Q}$ is NP-complete. Let~$\Pi$ be a polynomial-time decidable stable $2$-terminal edge property. There is an algorithm that, given a $2$-terminal graph~$(G,x,y)$, computes a maximum-cardinality set~$Y \subseteq E(G)$ that satisfies~$\Pi$, or determines that no nonempty edge set satisfies~$\Pi$. The algorithm runs in polynomial time when given access to an oracle that decides instances of~$\Q$ with size and parameter polynomial in~$|V(G)|$ in constant time.
\end{lemma}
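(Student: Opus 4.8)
The plan is to combine a reduction to a classical $\SAT$-style oracle with a standard self-reduction, in two phases: first determine the maximum cardinality~$m$ of an edge set satisfying~$\Pi$, then greedily identify such a set edge by edge. Throughout, the key tool is that, since~$\widetilde{\Q}$ is NP-complete and the verification ``does~$G$ contain a subgraph~$G'$ keeping~$x,y$ with edge set~$Y$ such that~$|Y| \geq t$ and~$\Pi((G',x,y),Y)=\true$'' is polynomial-time checkable (we check~$\Pi$ on the subgraph induced by~$Y$ together with~$x$ and~$y$, using that~$\Pi$ is polynomial-time decidable), the existential statement is in NP and hence can be decided by a single oracle query to~$\widetilde{\Q}$ after applying the polynomial-time Karp reduction from this NP language to~$\widetilde{\Q}$. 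Concretely, I would formulate, for integers~$t$, the language~$L = \{(G,x,y,t) \mid \exists\, Y \subseteq E(G),\ |Y| = t,\ \Pi((G,x,y),Y) = \true\}$; note that using ``$|Y| \geq t$'' is equivalent by stability (a superset of a satisfying set in a subgraph need not satisfy~$\Pi$, so one must be slightly careful here — I will work with ``there exists $Y'$ with $|Y'|=t$ and some satisfying superset $Y \supseteq Y'$'', which is cleanly in NP since the witness is the pair $(Y, \text{subgraph})$, and stability is only invoked on the \emph{reduce} side later). This $L$ is in NP; fix a polynomial-time many-one reduction~$g$ to~$\widetilde{\Q}$, and then each query ``$(G,x,y,t) \in L$?'' is answered by querying the~$\Q$-oracle on~$g(G,x,y,t)$.

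\emph{Phase 1: finding~$m$.} First query whether any nonempty~$Y$ satisfies~$\Pi$ (i.e.\ whether~$(G,x,y,1) \in L$, or rather its monotone variant); if not, report that no nonempty set exists and halt. Otherwise, since~$|E(G)| \leq \binom{|V(G)|}{2}$, binary search on~$t \in \{1, \ldots, |E(G)|\}$ using~$\Oh(\log |E(G)|)$ oracle queries yields the maximum value~$m$ for which a satisfying edge set of size~$m$ exists. (A plain linear scan also works and keeps the query bound~$\Oh(|E(G)|)$, which is fine since we only need polynomial time.) The sizes of all instances~$g(G,x,y,t)$ are polynomial in~$|V(G)|$ because~$g$ runs in polynomial time, so all oracle queries are of the permitted size.

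\emph{Phase 2: constructing~$Y$.} Maintain a current graph~$H$, initially~$G$, and order the edges of~$G$ as~$e_1, \ldots, e_{|E(G)|}$. For~$i = 1, \ldots, |E(G)|$, query whether~$H - e_i$ still contains a satisfying edge set of size~$m$ (that is, whether~$(H - e_i, x, y, m) \in L$); if yes, set~$H := H - e_i$, otherwise keep~$H$. At the end,~$H$ has exactly~$m$ edges and its edge set~$Y := E(H)$ satisfies~$\Pi$ as a subgraph containing~$x$ and~$y$: indeed throughout the loop~$H$ always contains \emph{some} size-$m$ satisfying subgraph, and upon termination no edge can be deleted while preserving this, which forces~$E(H)$ itself to be that set. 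Here is the one place stability is genuinely used: to conclude that~$\Pi((G,x,y),Y)=\true$ from~$\Pi((H',x,y),Y)=\true$ for the satisfying subgraph~$H' \subseteq H \subseteq G$ — stability lets us pass to the supergraph~$G$ (or to any needed intermediate). Output~$Y$. The total number of oracle queries is~$\Oh(|E(G)|) = |V(G)|^{\Oh(1)}$, each on an instance of size polynomial in~$|V(G)|$, and all intermediate computations (the reduction~$g$, checking~$\Pi$) run in polynomial time, giving the claimed bound.

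\emph{Main obstacle.} The delicate point is the direction of stability versus the direction needed for an NP witness. Stability says satisfying sets survive passing to \emph{larger} subgraphs (keeping~$x,y$), whereas the deletion step in Phase~2 passes to \emph{smaller} graphs; these are compatible because we track ``there still exists a size-$m$ satisfying set \emph{inside} the current graph,'' which is exactly an NP-verifiable property (witness: the set plus the subgraph on which~$\Pi$ holds), and stability is invoked only at the very end to lift the final set from the small witness subgraph back up to~$G$. I would spell out this asymmetry carefully, since it is the only subtlety; everything else is the routine search-to-decision self-reduction.
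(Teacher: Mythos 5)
Your algorithm is essentially the paper's: express the search as an NP language, many-one reduce it to~$\widetilde{\Q}$, determine the maximum witness size~$m$ by a polynomial number of oracle queries, then self-reduce by trying to delete each edge and keeping the deletion whenever a size-$m$ witness still exists. The size accounting (instances produced by the polynomial-time reduction have size polynomial in~$|V(G)|$) is also right. The problem is that you have the direction of stability reversed throughout. The definition says: if~$\Pi((G,x,y),Y) = \true$, then~$\Pi((G',x,y),Y) = \true$ for every \emph{subgraph}~$G'$ of~$G$ that contains~$x$,~$y$, and~$Y$. Stability descends to subgraphs; it does not ascend to supergraphs. Your claims ``Stability says satisfying sets survive passing to larger subgraphs'' and ``stability lets us pass to the supergraph~$G$'' state the opposite of the definition.

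As a result, your ``main obstacle'' paragraph misidentifies both where and how stability enters the proof. There is no tension between the direction of stability and the direction of the deletion step: they point the same way. Stability is invoked during the deletion phase, in the downward direction. If the current graph~$H$ still contains a size-$m$ witness~$Y$, and~$e \notin Y$, then~$H - e$ is a subgraph of~$H$ containing~$x$,~$y$, and~$Y$, so by stability~$Y$ is still a witness in~$H - e$. This is exactly what forces the final graph to have precisely~$m$ edges: a surplus edge~$e_i$ lying outside the surviving witness would, by this reasoning, have been deletable when it was tested at step~$i$, contradicting the fact that the oracle refused. The final ``lift'' you invoke --- concluding~$\Pi((G,x,y),Y) = \true$ from~$\Pi((H',x,y),Y)=\true$ for some~$H' \subseteq G$ --- is not an instance of stability and cannot be derived from it; it is also not needed by the argument once stability is applied in the correct direction.
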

\begin{proof}
The overall proof strategy is similar to that of Lemma~\ref{lemma:selfreduce:xypath} in that we first determine the maximum cardinality of a set that has the property and then use self-reduction to find it. The difference is that we have to use the NP-completeness transformation to~$\widetilde{\Q}$ to make our queries to an oracle for~$\Q$, rather than an oracle that decides \kCycle.

\paragraph{Determining the maximum size} Consider following decision problem~$L_\Pi$: given a $2$-terminal graph $(G,x,y)$ and an integer~$k$, is there an edge set~$Y$ of size at least~$k$ such that~$\Pi((G,x,y),Y) = \true$? This decision problem is contained in NP since a non-deterministic algorithm can decide an instance in polynomial time by guessing an edge set~$Y$ of size at least~$k$ and then checking whether it satisfies~$\Pi$; the latter can be done in polynomial time by our assumption on~$\Pi$. Since~$L_\Pi$ is contained in NP and~$\widetilde{Q}$ is NP-complete by assumption, there is a polynomial-time computable function~$f$ such that for any string~$s \in \Sigma^*$ we have~$s \in L_{\Pi}$ if and only if~$f(s) \in \widetilde{Q}$. As the transformation is polynomial-time, it cannot output a string of length superpolynomial in the input size and therefore~$|f(s)|$ is polynomial in~$|s|$. Observe that we can easily split a well-formed instance~$f(s) = z \# 1^k$ of~$\widetilde{Q}$ into an equivalent parameterized instance~$(z,k)$. Since the value of~$k$ is encoded in unary in instances of~$\widetilde{Q}$, the size~$|z|$ of the parameterized instance and its parameter~$k$ are both bounded by a polynomial in~$|s|$. Using these transformations together with the oracle for~$\Q$, we can obtain the answer to any instance~$s$ of~$L_\Pi$ by querying the $\Q$-oracle for a parameterized instance of size and parameter bounded polynomially in~$|s|$.

These observations allow us to determine the maximum cardinality of an edge set satisfying~$\Pi$, as follows. Let~$m$ be the number of edges in~$G$. For~$i \in [m]$ we consider the instance~$s_i = ((G,x,y), i)$ of problem~$L_\Pi$, which asks whether~$(G,x,y)$ has an edge set of size at least~$i$ satisfying~$\Pi$. We transform each instance~$s_i$ of~$L_\Pi$ to an instance of~$\Q$ and query the $\Q$-oracle for the corresponding instance. Observe that the queried instances have size polynomial in~$n := |V(G)|$. If the oracle only answers \no then there is no nonempty edge set satisfying~$\Pi$ and we output this. Otherwise we let~$k^*$ be the largest index for which~$f(s_i)$ is a \yes-instance of~$\widetilde{Q}$, which is clearly the maximum cardinality of a subset satisfying~$\Pi$.

\paragraph{Finding a maximum-cardinality set} Using the value of~$k^*$ we use self-reduction to find a corresponding satisfying edge set of size~$k^*$. Number the edges in~$G$ as~$e_1, \ldots, e_m$. Let~$H_0 := G$. For~$i \in [m]$, perform the following steps. Query the $\Q$-oracle for the parameterized instance corresponding to~$f((H_{i-1} - \{e_i\}, x,y), k^*)$. The oracle then determines whether the graph obtained from~$H_{i-1}$ by removing edge~$e_i$ has a set of size~$k^*$ satisfying~$\Pi$. If the oracle answers \yes, then define~$H_i := H_{i-1} - \{e_i\}$; otherwise let~$H_i := H_{i-1}$. Since~$H_0 = G$ has a satisfying set of size~$k^*$, the procedure maintains the invariant that~$H_i$ contains a satisfying edge set of size~$k^*$. By the definition of a stable edge property, a satisfying set will remain a satisfying set even when removing an edge that is not in the set from the graph. From this it easily follows that graph~$H_{m}$ contains exactly~$k^*$ edges, which form an edge set satisfying~$\Pi$. The edges that remain in graph~$H_m$ are therefore given as the output of the algorithm. It is easy to see that the algorithm takes polynomial time, given constant-time access to the oracle for~$\Q$. It only queries instances of size and parameter polynomial in~$n$.
\end{proof}

\subsection{\texorpdfstring{$k$}{k}-Path in restricted graph families} \label{section:kpath:turing}

Using the self-reduction algorithm presented in the previous section, we now give Turing kernels for \kPath on restricted graph families. The overall idea is the same as for \kCycle: the Turing kernelization repeatedly shrinks the graph by finding a suitable separation of order (at most) two and restricting its smaller side to only the vertices of a maximum-size witness, for each of the six ways in which a longest path crosses a separation. The separations are chosen so that the smaller side has size polynomial in~$k$, allowing the six witnesses to be computed by queries to the oracle whose size is polynomial in the parameter.

The main workhorse of the procedure is the subroutine given by Algorithm~\ref{alg:reduce-p}, which reduces a separation~$(A,B)$ after communication with an oracle for instances of size polynomial in~$|A|$. The subroutine is similar as that for the \kCycle case (Algorithm~\ref{alg:reduce-c}), but works for separations of order one or two.

\begin{algorithm}[t]
\caption{\textsc{Reduce-P}$(G', A, B, k)$} \label{alg:reduce-p}
\begin{algorithmic}[1]
\REQUIRE{$(A,B)$ is a separation in~$G'$ of order one or two.}
\ENSURE{The existence of a $k$-path in~$G'$ is reported, or the graph~$G'$ is updated by removing all but~$\Oh(k)$ vertices of~$A \setminus B$. Upon completion, the graph~$G'[A] - (A \cap B)$ has at most one connected component if~$|A \cap B| = 1$, and at most 12 connected components otherwise. If~$G'$ initially contained a $k$-path, then the deletions preserve this property.}
\vspace{0.2cm}
		\STATE Let~$Z := A \cap B$
		\IF{the \kPath oracle applied to~$(G'[A], k)$ answers \yes}\label{line:kpath:directquery}
			\STATE Report the existence of a $k$-path in~$G'$ and halt
		\ELSIF{$Z = \{x,y\}$ has cardinality two}
			\STATE Apply Lemma~\ref{lemma:selfreduce:stableproperty} to~$(G'[A], x, y)$, find~$\P_1, \ldots, \P_6 \subseteq G'[A]$ as in Lem.~\ref{lemma:paths:through:separation}\label{line:kpath:twosepwitness}
			\STATE Remove the vertices~$A \setminus (\bigcup_{i=1}^6 V(\P_i))$ from~$G'$
		\ELSIF{$Z = \{x\}$ has cardinality one}
			\STATE Apply Lemma~\ref{lemma:selfreduce:stableproperty} to~$(G'[A], x, x)$ to find a longest $x$-path~$\P_A$ in~$G'[A]$\label{line:kpath:onesepwitness}
			\STATE Remove the vertices~$A \setminus V(\P_A)$ from~$G'$
		\ENDIF
\end{algorithmic}
\end{algorithm}

\begin{lemma} \label{lemma:reduce-p:correct}
Algorithm~\ref{alg:reduce-p} satisfies its specifications. It works in polynomial time, when given constant-time access to an oracle for an arbitrary parameterized problem~$\Q$ whose underlying classical problem~$\widetilde{Q}$ is NP-complete. The oracle is queried for instances of size polynomial in~$|A|$.
\end{lemma}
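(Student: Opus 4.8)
The plan is to verify the three assertions separately --- that the deletions satisfy the postcondition, that a reported $k$-path genuinely exists in $G'$, and that the procedure runs in polynomial time with all oracle queries of size polynomial in $|A|$ --- by feeding the structural statements of Section~\ref{section:properties:paths} into the self-reduction of Lemma~\ref{lemma:selfreduce:stableproperty}. The reporting case is immediate: if line~\ref{line:kpath:directquery} halts, the oracle has certified a $k$-path in $G'[A]\subseteq G'$, hence in $G'$; in the other two branches $G'[A]$ has no $k$-path, which is exactly the hypothesis required by Observation~\ref{observation:paths:through:cutvertex} and by Lemma~\ref{lemma:paths:through:separation}. The direct query in line~\ref{line:kpath:directquery}, nominally to a \kPath oracle, is routed to the available $\Q$-oracle via the polynomial-time many-one reduction from \kPath (which is in NP) to the NP-complete $\widetilde{\Q}$, at a polynomial blow-up in size; the invocations of Lemma~\ref{lemma:selfreduce:stableproperty} are run directly with that $\Q$-oracle, which already produces only queries of size polynomial in $|V(G'[A])|\le|A|$.

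For the order-two branch (line~\ref{line:kpath:twosepwitness}) I would first record that each of $\P_1,\dots,\P_6$ in Lemma~\ref{lemma:paths:through:separation} is a maximum-cardinality edge set for a polynomial-time decidable stable $2$-terminal edge property (false on the empty set), so that Lemma~\ref{lemma:selfreduce:stableproperty} applies: $\P_3$ and $\P_4$ use ``$Y$ is the edge set of a path with $x$, resp.\ $y$, as an endpoint'' on $(G'[A],x,x)$, resp.\ $(G'[A],y,y)$; $\P_1$ and $\P_2$ use the same property on $(G'[A]-y,x,x)$ and $(G'[A]-x,y,y)$; $\P_5$ uses ``$Y$ is the edge set of an $xy$-path'' on $(G'[A],x,y)$; and $\P_6$ uses ``$Y$ is the disjoint union of the edge set of a path ending in $x$ and the vertex-disjoint edge set of a path ending in $y$'' on $(G'[A],x,y)$. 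Each of these depends only on $Y$, $x$ and $y$, hence is stable, and is checkable in polynomial time. When Lemma~\ref{lemma:selfreduce:stableproperty} reports that no nonempty edge set works, the corresponding $\P_i$ is taken to be the trivial subgraph on its terminal vertex (vertices); in particular $x\in V(\P_1)$ and $y\in V(\P_2)$, so $\{x,y\}\subseteq\bigcup_{i=1}^{6}V(\P_i)$ and the graph that remains after the deletion is exactly $G'[(\bigcup_{i=1}^{6}V(\P_i))\cup B]$. Lemma~\ref{lemma:paths:through:separation} now yields that this graph has a $k$-path whenever $G'$ does, which is the safety claim.

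The quantitative parts of the postcondition for this branch use the no-$k$-path assumption: every path in $G'[A]$ has at most $k$ vertices, so $\P_1,\dots,\P_5$ have at most $k$ vertices each and each of the two vertex-disjoint components of $\P_6$ has at most $k$ vertices (a longer component would already be a $k$-path in $G'[A]$); hence $|\bigcup_{i=1}^{6}V(\P_i)|\le 7k=\Oh(k)$, so all but $\Oh(k)$ vertices of $A\setminus B=A\setminus\{x,y\}$ are removed. For the component count, deleting the at most two vertices of $\{x,y\}$ from a single $\P_i$ leaves at most two pieces --- $\P_1,\P_2,\P_5$ stay connected, $\P_3,\P_4$ can split because $y$, resp.\ $x$, may be an interior vertex, and $\P_6$ splits into its two components --- and every component of the reduced $G'[A]-\{x,y\}$ is a union of such pieces, so there are at most $2\cdot 6=12$ of them. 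The order-one branch (line~\ref{line:kpath:onesepwitness}) is the same with Observation~\ref{observation:paths:through:cutvertex} in place of Lemma~\ref{lemma:paths:through:separation}: $\P_A$ comes from Lemma~\ref{lemma:selfreduce:stableproperty} on $(G'[A],x,x)$ with the property ``edge set of a path ending in $x$'', it has at most $k$ vertices and contains $x$, the reduced graph is $G'[V(\P_A)\cup B]$, and $G'[A]-\{x\}$ becomes a subgraph of the connected path $\P_A-x$, hence has at most one component. Finally, the query in line~\ref{line:kpath:directquery} may be skipped when $k\ge|A|$ (no $k$-path exists then) and otherwise has size polynomial in $|A|$; combined with the routing through $\widetilde{\Q}$, the query bound of Lemma~\ref{lemma:selfreduce:stableproperty}, and the fact that all other work is polynomial-time graph bookkeeping, the running-time and query-size claims follow.

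The step I expect to demand the most care is the order-two analysis: pinning down the six stable properties cleanly --- in particular using the auxiliary graphs $G'[A]-x$ and $G'[A]-y$ for $\P_1$ and $\P_2$, and handling the degenerate empty-edge-set cases so that $x$ and $y$ survive the deletion --- together with the bookkeeping that caps the number of surviving connected components at a constant. Once these are settled, safety and both size bounds follow directly from Lemma~\ref{lemma:paths:through:separation}, Observation~\ref{observation:paths:through:cutvertex}, and Lemma~\ref{lemma:selfreduce:stableproperty}.
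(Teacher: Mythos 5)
Your proof is correct and follows essentially the same route as the paper's: Observation~\ref{observation:paths:through:cutvertex} and Lemma~\ref{lemma:paths:through:separation} for safety of the deletions, Lemma~\ref{lemma:selfreduce:stableproperty} for constructing the witness structures by self-reduction, and a component/size count to verify the postcondition. You go into somewhat more detail than the paper --- naming the six stable properties, handling the degenerate empty-edge-set cases, and observing that the direct query can be skipped when $k\ge|A|$ --- and the one wording slip (``$G'[A]-\{x\}$ becomes a subgraph of $\ldots\P_A-x$'' should be that $\P_A-x$ is a \emph{spanning} subgraph of $G'[A]-x$) does not affect the conclusion.
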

\begin{proof}
Let us first discuss correctness of the procedure. If the oracle finds a $k$-path in~$G'[A]$, then reporting this fact is clearly correct. If~$G'[A]$ has no $k$-path, this gives us a bound on the maximum-size witness structures found in Lines~\ref{line:kpath:twosepwitness} and~\ref{line:kpath:onesepwitness}: each subgraph described in Lemma~\ref{lemma:paths:through:separation} consists of at most two paths, so if no $k$-path exists each witness structure has~$\Oh(k)$ vertices. If~$G'[A]$ has no $k$-path, removing all vertices of~$A$ except those in the witness structures preserves a $k$-path in~$G'$, if one exists. For separations of order one this follows from Observation~\ref{observation:paths:through:cutvertex}, while Lemma~\ref{lemma:paths:through:separation} justifies the order-two case. Since all vertices of~$A$ are removed except for those in constantly many witness structures, which have~$\Oh(k)$ vertices each, the size reduction claimed in the postcondition is achieved.

Let us consider the number of connected components of~$G'[A] - (A \cap B) = G'[A] - Z$ upon termination. If~$|Z| = 1$, then upon termination~$G'[A] - Z$ consists of the vertices~$V(\P_A)$ of the maximum-length $x$-path found by Lemma~\ref{lemma:selfreduce:stableproperty}, which clearly form at most one component. If~$|Z| = 2$, then upon termination~$G'[A] - Z$ is the graph induced by the vertices of the six different types of witness structures. Since each type of witness yields at most two connected components after removing~$Z = \{x,y\}$, it follows that~$G'[A] - Z$ has at most 12 connected components.

Since an execution consists of some simple graph manipulations, one oracle query, and one invocation of Lemma~\ref{lemma:selfreduce:stableproperty}, the running time of Algorithm~\ref{alg:reduce-p} is polynomial when given suitable oracle access. As the graph parameter to Lemma~\ref{lemma:selfreduce:stableproperty} is~$G'[A]$, the queries produced by the lemma are of size polynomial in~$|A|$. The type of oracle described in the statement of Lemma~\ref{lemma:reduce-p:correct} is compatible with what is required by Lemma~\ref{lemma:selfreduce:stableproperty}. In general, the direct oracle query in Line~\ref{line:kpath:directquery} can be transformed into a query to~$\Q$ using an NP-completeness transformation as in the proof of Lemma~\ref{lemma:selfreduce:stableproperty}. However, this is not needed in our Turing kernel applications: the oracle will be able to answer the \kPath query about the induced subgraph~$G'[A]$ directly.
\end{proof}

Algorithm~\ref{alg:reduce-p} is used in a bottom-up reduction procedure on a Tutte decomposition to obtain polynomial Turing kernels for \kPath on several restricted graph classes.

\begin{theorem} \label{theorem:kpath}
The \kPath problem has a polynomial-size Turing kernel when restricted to planar graphs, graphs of maximum degree~$t$, claw-free graphs, or $K_{3,t}$-minor-free graphs, for each constant~$t \geq 3$.
\end{theorem}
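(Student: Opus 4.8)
The plan is to run the Decompose--Query--Reduce scheme of Theorems~\ref{theorem:planarkcycle} and~\ref{theorem:otherkcycle}, but with Algorithm~\ref{alg:reduce-p} in place of Algorithm~\ref{alg:reduce-c} and with the six witness structures of Lemma~\ref{lemma:paths:through:separation} in place of Lemma~\ref{lemma:cycles:through:separation}. Since a $k$-path is confined to a single connected component, I would first split the input into connected components and reduce to the case that $G$ is connected; for a connected graph every adhesion of a Tutte decomposition is a minimal separator of size one or two, which is exactly the kind of separation \textsc{Reduce-P} expects. For the \emph{decompose} step, compute a Tutte decomposition $(T,\X)$ of $G$ in linear time via Hopcroft--Tarjan~\cite{HopcroftT73}. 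By Lemma~\ref{lemma:torsos:restricted}, for each of the four classes $\G$ the torso of every bag again lies in $\G$ (for maximum degree $t=3$, use that such graphs also have maximum degree $\le 4$ and invoke Theorem~\ref{theorem:circumference}(d) with $\Delta=4$). As each torso is a triconnected topological minor of $G$, the relevant part of Theorem~\ref{theorem:circumference} shows that if some bag has size exceeding a fixed polynomial $p_\G(k)$, then that torso has circumference at least $k+1$, hence a path of length at least $k$, so by the path version of Observation~\ref{observation:cycleminor:cyclesubgraph} we may output \yes. Otherwise the decomposition has width at most $p_\G(k)$; we copy it to $(T',\X')$, root it, and from here on only delete vertices (implicitly pruning leaf bags contained in their parent), maintaining that $(T',\X')$ is a tree decomposition of adhesion at most two of the current $G'$.

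For the \emph{query and reduce} step, process $T'$ bottom-up exactly as in Algorithm~\ref{alg:queryreducecycle}, calling \textsc{Reduce-P} instead of \textsc{Reduce-C}. The analogues of Claims~\ref{claim:kcycle:removefromsubtree} and~\ref{claim:kcycle:abseparation} carry over (dropping the ``minimal separator'' clause, which \textsc{Reduce-P} does not require), using Propositions~\ref{proposition:separation:from:edge} and~\ref{proposition:separation:from:children} to check that each pair $(A,B)$ handed to \textsc{Reduce-P} is a separation of order one or two, and $\X'(i)=\X(i)$ throughout the processing of $i$ by the argument of Claim~\ref{claim:kcycle:abseparation}. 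The first loop reduces the separation at each tree edge $\{i,j\}$, leaving each child subtree $T'[j]$ with only $\Oh(k)$ vertices outside $\X'(i)$ and with $G'[A_j]-(\X'(i)\cap\X'(j))$ having at most one component (size-one adhesion) or at most $12$ components (size-two adhesion). I would then add cleanup loops that eliminate children sharing an adhesion: while at least two children of $i$ share a size-one adhesion $\{x\}$, merge them (Proposition~\ref{proposition:separation:from:children}) and call \textsc{Reduce-P}, which by Proposition~\ref{proposition:numbercomponents:numbersubtrees} removes at least one child since the merged $A$-side induces at most one component after deleting $x$; and while at least $13$ children share a size-two adhesion $\{x,y\}$, merge $13$ of them and call \textsc{Reduce-P}, which by the ``at most $12$ components'' guarantee again removes at least one child. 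Each such call strictly decreases the child count, so this terminates with at most $|\X(i)|$ children having a size-one adhesion and, since distinct size-two minimal separators inside $\X(i)$ correspond to distinct edges of $\torso(G,\X(i))$ by Proposition~\ref{proposition:minimalsep:in:tutte:edge:in:torso}, at most $12\,|E(\torso(G,\X(i)))|$ children having a size-two adhesion. Since the width is at most $p_\G(k)$ and each surviving child represents only $\Oh(k)$ vertices outside $\X'(i)=\X(i)$, the resulting bound $|\X'(T'[i])| = \Oh(k)\cdot(|\X(i)| + |E(\torso(G,\X(i)))|) + |\X(i)|$ is polynomial in $k$.

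For the \emph{queries and conclusion}, note that in \textsc{Reduce-P} the direct query in line~\ref{line:kpath:directquery} is on $G'[A]$, an induced subgraph of $G$ and hence a member of $\G$, so the $\G$-restricted \kPath oracle answers it directly. The remaining queries arise inside Lemma~\ref{lemma:selfreduce:stableproperty} while computing the six witnesses of Lemma~\ref{lemma:paths:through:separation}; since \kPath is NP-complete on each of the four classes~\cite{LiCM00}, take $\Q$ to be precisely that restricted problem and let Lemma~\ref{lemma:selfreduce:stableproperty} funnel every such query through the NP-completeness transformation, which blows instances up only polynomially. As the $A$-sides on which \textsc{Reduce-P} is invoked consist of one or two already-reduced child subtrees, they have $\mathrm{poly}(k)$ vertices, so all oracle queries have size and parameter $\mathrm{poly}(k)$ (Lemma~\ref{lemma:reduce-p:correct}). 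When the recursion finishes at the root we have $|V(G')| = \mathrm{poly}(k)$, and a final direct query of the $\G$-restricted oracle on $(G',k)$ gives the answer; correctness follows from the postconditions of \textsc{Reduce-P} (each deletion preserves the presence of a $k$-path, and a reported $k$-path in $G'\subseteq G$ is a genuine one in $G$), using Observation~\ref{observation:paths:through:cutvertex} and Lemma~\ref{lemma:paths:through:separation} for the two separator sizes, exactly as Claims~\ref{claim:kcycle:removechild}--\ref{claim:planar:query:suitable} were used in Theorem~\ref{theorem:planarkcycle}.

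The step I expect to be the main obstacle is the child-count bound at a node. In the biconnected \kCycle case every adhesion has size exactly two and \textsc{Reduce-C} leaves at most one component, so merging any two equiadjacent children already eliminates one; here we must cope simultaneously with size-one adhesions, with the weaker ``at most $12$ components'' guarantee of \textsc{Reduce-P} for size-two adhesions (which forces the ``merge $13$ at a time'' device), and with both kinds of adhesion sharing a single bag --- getting the termination and the polynomial size accounting right is the crux. A secondary subtlety, already flagged in the \kCycle proof but unavoidable here too, is that the witness-finding queries need not stay inside $\G$ (for instance, an $xy$-extension of a claw-free graph may contain a claw), which is exactly why they must be routed through the NP-completeness transformation rather than sent to the $\G$-restricted oracle directly.
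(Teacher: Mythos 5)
Your proposal matches the paper's proof essentially line for line: connected-component split, Tutte decomposition plus Lemma~\ref{lemma:torsos:restricted} and Theorem~\ref{theorem:circumference} to cap the width, the bottom-up \textsc{Reduce-P} sweep with the ``merge 13 at a time'' device for size-two adhesions and pairwise merging for size-one adhesions, the $|\X'(T'[i])| = \Oh\bigl(k\cdot(|\X(i)| + |E(\torso(G,\X(i)))|)\bigr)$ accounting via Proposition~\ref{proposition:minimalsep:in:tutte:edge:in:torso}, and the routing of witness queries through the NP-completeness transformation of Lemma~\ref{lemma:selfreduce:stableproperty} while leaving the $G'[A]$ decision query for the $\G$-restricted oracle. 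Your explicit handling of the $t=3$ case in Theorem~\ref{theorem:circumference}(d) (which formally needs $\Delta\ge 4$) is a small point the paper glosses over, but otherwise this is the same argument.
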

\begin{proof}
We will prove that, for each choice of restricted graph class~$\G$, an instance~$(G \in \G,k)$ of \kPath can be solved in polynomial time when given access to a constant-time oracle that decides \kPath for instances~$(H \in \G, k')$ in which~$|V(H)|$ and~$k'$ are bounded polynomially in~$k$. Since the \kPath problem is NP-complete for all graph classes in the theorem statement (cf.~\cite{LiCM00}), the classical language (in which the parameter is encoded in unary) underlying the parameterized \kPath problem restricted to~\G is NP-complete in all cases. We may therefore safely invoke Algorithm~\ref{alg:reduce-p} during the reduction algorithm.

Since a path is contained entirely within a single connected component, by running the algorithm separately on each connected component of the input graph we may assume that the input instance~$(G,k)$ is connected. 

\paragraph{Decompose} We compute a Tutte decomposition~$(T,\X)$ of~$G$~\cite{HopcroftT73}. Observe that if the circumference of a graph is~$k+1$, then it contains a $k$-path. Using Lemma~\ref{lemma:torsos:restricted}, the same argumentation as in Claim~\ref{claim:planar:decomposition:width} (but using a different polynomial bound, given by Theorem~\ref{theorem:circumference}) therefore justifies answering \yes if the width of~$(T,\X)$ exceeds some fixed polynomial in~$k$. If not, we make a copy~$G'$ of~$G$, a copy~$(T',\X')$ of the decomposition, and root~$T'$ at an arbitrary vertex.  

\paragraph{Query and reduce} The procedure that reduces the \kPath instance based on information obtained by oracle queries is given as Algorithm~\ref{alg:queryreducepath} on page~\pageref{alg:queryreducepath}. We use it in the same way as for \kCycle: we apply the reduction algorithm to the root node of the decomposition~$(T',\X')$ of~$G'$ with integer~$k$. If the procedure reports the existence of a $k$-path then the Turing kernelization answers \yes. If the procedure finishes without reporting a $k$-path, we query the \kPath oracle for the final reduced graph~$G'$ with parameter value~$k$. By the postcondition of the reduction algorithm, after it completes on the root node~$r$ the number of vertices that remain in the graph is~$\Oh(k \cdot (|E(\torso(G, \X(r)))| + |\X(r)|))$. Since~$|\X(r)|$ is bounded by a fixed polynomial in~$k$ (that depends on the graph class) in the decomposition phase and the size of a graph is obviously at most quadratic in its order, the queried instance~$(G',k)$ has size polynomial in~$k$. The answer of the oracle is given as the output of the Turing kernelization algorithm.

\begin{algorithm}[t]
\caption{\textsc{Kernelize-Path}$(G, G', (T', \X'), i, k)$} \label{alg:queryreducepath}
\begin{algorithmic}[1]
\REQUIRE{$G'$ is an induced subgraph of~$G$ with a tree decomposition~$(T',\X')$ of adhesion at most two. A node~$i$ of~$T'$ is specified.}
\ENSURE{The existence of a $k$-path in~$G$ is reported, or the graph~$G'$ and decomposition~$(T',\X')$ are updated by removing vertices of~$\X'(T'[i]) \setminus \X'(i)$, resulting in~$|\X'(T'[i])| \in \Oh(k \cdot (|E(\torso(G, \X(i)))| + |\X(i)|))$. If~$G'$ initially contained a $k$-path, then the deletions preserve this property.}
\vspace{0.2cm}
	\FOREACH{child~$j$ of~$i$ in~$T'$}
		\STATE Recursively execute \textsc{Kernelize-Path}$(G', (T', \X'), j, k)$
		\STATE Let~$Z := \X'(i) \cap \X'(j)$, let~$A := \X'(T'[j])$, and let~$B := (V(G') \setminus A) \cup Z$
		\STATE \textsc{Reduce-P}$(G', A, B, k)$\label{line:kpath:shrinkchild}
	\ENDFOR
	\FOREACH{vertex $x \in \X'(i)$}
		\WHILE{there are distinct children~$j_1, j_2$ of~$i$ in~$T'$ such that $\X'(i) \cap \X'(j_1) = \X'(i) \cap \X'(j_2) = \{x\}$}
			\STATE Let~$A := \X'(T'[j_1]) \cup \X'(T'[j_2])$, let~$B := (V(G') \setminus A) \cup \{x\}$
			\STATE \textsc{Reduce-P}$(G', A, B, k)$
		\ENDWHILE
	\ENDFOR
	\FOREACH{pair $\{x,y\} \in \binom{\X'(i)}{2}$}
		\WHILE{there are 13 distinct children~$j_1, \ldots, j_{13}$ of~$i$ in~$T'$ such that $\X'(i) \cap \X'(j_{s}) = \{x,y\}$ for all~$s \in [13]$}\label{line:kpath:removechild}
			\STATE Let~$A := \bigcup_{s=1}^{13} \X'(T'[j_s])$ and let~$B := (V(G') \setminus A) \cup \{x,y\}$
			\STATE \textsc{Reduce-P}$(G', A, B, k)$
		\ENDWHILE
	\ENDFOR
\end{algorithmic}
\end{algorithm}

Similar to the \kCycle Turing kernel, the procedure to reduce the subtree rooted at a node~$i$ has two stages. First it recursively shrinks subtrees rooted at the children~$j$ of~$i$. Afterward it reduces the number of children of~$i$, by considering sets of children that have the same adhesion to their parent bag~$i$, defining a separation based on them, and invoking Algorithm~\ref{alg:reduce-p}. Since that algorithm shrinks the number of connected components of~$G'[A] - \{x,y\}$ to at most 12, when there are 13 children with the same adhesion one of the child subtrees is guaranteed to disappear after such a reduction step. This shows that the \textbf{while}-loop of Line~\ref{line:kpath:removechild} terminates in polynomial time. As the recursive process consists of one bottom-up sweep over the Tutte decomposition, together with the bound for Algorithm~\ref{alg:reduce-p} given by Lemma~\ref{lemma:reduce-p:correct} this establishes the overall polynomial-time running time. The correctness of this approach follows by induction, using that Lemma~\ref{lemma:reduce-p:correct} guarantees that invocations of Algorithm~\ref{alg:reduce-p} preserve the existence of a $k$-path. The pairs~$(A,B)$ defined in the algorithm are valid separations by Propositions~\ref{proposition:separation:from:edge} and~\ref{proposition:separation:from:children}, since~$(T',\X')$ is invariantly a tree decomposition. To prove that the algorithm satisfies its specifications, it remains to prove the size bound claimed in the postcondition.

\begin{numberedclaim} \label{claim:path:postcondition:sizebound}
When the execution for node~$i$ terminates we have: $$|\X'(T'[i])| \in \Oh(k \cdot (|E(\torso(G, \X(i)))| + |\X(i)|)).$$
\end{numberedclaim}
\begin{claimproof}
By the postcondition and induction, the call to Algorithm~\ref{alg:reduce-p} in the first \textbf{for each} loop removes, for each child~$j$ of~$i$, all but~$\Oh(k)$ vertices of~$A \setminus B = \X'(T'[j]) \setminus \X'(i)$. Upon completion, each child subtree therefore represents~$\Oh(k)$ vertices of~$G'$ that are not in~$\X'(i)$ themselves. The second \textbf{for each} loop ensures that, upon termination, for each vertex~$x$ in~$\X'(i)$ there is at most one child of~$i$ whose adhesion to~$i$ is exactly~$\{x\}$. Similarly, the third \textbf{for each} loop ensures that the number of children with identical size-2 adhesions~$\{x,y\}$ is at most~$12$. We claim that all adhesions between~$i$ and a child~$j$ have size one or two, and hence that all children of~$i$ are accounted for in this way. To see that, observe that the execution for node~$i$ does not remove vertices from the bag of node~$i$ or its ancestors. Hence during the execution for node~$i$ the adhesion of~$i$ to its children in~$(T',\X')$ equals the adhesion in the original Tutte decomposition~$(T,\X)$. Since we started from a connected graph~$G$, each adhesion has size at least one. By the properties of a Tutte decomposition, each adhesion has size at most two. Hence each child of~$i$ has an adhesion of one or two to~$i$. Since each nonempty adhesion in a Tutte decomposition is a minimal separator, it follows that for each child~$j$ of~$i$ with a size-2 adhesion~$\{x,y\}$, the set~$\{x,y\}$ is a minimal separator in the original graph~$G$. By Proposition~\ref{proposition:minimalsep:in:tutte:edge:in:torso}, for each child with a size-$2$ adhesion~$\{x,y\}$, the corresponding pair is connected by an edge. It follows that, upon termination for node~$i$, the number of children with a size-$2$ adhesion is at most~$12 |E(\torso(G, \X(i)))|$. The number of children with a size-$1$ adhesion is at most~$|\X(i)|$. The application of \textsc{Reduce-P} in Line~\ref{line:kpath:shrinkchild} ensures that for each child~$j$, all but~$\Oh(k)$ vertices of~$\X'(T'[j]) \setminus \X'(i)$ are removed. Hence each child contributes~$\Oh(k)$ vertices to~$\X'(T'[i]) \setminus \X'(i)$. As we just argued that the number of children of~$i$ upon termination is bounded by~$\Oh(|E(\torso(G, \X(i)))| + |\X'(i)|)$, while the bag of~$i$ contributes another~$\X'(i) = \X(i)$ vertices, the claim follows.
\end{claimproof}

Claim~\ref{claim:path:postcondition:sizebound} shows that Algorithm~\ref{alg:queryreducepath} satisfies its postcondition, if the input satisfies the precondition. With the previous argumentation, this shows that the algorithm runs in polynomial time with access to an oracle for answering queries on instances of size polynomial in~$k$. This concludes the proof of Theorem~\ref{theorem:kpath}.
\end{proof}

\section{Constructing solutions} \label{section:constructing:solutions}

Motivated by the definition of Turing kernelization, we have presented our results in terms of decision problems where the goal is to answer \yes or \no; this also simplified the presentation. In practice, one might want to \emph{construct} long paths or cycles rather than merely report their existence. Our techniques can be adapted to construct a path or cycle of length at least~$k$, if one exists.

\begin{corollary}
For each graph class~$\G$ as described in Theorem~\ref{theorem:kpath}, there is an algorithm that, given a pair~$(G \in \G,k)$ either outputs a $k$-cycle (respectively $k$-path) in~$G$, or reports that no such object exists. The algorithm runs in polynomial time when given constant-time access to an oracle that decides \kCycle (respectively \kPath) on~$\G$ for instances of size and parameter bounded by some polynomial in~$k$ (whose degree depends on~$\G$).
\end{corollary}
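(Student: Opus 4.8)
The plan is to reduce the construction task to the decision task by \emph{self-reduction}, using the decision Turing kernelizations of Theorems~\ref{theorem:planarkcycle},~\ref{theorem:otherkcycle}, and~\ref{theorem:kpath} as black boxes. Fix one of the graph classes~$\G$ and consider an instance~$(G \in \G, k)$; I describe the $k$-path case, the $k$-cycle case being entirely analogous with the $k$-Cycle kernels of Theorems~\ref{theorem:planarkcycle}--\ref{theorem:otherkcycle}. First run the decision Turing kernelization on~$(G,k)$; if it reports that no $k$-path exists, output this and halt. Otherwise a solution exists, and we will carve one out of~$G$ by deleting vertices and edges that are not needed, querying the decision kernelization to test each candidate deletion.

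Maintain a graph~$H$, initialized to~$G$, together with the invariant that~$H$ contains a $k$-path. First sweep over~$V(G)$: for each vertex~$v$ still present in~$H$, run the decision Turing kernelization on~$(H-v,k)$ and set~$H := H-v$ if the answer is \yes. Because the graph only shrinks during the sweep, at the end every~$v \in V(H)$ satisfies: $H - v$ has no $k$-path (the induced subgraph on which~$v$'s removal was rejected is a supergraph of the final~$H$). Picking a $k$-path~$P$ in the final~$H$ of length \emph{exactly}~$k$, any vertex outside~$V(P)$ would have been deletable, a contradiction; hence~$|V(H)| = k+1$. Next sweep over the edges of~$H$ in the same fashion, deleting an edge whenever a $k$-path survives; afterwards no edge is deletable, so every edge of~$H$ lies on a fixed length-$k$ path~$P$, giving~$E(H) = E(P)$. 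Finally make one more pass over the (at most~$k+1$) vertices to discard any vertices that became isolated during the edge sweep; this cannot spoil edge-minimality. The resulting~$H$ satisfies~$V(H) = V(P)$ and~$E(H) = E(P)$, i.e.\ $H = P$, and we output the path~$P$. For $k$-cycle, the identical argument with ``a $k$-cycle~$C$'' in place of ``a length-exactly-$k$ path~$P$'' yields~$V(H) = V(C)$ after the vertex sweep and~$H = C$ after the edge sweep and cleanup, and we output~$C$ (a cycle of length at least~$k$). A routine monotonicity argument shows that this whole process stabilizes after polynomially many kernelization calls.

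The one point requiring care---and the main obstacle---is ensuring that every oracle query stays inside~$\G$ and has size polynomial in~$k$. Vertex deletions are harmless: $H-v$ is an induced subgraph of~$G \in \G$, and all four classes are hereditary, so running the decision Turing kernelization on~$(H-v,k)$ is legitimate and, by Theorems~\ref{theorem:planarkcycle}--\ref{theorem:kpath}, produces only oracle queries of size polynomial in~$k$, regardless of how large~$H$ still is. Edge deletions, on the other hand, can leave a class that is not closed under edge removal, namely the claw-free case. This is handled exactly as in the proofs of Theorems~\ref{theorem:otherkcycle} and~\ref{theorem:kpath}: since $k$-Path and $k$-Cycle are NP-complete on each of these classes~\cite{LiCM00}, a query about~$H \setminus e$ is first composed with the polynomial-time NP-completeness transformation into an equivalent instance of the $\G$-restricted problem, whose size is polynomial in that of~$H \setminus e$. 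Crucially, we perform the vertex sweep \emph{before} the edge sweep, so that by the time edges are deleted~$H$ has only~$k+1$ vertices and~$O(k^2)$ edges; the transformed queries then have size polynomial in~$k$ and can even be answered by a single direct oracle call. Since the vertex sweep makes~$\Oh(|V(G)|)$ and the edge sweep~$\Oh(k^2)$ calls to a polynomial-time oracle algorithm, the whole construction runs in polynomial time using only oracle queries of size and parameter polynomial in~$k$, as required.
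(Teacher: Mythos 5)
Your proposal is a genuinely different route from the paper's. You treat the decision Turing kernelization entirely as a black box and perform a global self-reduction (vertex sweep, then edge sweep), whereas the paper's proof case-splits on \emph{why} the kernelization answered \yes: when the cause is an affirmative oracle answer on a small subinstance~$G'[A]$ it applies Lemma~\ref{lemma:selfreduce:stableproperty} locally; when the cause is a large torso it invokes the explicit circumference-constructing algorithms of~\cite{BilinskiJMY11} and~\cite{ChenGYZ06} for claw-free and bounded-degree graphs, and falls back to edge-deletion self-reduction on the whole graph only for planar and~$K_{3,t}$-minor-free graphs (which are closed under edge deletion); and for cycles it additionally handles the long-$xy$-path case via Proposition~\ref{proposition:xyseparator:path:gives:cycle}. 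Your approach is more uniform and avoids invoking constructive circumference results, which is appealing.

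However, there is a genuine gap in the \kCycle case on claw-free graphs. Your crucial size claim---``by the time edges are deleted~$H$ has only~$k+1$ vertices and~$O(k^2)$ edges''---is established by the exactly-$k$-length-subpath argument and is correct for \kPath, but it fails for \kCycle. After the vertex sweep for cycles you only know that every remaining vertex lies on every $k$-cycle (so every $k$-cycle is Hamiltonian in~$H$); this does not bound~$|V(H)|$ by a polynomial in~$k$, since a cycle of length~$\geq k$ has no length-exactly-$k$ subcycle. For example, if~$G$ is a long induced cycle, the vertex sweep removes nothing. For planar, bounded-degree, and~$K_{3,t}$-minor-free graphs this is harmless, because those classes are closed under edge deletion and you can run the decision Turing kernelization directly on~$(H-e,k)$, which only ever issues oracle queries of size polynomial in~$k$ regardless of~$|V(H)|$. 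But for claw-free graphs~$H-e$ may leave the class, and your fix---the NP-completeness transformation applied to~$H-e$---produces an instance of size polynomial in~$|V(H)|$, not in~$k$, which violates the oracle's size bound. You would need either the constructive circumference algorithm of~\cite{BilinskiJMY11} (as the paper uses) or some other device to shrink~$H$ before the edge sweep in the claw-free cycle case.
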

\begin{proof} We treat the cases of paths and cycles consecutively, starting with paths.

\paragraph{Constructing paths} Let us first consider the \kPath case. If the Turing kernelization outputs \yes because a \kPath oracle gives a \yes answer on an instance~$(G'[A], k)$ of size polynomial in~$k$, then a straight-forward self-reduction on this small instance using Lemma~\ref{lemma:selfreduce:stableproperty} can be used to construct a solution (the lemma guarantees that the oracle is only queried for instances of size polynomial in~$|V(G'[A])|$). However, the situation is more complicated when the Turing kernelization answers \yes based on Theorem~\ref{theorem:circumference} because there is a large bag in the Tutte decomposition: applying Lemma~\ref{lemma:selfreduce:stableproperty} to the torso of the bag would violate the size bound on the queried instances, since the torso can be arbitrarily large. For triconnected claw-free graphs~\cite[\S5.3]{BilinskiJMY11} and bounded-degree graphs~\cite[\S6]{ChenGYZ06}, polynomial-time algorithms are known that construct a path of length~$n^{\alpha}$ for some positive~$\alpha$, which can be used to construct a $k$-path if the width of the Tutte decomposition exceeds~$k^{1/\alpha}$. No such algorithmic results are known for planar or $K_{3,t}$-minor-free graphs. However, for these graph classes we can construct long paths by exploiting the fact that they are closed under edge deletions, through a self-reduction that calls the Turing kernelization algorithm, as follows. 

Let~$(G,k)$ be a planar or $K_{3,t}$-minor-free instance that contains a $k$-path. Order the edges of~$G$ as~$e_1, \ldots, e_m$ and set~$G_0 := G$. For~$i \in [m]$ we apply the \emph{Turing kernelization} to the instance~$(G_{i-1} - e_i, k)$. If it outputs \yes then we set~$G_i := G_{i-1} - e_i$, otherwise we set~$G_i := G_{i-1}$. After~$m$ calls to the Turing kernelization the resulting graph~$G_m$ contains exactly the edges of a $k$-path. The fact that planar and $K_{3,t}$-minor-free graphs are closed under edge deletions ensures that we may safely apply the Turing kernelization to all graphs~$G_i$. Since the Turing kernelization only queries the oracle for instances of size polynomial in~$k$, we obtain an algorithm that constructs a $k$-path in polynomial time when given access to a \kPath oracle for the restricted graph class. The oracle is only invoked for instances of size and parameter polynomial in~$k$.

\paragraph{Constructing cycles} We move on to the \kCycle case. If the Turing kernelization outputs \yes because the oracle gives this answer on a small instance~$G'[A]$, or because the Tutte decomposition has a bag of large width, then we may proceed similarly as in the \kPath case to construct a solution. However, there is an extra complication for \kCycle since the Turing kernelization may output \yes because its call to Lemma~\ref{lemma:selfreduce:xypath} reports the existence of a long $xy$-path for some minimal separator~$\{x,y\}$. Note that Lemma~\ref{lemma:selfreduce:xypath} is only applied to instances~$(G'[A], k, x, y)$ of size polynomial in~$k$. If the existence of a long $xy$-path is reported, we can therefore use the self-reduction of Lemma~\ref{lemma:selfreduce:stableproperty} to \emph{construct} the edge set of a maximum-length $xy$-path~$\P$ by using oracle queries of size polynomial in~$|V(G'[A])|$ (which is polynomial in~$k$). The proof of Proposition~\ref{proposition:xyseparator:path:gives:cycle} easily yields a polynomial-time algorithm to complete~$\P$ into a $k$-cycle, which handles this last case and completes the proof.
\end{proof}

\section{Multicolored paths in bounded-degree graphs} \label{section:multicolored}

An input for the \kMulticoloredPath problem consists of a graph~$G$, an integer~$k$ and a (generally not proper) coloring~$f \colon V(G) \to [k+1]$ of its vertices. The question is whether there is a path of length~$k$ (which spans~$k+1$ vertices) that contains exactly one vertex of each color. Hermelin et al.~\cite{HermelinKSWW15} showed that \kMulticoloredPath is WK[1]-complete under polynomial-parameter transformations. They conjectured that WK[1]-hard problems do not have polynomial-size Turing kernels. We show that the multicolored problem remains WK[1]-complete even for subcubic graphs.

\begin{theorem} \label{theorem:multicolored}
The \kMulticoloredPath problem on graphs of maximum degree at most three is WK[1]-complete.
\end{theorem}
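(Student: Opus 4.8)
The plan is to establish both halves of the claimed WK[1]-completeness.

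Membership in WK[1] is essentially free. The restriction of \kMulticoloredPath to subcubic graphs, parameterized by~$k$, reduces to general \kMulticoloredPath via the identity map, which is a polynomial-parameter transformation; since general \kMulticoloredPath lies in WK[1] by Hermelin et al.~\cite{HermelinKSWW15} and WK[1] is closed under polynomial-parameter transformations, the subcubic restriction lies in WK[1] as well.

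The substance is WK[1]-hardness, for which I would give a polynomial-parameter transformation from a WK[1]-complete problem to subcubic \kMulticoloredPath. Taking \kMulticoloredPath itself as the source, the idea is a \emph{degree-reduction} construction: given $(G, f \colon V(G) \to [k+1])$, replace every vertex~$v$ of degree $d_v \geq 4$ by a subcubic \emph{splitter gadget}~$H_v$ that has one port of internal degree at most two for each edge incident with~$v$ and in which any two distinct ports are joined by a simple path of controlled length through a controlled set of internal vertices; a ``double binary tree'' --- two balanced binary trees whose leaves are the $d_v$ ports and whose roots are joined by an edge, so that a port--root--root--port walk is simple --- is a convenient choice, and the edges of~$G$ become edges (subdivided where necessary to keep the graph subcubic) between the corresponding ports. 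The coloring~$f'$ of~$G'$ is the device that forces an exact correspondence: the gadget vertices of all~$v$ sharing an original color~$c$ are colored from a common pool~$N_c$ of fresh colors, the pools for distinct~$c$ being pairwise disjoint, so arranged that (i) any behaviour of a solution path other than a canonical port-to-port traversal of each visited gadget --- a shortcut inside a tree, or re-entering a gadget --- repeats a color and is therefore impossible, and (ii) one canonical traversal of a color-$c$ gadget uses each color of~$N_c$ exactly once. One then checks that multicolored $k$-paths of~$G$ correspond, after canonically expanding each vertex into a gadget traversal, exactly to multicolored $k'$-paths of~$G'$.

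The main obstacle, and the point the construction really has to get right, is bounding~$k'$ by a polynomial in the source parameter. Splitting a degree-$d$ vertex intrinsically introduces on the order of~$\log d$ routing vertices, and a solution path meets $\Omega(k)$ gadgets, so a careless construction yields only $k' \in \Oh(k \log |V(G)|)$, which is \emph{not} a valid parameter blow-up from \kMulticoloredPath parameterized by~$k$. I see two ways around this. The first is to do the splitting starting from an instance of bounded degree (or at least bounded \emph{color-degree}): either precede the construction by a polynomial-parameter transformation that caps the degree at a polynomial in~$k$, or reduce instead from a sparse WK[1]-complete problem --- for example a bounded-occurrence satisfiability problem --- and build the multicolored-path instance directly from constant-size gadgets; in either case each gadget contributes only $\mathrm{poly}(k)$ new vertices and colors, so $k' = \mathrm{poly}(k)$. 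The second is to reduce from a WK[1]-complete base problem whose parameter already absorbs a logarithmic factor in the instance size, making the $\log|V(G)|$ overhead of the splitting affordable. Granting a suitable source, the rest is routine bookkeeping: the gadgets and edge subdivisions keep every degree at most three (ports keep internal degree two, leaving one edge for the original incident edge, and internal tree nodes have degree three); the union of all fresh color pools has size $\mathrm{poly}(k)$; and the forward and backward directions of the path correspondence go through as sketched, completing the transformation and hence the hardness proof.
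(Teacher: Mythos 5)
Your membership argument is fine and matches the paper. For hardness you correctly diagnose the central obstacle --- splitting a degree-$d$ vertex into a subcubic routing gadget costs $\Theta(\log d)$ fresh vertices and colors, and a solution path visits $\Omega(k)$ of them, so a naive reduction from \kMulticoloredPath blows the parameter up to $\Oh(k \log |V(G)|)$, which is not a polynomial-parameter transformation. Your second proposed workaround is exactly the paper's route, but your proposal stops precisely where the real work starts: you never name the source problem nor say why the $\log$ factor is affordable. The paper reduces from \nExactSetCover, parameterized by the universe size~$n$, and the key observation (absent from your sketch) is that one may assume $\log|\F| \leq n$: if~$|\F| \geq 2^n$, the standard $\Oh(2^n\,(n+|\F|)^{\Oh(1)})$ dynamic program already runs in time polynomial in the input, so that case is dispatched before the reduction. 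With $\log|\F| \leq n$ granted, complete binary trees with~$|\F|$ leaves have depth~$\Oh(n)$, and the resulting parameter is~$\Oh(n^2)$.

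The construction is also not the vertex-splitting one you describe. The paper does not start from a multicolored-path instance and replace high-degree vertices with splitter gadgets; it builds a fresh subcubic colored graph directly from the set system: $2(n-1)$ complete binary trees $O_1, I_2, O_2, \ldots, I_n$ each with $|\F|$ leaves, colored one fresh color per (tree, level) pair plus one color per universe element; each set~$F_j$ is encoded as a short colored path dangling between the $j$th leaf of~$O_i$ and the $j$th leaf of~$I_{i+1}$, and a multicolored path is forced to traverse root-to-leaf-to-set-path-to-leaf-to-root through every tree in order, selecting a set cover. So the binary trees in the paper are routing structures encoding a choice of set, not degree-reduction gadgets applied to an existing graph, and the coloring discipline needed to force a single canonical traversal is supplied for free by the (tree, level) color scheme rather than by an ad hoc pool of colors per original vertex. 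Your first workaround --- a preprocessing PPT that caps the degree of a \kMulticoloredPath instance at $\mathrm{poly}(k)$ --- is not clearly available and the paper does not rely on it. As stated, the proposal identifies the right difficulty and one correct escape route but leaves the essential steps (choice of \nExactSetCover, the $\log|\F| \leq n$ reduction, and the concrete gadget/coloring) as gaps; these are the substance of the proof, not bookkeeping.
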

\begin{proof}
Membership in WK[1] is implied by the fact that the general version of the problem (without the degree bound) is contained in WK[1], as shown by Hermelin et al.~\cite[Lemma 20]{HermelinKSWW15}. To prove hardness for WK[1], we give a polynomial-parameter transformation~\cite[Definition 3]{HermelinKSWW15} from the WK[1]-hard~\cite[Theorem 5]{HermelinKSWW15} \nExactSetCover problem to \kMulticoloredPath on bounded degree graphs.

Consider an input~$(\F,U)$ of \nExactSetCover with consists of a size-$m$ set family~$\F$ over a finite universe~$U$ of size~$n$. The question is whether there is a subfamily~$\F' \subseteq \F$ such that each element of~$U$ is contained in exactly one set of~$\F'$. By duplicating some sets in~$\F$, which does not increase the instance size by more than two, we may assume that~$|\F| = 2^r - 1$ for an integer~$r$.

If~$m \geq 2^n$ then the straight-forward (cf.~\cite[Theorem 6.1]{CyganFKLMPPS15}) dynamic program for \nExactSetCover over subsets of the universe, which runs in~$\Oh(2^n (n+m)^{\Oh(1)}) \subseteq \Oh(m \cdot (n+m)^{\Oh(1)})$ time, takes time polynomial in~$n+m$ and therefore in the input size. We may apply it and output a constant-size instance with the same answer. In the remainder we may therefore assume that~$\log m \leq n$ which implies that we may afford to increase the parameter by a polynomial in~$\log m$.

The instance of \kMulticoloredPath that we construct consists of~$2(n-1)$ complete binary trees~$O_1, I_2, O_2, I_3, \ldots, O_{n-1}, I_n$ with~$2^r$ leaves each. We color the vertices such that all vertices that belong to a common level of a common tree have the same color, while vertices of different trees or on different levels have different colors. Since a complete binary tree with~$2^r$ leaves consists of~$r + 1$ levels, this requires~$2(n-1)(r+1)$ different colors. We also create a unique color~$c(u_1), \ldots, c(u_n)$ for each element of~$U$. We connect the root of tree~$I_i$ to the root of tree~$O_i$ for all~$2 \leq i \leq n-1$. We encode the sets of the instance as follows. For each~$j \in [2^r - 1]$, for each~$i \in [n-1]$, we do the following. Let~$F_j = \{u_{i_1}, \ldots, u_{i_\ell}\}$ be the $j$th set in~$\F$. Create a path on~$\ell$ vertices and give the $a$th vertex on this path color~$c(u_{i_a})$. We make the first vertex of the path adjacent to the $j$th leaf of~$O_i$ and the $j$th leaf of~$I_{i+1}$. Additionally, we make the~$2^r$th leaf of~$O_i$ adjacent to the $2^r$th leaf of~$I_i$. After doing this for each choice of~$i$ and~$j$ we output the resulting colored graph with the parameter~$k' := n + (2(n-1)(r+1)) - 1 \in \Oh(n \cdot r) \in \Oh(n \log m) \in \Oh(n^2)$. It is easy to see that the construction can be performed in polynomial time and that the parameter~$k'$ is suitably bounded for a polynomial-parameter transformation. The maximum degree of the resulting instance is three since it is obtained by gluing paths to the leaves of binary trees.

It remains to prove that~$(\F,U)$ has an exact set cover if and only if~$G$ has a multicolored $k$-path. In one direction, suppose that there is an exact set cover with~$\ell$ sets~$F_1, \ldots, F_\ell$; observe that~$\ell \leq n$ since each set contains at least one element and no element is allowed to be covered twice. Construct a multicolored path starting from the root of~$O_1$, moving down the tree to the leaf corresponding to set~$F_1$, traverse the path to the corresponding leaf of~$I_2$, move up the tree to the root of~$I_2$, traverse the edge to the root of~$O_2$, move down the tree to the leaf that corresponds to~$F_2$, and so on. After the~$\ell$ sets have all been used, traverse through the remaining trees to the root of~$I_n$ by using the direct connection between the~$2^r$th leaves of the relevant trees. If the sets cover~$U$ exactly, then, since the elements on the used subpaths correspond to the colors of the universe elements, while one vertex of each level of each binary tree is used, the resulting $k'$-path is multicolored.

The other direction can be proven similarly. Since each color can be used only once by a path, the linear structure of the instance forces a multicolored $k'$-path to start at the root of~$O_1$, traverse down to a leaf, and use a connection that either corresponds to a set of~$\F$ or to skipping a set. To use all the available colors once, the path has to traverse subpaths corresponding to sets that cover the universe exactly. This concludes the proof of Theorem~\ref{theorem:kpath}.
\end{proof}

The theorem shows that the \kMulticoloredPath problem remains WK[1]-hard on bounded-degree graphs. However, Theorem~\ref{theorem:kpath} shows that the uncolored \kPath problem admits a polynomial Turing kernel on bounded-degree graphs. This indicates that the colored problem may be significantly harder to preprocess than the uncolored version.

\section{Conclusion} \label{section:conclusion}
We presented polynomial-size Turing kernels for \kPath and \kCycle on restricted graph families using the \emph{Decompose-Query-Reduce} framework, thereby answering an open problem posed by Lokshtanov~\cite{Lokshtanov09} and Misra et al.~\cite{MisraRS11}. Our results form the second~\cite{ThomasseTV14} example of adaptive Turing kernelization of polynomial size.

The question remains whether \kPath admits a polynomial-size Turing kernel in general graphs. Theorem~\ref{theorem:multicolored} indicates that the WK[1]-hardness of \kMulticoloredPath~\cite[Theorem 7]{HermelinKSWW15} may not be relevant for the \kPath problem, suggesting the possibility of a positive answer. Significant new ideas will be needed to solve this case in the positive. The Tutte decomposition employed here is of little use in general graphs, since the elementary building blocks of the decomposition (triconnected graphs) do not yield anything useful. While triconnected \emph{planar} graphs have circumference~$\Oh(n^{\alpha})$ for a positive constant~$\alpha$, the circumference of a general triconnected graph may be as low as~$\Oh(\log n)$, which is achieved by considering the join of a triangle with a complete binary tree. Different decomposition methods may be used in general graphs; for example, in linear time one can either find a $k$-path or establish that the treedepth (and therefore treewidth) is at most~$k$, which gives a decomposition of the graph as an embedding into the closure of a rooted tree of height~$k$ (cf.~\cite[Theorem 8.2]{DowneyF99}). However, since the adhesion of the corresponding tree decomposition can be linear in~$k$, this does not seem as useful for identifying irrelevant parts of the input. Analyzing \kPath on chordal graphs may be an intermediate step: the example above shows that even for triconnected chordal graphs the circumference may be~$\Oh(\log n)$.

Our results also prompt the investigation of other subgraph and minor testing problems. For example, does the problem of testing whether a planar graph~$G$ has a subgraph isomorphic to~$H$ admit a polynomial Turing kernel, parameterized by~$|H|$? The simplest unresolved case of this problem seems to be the \textsc{Exact $k$-Cycle} problem of finding a cycle of length \emph{exactly}, rather than at least,~$k$. The present approach fails on this problem since it is already unclear how to deal with triconnected planar graphs. Similar questions can be asked for the problem of finding a graph~$H$ as a minor in a planar graph~$G$, parameterized by~$|H|$. To further understand the nature of Turing kernelization, one might also investigate whether the adaptive Turing kernel given here can be transformed into a non-adaptive Turing kernel, whose queries only depend on the input and not on the answers to earlier queries. Since the queries in a non-adaptive Turing kernel can be executed in parallel, this might offer practical advantages.

\textbf{Acknowledgments}. We are grateful to Micha\l \ Pilipczuk for suggesting Theorem~\ref{theorem:multicolored} and D\'{a}niel Marx for suggesting its current easy proof.

\bibliography{../Paper}
\bibliographystyle{abbrvurl}

\newpage
\appendix

\section{Proof of Theorem \ref{theorem:tutte}} \label{appendix:tutte:decomposition}

\begin{untheorem}
For every graph~$G$ there is a tree decomposition~$(T,\X)$ of adhesion at most two, called a \emph{Tutte decomposition}, such that:
\begin{enumerate}
	\item for each node~$i \in V(T)$, the graph~$\torso(G, \X(i))$ is a triconnected topological minor of~$G$, and
	\item for each edge~$\{i,j\}$ of~$T$ the set~$\X(i) \cap \X(j)$ is a minimal separator in~$G$ or the empty set.
\end{enumerate}
\end{untheorem}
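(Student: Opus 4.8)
The plan is to prove the statement by induction on $|V(G)|$, constructing the decomposition by recursively splitting $G$ along separators of size at most two. Two cases are immediate. If $G$ is disconnected, take a Tutte decomposition of each connected component and join their decomposition trees by arbitrary edges: since distinct components share no vertices, every new adhesion is empty, all tree-decomposition axioms are inherited, and no torso changes. If $G$ is triconnected (which, under the paper's convention, covers all graphs on at most two vertices), take the one-node decomposition with the single bag $V(G)$; its torso equals $G$, which is triconnected and trivially a topological minor of~$G$, and there is no edge of~$T$ to check.

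Otherwise $G$ is connected but not triconnected, so either $G$ has a cut vertex or $G$ is biconnected with a separation pair. In the first case, let $x$ be a cut vertex, let $C_1,\dots,C_m$ ($m\ge 2$) be the components of $G-x$, and set $G_\ell:=G[C_\ell\cup\{x\}]$. Each $G_\ell$ is connected with fewer vertices, so induction yields a Tutte decomposition $(T_\ell,\X_\ell)$ in which some bag contains~$x$; I assemble $(T,\X)$ by adding a fresh node with bag $\{x\}$ and joining it to one $x$-containing bag in each $T_\ell$. In the second case, let $\{x,y\}$ be a separation pair, let $C_1,\dots,C_m$ ($m\ge2$) be the components of $G-\{x,y\}$, and let $G_\ell$ be $G[C_\ell\cup\{x,y\}]$ together with the \emph{virtual edge} $\{x,y\}$. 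Biconnectivity forces every $C_\ell$ to be adjacent to both $x$ and~$y$, so $G_\ell$ is again connected with fewer vertices, and its decomposition has a bag containing $\{x,y\}$ since that edge must be covered; I assemble $(T,\X)$ by adding a fresh node with bag $\{x,y\}$ joined to one such bag in each $T_\ell$. In both cases the tree-decomposition axioms are routine; the only non-obvious point is that the bags containing~$x$ (resp. $x$ and $y$) still induce a subtree, which holds because they form the union of the corresponding subtrees of the $T_\ell$'s through the new node.

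The content of the argument lies in three consistency claims. \emph{(i) Torso invariance:} for a vertex set $S$ contained in one split part, $\torso(G,S)=\torso(G_\ell,S)$. A path in~$G$ witnessing an edge of $\torso(G,S)$ either stays inside the split part — hence is a path of~$G_\ell$ — or leaves it; but leaving requires passing through a boundary vertex, and since a path visits each of the at most two boundary vertices at most once, in the order-two case it makes exactly one excursion, from~$x$ to~$y$, which can be replaced by the virtual edge without introducing internal vertices in~$S$ (in the order-one case no excursion is possible). This is precisely why the virtual edge is needed: it makes each recursively computed torso equal to, hence as triconnected as, the torso required in~$G$. \emph{(ii) Topological minor:} each $G_\ell$ is a topological minor of~$G$ — either $\{x,y\}\in E(G)$ and $G_\ell$ is already an induced subgraph, or one takes $G[C_\ell\cup\{x,y\}]$ plus a shortest $x$--$y$ path through another component and suppresses its internal degree-two vertices, obtaining exactly $G_\ell$ with no multi-edge; transitivity then propagates this to all torsos. \emph{(iii) Minimal-separator preservation:} every adhesion of $(T_\ell,\X_\ell)$, which by induction is a minimal separator of~$G_\ell$ or empty, is a minimal separator of~$G$ or empty, because any component $D$ of $G_\ell-M$ avoiding $\{x,y\}$ satisfies $N_G(D)=N_{G_\ell}(D)\subseteq M$ and stays separated in $G-M$, while the component containing $x$ and~$y$ reconnects to the rest of~$G$ through the other $C_{\ell'}$'s; minimality survives since for any single $z\in M$ those same external paths keep $G-z$ connected. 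The new adhesions $\{x\}$ and $\{x,y\}$ are minimal separators of~$G$ directly (a cut vertex, resp. a separation pair of a biconnected graph, with every proper subset leaving $G$ connected). Induction on $|V(G)|$ then finishes the proof.

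The main obstacle I anticipate is the bookkeeping in claim~(iii): one must re-run the separation argument for each way $M$ can meet $\{x,y\}$ and verify that both ``being a separator'' and ``minimality'' transfer from the split part back to~$G$. Handling the degenerate cases cleanly — the edge $\{x,y\}$ already present in~$G$, $S$ or $M$ intersecting $\{x,y\}$, empty parts, and the convention permitting triconnected graphs on at most two vertices — also needs some care, but the recursive scheme and claims~(i)--(ii) are standard.
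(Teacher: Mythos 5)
Your overall plan is the same as the paper's: induction on $|V(G)|$ with the four cases (triconnected base case, disconnected, cut vertex, separation pair), virtual edges in the separation-pair case, torso invariance via a ``one excursion that can be replaced by the virtual edge'' argument, and topological-minor-hood of each $G_\ell$ by suppressing an external $xy$-path. The only structural deviation is that you insert a fresh node with bag $\{x\}$ (resp.\ $\{x,y\}$) and attach it to each subtree, whereas the paper joins existing bags directly; under the paper's convention that $K_1$ and $K_2$ are triconnected this is a harmless cosmetic change.

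The obstacle you flag in claim~(iii) is not mere bookkeeping: it is a real gap, and your concluding sentence ``for any single $z\in M$ those same external paths keep $G-z$ connected'' is false exactly in the case you were worried about. In the cut-vertex case let $x$ be the cut vertex and suppose the recursively obtained decomposition of $G_\ell$ has a size-two adhesion $M$ with $x\in M$. Then $M$ is indeed a minimal separator of $G_\ell$, but $\{x\}\subsetneq M$ already disconnects $G$, so $M$ fails the paper's definition of minimal separator in $G$. For a concrete instance where this cannot be avoided by a different choice of decomposition, take $G$ to be two copies of $K_4$ glued along an edge $\{x,w\}$, with one extra pendant vertex attached to $x$: then $x$ is a cut vertex, $G_\ell$ is the biconnected union of the two $K_4$'s, its \emph{only} separation pair is $\{x,w\}$, and therefore $\{x,w\}$ is a forced adhesion even though $G-x$ is already disconnected. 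With the paper's strict definition (``$G-S'$ connected for all $S'\subsetneq S$'') property~2 of the theorem cannot hold here; the argument and the statement go through if one uses the more common notion of minimal separator (a minimal $a,b$-separator for \emph{some} pair $a,b$), and under that reading your claim~(iii) is repairable by taking $a,b$ to be the pair witnessing minimality inside $G_\ell$ and observing that $G-\{x\}$ (resp.\ $G-\{w\}$) still connects $a$ to $b$ because $C_\ell$ stays connected. You should either switch to that weaker notion explicitly, or prove you can always choose the inner decomposition so that the cut vertex never lands in a size-two adhesion — the example above shows the latter is not generally possible. For what it is worth, the paper's own proof is silent on the old adhesions in its cut-vertex case and seems to rely on the same unjustified transfer, so you identified a genuine soft spot; but your proposal does not close it.
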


\begin{proof}
The proof uses induction on the order of~$G$ and a case distinction on the connectivity of~$G$.

\paragraph{Triconnected} The base case of the induction is when~$G$ is a triconnected graph. Note that, by our definition, the single-vertex graph is triconnected. The trivial tree decomposition~$(T,\X)$ where~$T$ consists of a single node~$i$ and~$\X(i) = V(G)$ is a Tutte decomposition in this case. The adhesion is zero since~$T$ has no edges, trivially satisfying~\ref{tutte:minseparators}. The graph~$\torso(G, \X(i))$ coincides with~$G$, which is triconnected by assumption and a topological minor of~$G$ by definition.

For the induction step, we assume that the statement is true for all graphs of order less than~$|V(G)|$ and proceed by a case distinction on the connectivity.

\paragraph{Disconnected} If~$G$ is disconnected, then let~$C_1, \ldots, C_t$ be its connected components. Since each component has fewer vertices than~$G$ itself, by induction there are Tutte decompositions~$(T_1, \X_1), \ldots, (T_t, \X_t)$ of each connected component. Obtain a Tutte decomposition~$(T,\X)$ of~$G$ as follows. The tree~$T$ is obtained from the forest~$T_1 \cup \ldots \cup T_t$ by adding arbitrary edges to make the forest connected. Each node~$i$ of~$T$ belongs to a unique tree~$T_j$; the associated bag~$\X(i)$ is simply~$\X_j(i)$.

We claim that~$\torso(C_j, \X_j(i)) = \torso(G, \X(i))$ for all~$i,j$. This follows from the fact that~$C_j[\X_j(i)] = G[\X(i)]$, that all paths in~$C_j$ connecting vertices~$u,v \in \X_j(i)$ with interior vertices that avoid~$\X_j(i)$ also exist in~$G$ (since~$C_j$ is a subgraph of~$G$), and that no such paths exist in~$G$ that do not exist in~$C_j$, since no vertex of~$G - C_j$ can be reached from a vertex in~$C_j$ since it belongs to a different connected component. Hence the torso of each bag of~$(T,\X)$ is a triconnected topological minor of a connected component of~$G$ (by induction), and therefore of~$G$ itself. It is easy to verify that~$T$ is a tree decomposition of adhesion at most two. Each nonempty intersection of the bags of adjacent nodes in the resulting decomposition was also an intersection in one of the Tutte decompositions for the connected components; hence the intersection forms a minimal separator in one of the connected components by induction.

\paragraph{Cut vertex} Assume that~$G$ is connected but contains a cut vertex~$v$. Let~$C_1, \ldots, C_t$ be the connected components of~$G - v$, and for each~$i \in [t]$ let~$C'_i := G[V(C_i) \cup \{v\}]$. Each edge of~$G$ is contained in exactly one graph~$C'_i$. By induction there are Tutte decompositions~$(T_1, \X_1), \ldots, (T_t, \X_t)$ of the graphs~$C'_1, \ldots, C'_t$. Since each graph~$C'_i$ contains vertex~$v$, each tree decomposition~$T_i$ has a node~$n_i$ such that~$v \in \X_i(n_i)$. The tree~$T$ of the Tutte decomposition~$(T,\X)$ is obtained from the forest~$T_1 \cup \ldots \cup T_t$ by adding an edge between~$n_i$ and~$n_1$ for each~$i \geq 2$; the bags of~$T$ correspond to the bags of the individual decompositions~$T_i$ as before. 

To see that~$\torso(G, \X(i))$ is a triconnected topological minor of~$G$ for each~$i \in V(T)$, we prove that~$\torso(C'_j, \X_j(i)) = \torso(G, \X(i))$ for each~$j \in [t]$ and~$i \in V(T_j)$. By transitivity of topological minors this suffices to prove the claim, since each~$\torso(C'_j, \X_j(i))$ is a triconnected topological minor of~$C'_j$ by induction, while~$C'_j$ is a subgraph of~$G$. To see that~$\torso(C'_j, \X_j(i)) = \torso(G, \X(i))$, observe the following three facts. 
\begin{enumerate}
	\item $C'_j[\X_j(i)] = G[\X(i)]$ by our choice of~$\X$.
	\item Each path in~$C'_j$ whose interior vertices avoid~$\X_j(i)$ and that connects two vertices in~$\X_j(i)$, also exists in~$G$ (since~$C'_j$ is a subgraph of~$G$).
	\item All paths in~$G$ connecting two vertices in~$\X_j(i)$ whose interior vertices avoid~$\X_j(i)$ lie entirely within~$C'_j$. This follows from the fact that~$v$ is a cut vertex, which implies that the vertices of~$G - C'_j$ are not adjacent to any vertex of~$C'_j$ except~$v$. 
\end{enumerate}
Hence each torso of~$(T, \X)$ equals a torso of a Tutte decomposition of one of the components~$C'_i$, and is therefore a triconnected topological minor of~$G$ by induction. Again it is easy to verify that the resulting structure is a tree decomposition of adhesion at most two. The only new edges introduced into the decomposition tree are those to connect the various trees together; the intersection of such bags is the cut vertex~$v$ which is a minimal separator.

\paragraph{Separation pair} Finally, assume that~$G$ is connected and contains no cut vertices, but contains a separation pair~$\{u,v\}$. Let~$C_1, \ldots, C_t$ be the connected components of~$G - \{u,v\}$. As~$G$ contains no cut vertices we know that~$\{u,v\}$ is a minimal separator. For each~$i \in [t]$ let~$C'_i$ be the graph obtained from~$G[V(C_i) \cup \{u,v\}]$ by adding the edge~$\{u,v\}$ if it did not exist already.

\begin{numberedclaim} \label{claim:tutte:path:through:ci}
For each~$i \in [t]$ there is a $uv$-path in~$G$ whose internal vertices all belong to~$C'_i$.
\end{numberedclaim}
\begin{claimproof}
Since~$G$ is connected, component~$C_i$ is adjacent to at least one of~$u$ and~$v$. If~$C_i$ is not adjacent to both of them, then one of~$\{u,v\}$ is a cut vertex. As we are in the case that~$G$ has no cut vertex, we therefore know that~$C_i$ contains both a neighbor~$u'$ of~$u$ and a neighbor~$v'$ of~$v$. There is a $u'v'$-path in~$C_i$, since~$C_i$ is connected. Together with~$u$ and~$v$ this gives the desired path.
\end{claimproof}

Since~$\{u,v\}$ is a separation pair, there are at least two components ($t \geq 2$), prompting the following observation.

\begin{observation} \label{observation:tutte:path:avoiding:ci}
For each~$i \in [t]$ there is a $uv$-path in~$G$ whose internal vertices avoid~$C'_i$.
\end{observation}

\begin{numberedclaim} \label{claim:tutte:path:shortcut}
Let~$p, q \in V(C'_i)$ for some~$i \in [t]$ with~$p \neq q$, and let~$\P$ be a $pq$-path in~$G$. Then there is a $pq$-path~$\P'$ in~$C'_i$ such that~$V(\P') \subseteq V(\P)$.
\end{numberedclaim}
\begin{claimproof}
If~$V(\P) \subseteq V(C'_i)$ then the claim is trivial, so assume that~$\P$ contains a vertex~$r \not \in V(C'_i)$. By the separation property of~$\{u,v\}$, all paths from~$p$ or~$q$ to~$r$ pass through~$u$ or~$v$ (even if~$\{p,q\} \cap \{u,v\} \neq \emptyset$). It follows that when traversing~$\P$ from~$p$ to~$r$ we pass through one of~$\{u,v\}$, and when traversing~$\P$ from~$r$ to~$q$ we pass through the other. Hence~$\P$ contains a $uv$-subpath, and all vertices not on this~$uv$-subpath must belong to~$C'_i$ since the separator~$\{u,v\}$ has size two. As there is a direct edge between~$u$ and~$v$ in~$C'_i$, we can replace the $uv$-subpath by the direct edge to obtain a $pq$-path~$\P'$ in~$C'_i$ as desired.
\end{claimproof}

Using these claims we proceed with the proof. Let~$(T_1, \X_1), \ldots, (T_t, \X_t)$ be Tutte decompositions of~$C'_1, \ldots, C'_t$, which exist by induction. Since each graph~$C'_i$ contains the edge~$\{u,v\}$, by property~\ref{td:covere} of Definition~\ref{def:treedec} each decomposition~$(T_i, \X_i)$ has a node~$n_i$ whose bag~$\X_i(n_i)$ contains both~$u$ and~$v$. As in the previous case, the tree~$T$ of the Tutte decomposition~$(T,\X)$ of~$G$ is obtained from~$T_1 \cup \ldots \cup T_t$ by adding the edges~$\{n_1, n_i\}$ for all~$i \geq 2$. These edges do not increase the adhesion of the decomposition beyond two since the bags corresponding to their endpoints have an intersection of size exactly two consisting of~$u$ and~$v$; these are the only vertices occurring in more than one graph~$C'_i$. 

\begin{numberedclaim} \label{claim:tutte:cprime:topminor}
For each~$i \in [t]$ the graph~$C'_i$ is a topological minor of~$G$.
\end{numberedclaim}
\begin{claimproof}
By Observation~\ref{observation:tutte:path:avoiding:ci} there is a $uv$-path in~$G$ whose internal vertices avoid~$C'_i$. Since we can shortcut this path wherever possible without increasing the set of visited vertices, this implies that there is also an \emph{induced} $uv$-path in~$G$ whose internal vertices avoid~$C'_i$, say~$\P$. We can obtain~$C'_i$ from the graph~$G[V(C'_i) \cup V(\P)]$ as follows: for each interior vertex of~$\P$, remove all its incident edges except those to its predecessor and successor on~$\P$. Afterward, repeatedly replace the resulting degree-2 interior vertices of~$\P$ by direct edges, thereby creating a direct edge between~$u$ and~$v$. Hence~$C'_i$ can be built by the legal operations for taking topological minors.
\end{claimproof}

\begin{numberedclaim} \label{claim:tutte:gsubgraphprime}
For each~$j \in [t]$ and~$i \in V(T_j)$ the graph~$\torso(G, \X(i))$ is a subgraph of~$\torso(C'_j, \X_j(i))$.
\end{numberedclaim}
\begin{claimproof}
Recall that~$\X(i) = \X_j(i)$. Consider an edge~$\{p,q\}$ of $\torso(G, \X(i))$; we prove the edge is also contained in~$\torso(C'_j, \X(i))$. If~$\{p,q\} \in E(C'_j)$ then this is trivial. If~$\{p,q\} \not \in E(C'_j)$ then in particular we know that~$\{p,q\} \not \in E(G)$. By definition of torso there must be a $pq$-path~$\P$ in~$G$ whose internal vertices avoid~$\X(i)$. By Claim~\ref{claim:tutte:path:shortcut} this implies the existence of a $pq$-path~$\P'$ in~$C'_j$ on a subset of the vertices of~$\P$, implying that the internal vertices of~$\P'$ avoid~$\X(i)$. Hence~$\{p,q\}$ is an edge of~$\torso(C'_j, \X_j(i))$.
\end{claimproof}

\begin{numberedclaim} \label{claim:tutte:primesubgraphg}
For each~$j \in [t]$ and~$i \in V(T_j)$ the graph~$\torso(C'_j, \X_j(i))$ is a subgraph of~$\torso(G, \X(i))$.
\end{numberedclaim}
\begin{claimproof}
Consider an edge~$\{p,q\}$ of~$\torso(C'_j, \X(i))$; we prove it is also contained in $\torso(G, \X(i))$. If~$\{p,q\}$ is an edge of~$C'_j$ different from~$\{u,v\}$, then by construction of~$C'_j$ this edge is also contained in~$G$ and therefore in the torso. If~$\{p,q\} = \{u,v\}$, then by Observation~\ref{observation:tutte:path:avoiding:ci} there is a $uv$-path in~$G$ whose internal vertices avoid~$C'_j$ and therefore avoid~$\X(i)$. This ensures~$\{p,q\} = \{u,v\}$ is an edge of~$\torso(G, \X(i))$. Hence all edges of~$C'_j$ are present in~$\torso(G, \X(i))$. 

All other edges~$\{p,q\}$ of~$\torso(C'_j, \X(i))$ were added on account of a $pq$-path~$\P$ through~$C'_j$ whose internal vertices avoid~$\X(i)$. If such a path~$\P$ does not use the edge~$\{u,v\}$ then it is also a path in~$G$; otherwise we can replace the direct edge~$\{u,v\}$ on~$\P$ by a $uv$-path whose internal vertices avoid~$C'_j$ and therefore~$\X(i)$, by Observation~\ref{observation:tutte:path:avoiding:ci}. In both cases we conclude there is a $pq$-path~$\P$ in~$G$ whose internal vertices avoid~$\X(i)$, proving that~$\{p,q\}$ is an edge of~$\torso(G, \X(i))$.
\end{claimproof}

Using the three claims we can prove that each graph~$\torso(G, \X(i))$ is a triconnected topological minor of~$G$. Claims~\ref{claim:tutte:gsubgraphprime} and~\ref{claim:tutte:primesubgraphg} show that each torso of~$(T,\X)$ with respect to~$G$ is equal to a torso of~$(T_j, \X_j)$ with respect to~$C'_j$. By induction, the graphs~$\torso(C'_j, \X_j(i))$ are topological minors of~$C'_j$, which itself is a topological minor of~$G$ by Claim~\ref{claim:tutte:cprime:topminor}. By transitivity of topological minors we therefore establish Property~\ref{tutte:torsos}. It remains to establish Property~\ref{tutte:minseparators}.

\begin{numberedclaim} \label{claim:tutte:minimalseparator}
For each edge~$\{a,b\} \in E(T)$, the set~$S := \X(a) \cap \X(b)$ is a minimal separator in~$G$.
\end{numberedclaim}
\begin{claimproof}
For edges~$\{a,b\}$ of~$T$ that were added to connect the different decomposition trees together, note that~$\{u,v\} = \X(a) \cap \X(b)$ is the separation pair that defines this case. It is a minimal separator by the assumption that~$G$ does not have a cut vertex. In the remainder we consider an edge~$\{a,b\}$ of~$T$ that originates from one of the trees~$T_i$ that were obtained by induction, implying that~$S$ is a minimal separator in some~$C'_i$ for~$i \in [t]$. As the adhesion is at most two,~$|S| \leq 2$. Assume for a contradiction that~$S$ is not a separator in~$G$. Let~$p,q \in V(C'_i)$ be two vertices that lie in different connected components of~$C'_i - S$, but which are connected by a path~$\P$ in~$G - S$. By Claim~\ref{claim:tutte:path:shortcut}, the part of~$\P$ outside~$C'_i$ forms a~$uv$-path that can be replaced by the direct edge~$\{u,v\}$ in~$C'_i$ to obtain a $pq$-path entirely within~$C'_i$ while avoiding~$S$. But then~$C'_i - S$ contains a $pq$-path; a contradiction. Hence~$S$ is a separator in~$G$. To see that~$S$ is a \emph{minimal} separator, observe that~$G$ does not have any cut vertices by the case distinction. Hence no single vertex is a separator in~$G$. Since~$|S| \leq 2$ by the adhesion bound, no strict subset of~$S$ is a separator, implying minimality.
\end{claimproof}

Claim~\ref{claim:tutte:minimalseparator} establishes Property~\ref{tutte:minseparators} and concludes the case of the induction step that~$G$ has a separation pair. Since a connected graph without cut vertices or separation pairs is triconnected, any graph that is not covered by one of these cases is triconnected. It is therefore covered by the base case, which concludes the proof of Theorem~\ref{theorem:tutte}.
\end{proof}

\end{document}